\documentclass[sageh,Crown]{sagej}

\usepackage{moreverb,url}
\usepackage{xparse}
\newcommand{\bnm}{\begin{newmath}}
\newcommand{\enm}{\end{newmath}}

\newcommand{\bea}{\begin{eqnarray*}}%
\newcommand{\eea}{\end{eqnarray*}}%

\newcommand{\bne}{\begin{newequation}}
\newcommand{\ene}{\end{newequation}}

\newcommand{\bal}{\begin{newalign}}
\newcommand{\eal}{\end{newalign}}

\newenvironment{newalign}{\begin{align}%
\setlength{\abovedisplayskip}{4pt}%
\setlength{\belowdisplayskip}{4pt}%
\setlength{\abovedisplayshortskip}{6pt}%
\setlength{\belowdisplayshortskip}{6pt} }{\end{align}}

\newenvironment{newmath}{\begin{displaymath}%
\setlength{\abovedisplayskip}{4pt}%
\setlength{\belowdisplayskip}{4pt}%
\setlength{\abovedisplayshortskip}{6pt}%
\setlength{\belowdisplayshortskip}{6pt} }{\end{displaymath}}

\newenvironment{newequation}{\begin{equation}%
\setlength{\abovedisplayskip}{4pt}%
\setlength{\belowdisplayskip}{4pt}%
\setlength{\abovedisplayshortskip}{6pt}%
\setlength{\belowdisplayshortskip}{6pt} }{\end{equation}}

\newcounter{ctr}

\newcounter{mytable}
\def\mytable{\begin{centering}\refstepcounter{mytable}}
\def\endmytable{\end{centering}}

\newcounter{myfig}
\def\myfig{\begin{centering}\refstepcounter{myfig}}
\def\endmyfig{\end{centering}}

\newlength{\saveparindent}
\newlength{\saveparskip}
\newcommand{\E}{{\rm I\kern-.3em E}}

\def\E{{\bf E}}

\def\0{{\bf 0}}
\def\1{{\bf 1}}

\def\ll{\llbracket}
\def\rr{\rrbracket}

\renewcommand{\eqref}[1]{\mbox{Equation~(\ref{#1})}}

\def \part {part}

\renewcommand{\paragraph}[1]{\vspace*{6pt}\noindent\textbf{#1}\;}

\def \blackslug{\hbox{\hskip 1pt \vrule width 4pt height 8pt
    depth 1.5pt \hskip 1pt}}
\def \qed{\quad\blackslug\lower 8.5pt\null\par}

\newcommand{\rev}[1]{{#1}}

\newcommand\ignore[1]{}

\renewcommand{\update}{\textsf{Insert}}

\newcounter{rcnote}[section]

\newcounter{mrnote}[section]

\newcounter{fknote}[section]

\newcounter{anote}[section]

\DeclareMathSymbol{\mlq}{\mathord}{operators}{``}
\DeclareMathSymbol{\mrq}{\mathord}{operators}{`'}

\newcommand{\rhf}[2]{R_{f, \gamma}}

\DeclareDocumentCommand{\edist}{o o}{
  \ensuremath{
    \IfNoValueTF{#1}{{d}}{{\sf d}(#1,#2)}
  }
}

\newcommand{\olrk}[1]{\ifx\nursymbol#1\else\!\!\mskip4.5mu plus 0.5mu\left(\mskip0.5mu plus0.5mu #1\mskip1.5mu plus0.5mu \right)\fi}

\providecommand{\add}{\funcfont{Add}}

\NewDocumentCommand{\indseq}{ O{1} O{r} }{{#1}\ldots {#2}}

\usepackage{enumitem}
\usepackage{booktabs}
\usepackage{tabularx}
\usepackage{xcolor}
\usepackage{caption}
\usepackage{colortbl}
\usepackage{array}
\usepackage{adjustbox}
\usepackage{tikz}
\usepackage{gensymb}
\usepackage{amsmath}
\usepackage{amssymb}
\usepackage{amsthm}
\usepackage{dsfont}
\usepackage{mathrsfs}
\usepackage{stmaryrd}
\usepackage{dutchcal}
\usepackage{bm}
\usepackage{threeparttable}
\usepackage{mathtools}
\usepackage{hyperref}
\usepackage{algorithm}
\usepackage{algorithmicx}
\usepackage{algpseudocode}
\usepackage{rotating}
\usepackage{pifont}
\usepackage{pdflscape}

\theoremstyle{plain}
\newtheorem{theorem}{Theorem}[section]
\newtheorem{lemma}[theorem]{Lemma}
\newtheorem{corollary}[theorem]{Corollary}
\newtheorem{example}{Example}
\newtheorem{definition}{Definition}

\AtBeginDocument{%
  \providecommand\BibTeX{{%
    Bib\TeX}}}
\newcommand\BibTeX{{\rmfamily B\kern-.05em \textsc{i\kern-.025em b}\kern-.08em
T\kern-.1667em\lower.7ex\hbox{E}\kern-.125emX}}

\begin{document}
\setcounter{secnumdepth}{4}
\def\thetitle{SLIDE: \underline{S}huffle Shamir Secret Shares Uniformly with \underline{Li}near Online Communication and Guaranteed Output \underline{De}livery}
\title{\thetitle}

\author{Jiacheng Gao\affilnum{1}\textsuperscript{$\dagger$}, Moyang Xie\affilnum{1}\textsuperscript{$\dagger$},
Yuan Zhang\affilnum{1} and Sheng Zhong\affilnum{1}}

\affiliation{\affilnum{1}State Key
Laboratory for Novel Software Technology, Nanjing University, China
\textsuperscript{$\dagger$}Jiacheng Gao (jcgao@smail.nju.edu.cn) and Moyang Xie (xie\_moyang@foxmail.com) contributed equally to this work.
}

\corrauth{Yuan Zhang and Sheng Zhong, Gulou Campus, Nanjing University, Hankou Road 22, Gulou District, Nanjing, Jiangsu Province, China, Postcode: 210093}

\email{zhangyuan@nju.edu.cn, zhongsheng@nju.edu.cn}

\begin{abstract}
We revisit shuffle protocols for Shamir secret sharing.
Existing constructions either produce non-uniform shuffles or incur high communication and round complexity,
sometimes exponential in the number of parties.
We propose two new shuffle protocols that achieve uniform shuffling with communication complexity
$O\!(\frac{k + l}{\log k} \, n^2 m \log m)$
for an $m \times l$ matrix shared among $n$ parties,
where $k \leq m$ is a tunable parameter.
The first protocol is concretely efficient, while the second achieves the best-known $O(nml)$ online communication and $O(n)$ rounds.
Experiments show significant improvements in online efficiency and total cost over prior work.
Our key technical ingredient is a novel permutation sharing technique that represents permutations via smaller permutation matrices,
making their application significantly more efficient.
The first protocol applies independent secret permutations sequentially,
while the second builds on shuffle correlation to achieve optimal online complexity.
We further extend shuffle correlation to support GOD with linear online communication,
yielding \textsf{SLIDE}, the first protocol achieving both $O(nml)$ online communication and guaranteed output delivery.
Our constructions rely only on basic Shamir secret sharing over any field $\mathbb{F}$ with $\lvert\mathbb{F}\rvert > n$.
As shuffling is a fundamental primitive for MPC tasks such as sorting and oblivious data structures,
our results enable more efficient and scalable secure computation in practice.
\end{abstract}

\keywords{Secure Multiparty Computation, Secure Shuffle, Shamir Secret Sharing}

\maketitle

\section{Introduction}\label{sec:introduction}
Secure multiparty computation (MPC) has found wide applicability in
real-world scenarios. In an MPC setting, multiple parties jointly evaluate
a function over their private inputs. The goal is to preserve the
confidentiality of each party's data while efficiently obtaining the
correct output. This strong security guarantee makes MPC an attractive
solution for many privacy-sensitive collaborative tasks.

Most MPC protocols are designed modularly, with existing primitives
serving as subroutines for building new constructions. This modularity
enables succinct protocol descriptions and clean security arguments via
the composition theorem \cite{canetti2000composition}. Among these
primitives, the MPC shuffle protocol is particularly fundamental.
An MPC shuffle protocol takes an array of secret-shared values as input,
randomly permutes it, and shares the result back to the parties such that
no party knows which output entry comes from which input entry. Such an
operation has numerous real-world applications, including anonymous
communication systems \cite{alexopoulos2017mcmix, lu2019shamirshuffle, eskandarian2021clarion},
electronic voting \cite{kusters2020evoting, jingzhong2020evoting, pedin2023evoting},
and distributed databases \cite{zhang2023database, wang2022database}.
The shuffle operation is also a versatile building block for more complex
MPC primitives, including MPC sorting
\cite{hamada2013shufflethensort, hamada2014radixsort, laud2016shufflethensort, bogdanov2014radixsort}
and MPC oblivious RAM (ORAM)
\cite{keller2014ORAM, peceny2025permutation, falk2023doram}.

Despite the broad applicability of MPC shuffles, recent research has
focused mainly on improving the efficiency of their additive-secret-sharing
variants. A major breakthrough was achieved by \cite{chase2020secret},
who proposed a novel semi-honest two-party shuffle protocol with almost
linear computation and communication. Follow-up studies
\cite{eskandarian2021clarion, laud2021linearshuffle, song2023secret, gao2024linearshuffle}
extended this line of work to active security and the multi-party setting,
achieving linear online communication and computation. As a result,
shuffle protocols based on additive secret sharing are already highly
efficient.

In contrast, shuffle protocols for Shamir secret sharing \cite{shamirss}
remain far less developed. The most efficient known approaches rely on
switching networks \cite{lu2019shamirshuffle, mardi2021shuffle} or
sorting networks \cite{movahedi2015shuffle1, movahedi2015shuffle2}, with
$O(nml\log m)$ communication for $n$ parties shuffling $m$ vectors of
length $l$. However, these constructions exhibit significant drawbacks:
switching-network-based shuffles fail to generate uniform shuffles
\cite{lu2019shamirshuffle, mardi2021shuffle}, while
sorting-network-based shuffles achieve uniformity only with some
probability \cite{movahedi2015shuffle1, movahedi2015shuffle2}. These
limitations complicate the security analysis of higher-level MPC protocols
built upon them.

Currently, uniform shuffling in Shamir secret sharing is known only through
the permute-in-turn paradigm
\cite{laur2011shuffle, keller2014ORAM}, which incurs substantially higher
complexities. The construction of \cite{laur2011shuffle} requires
$O(2^nn^{1.5})$ rounds, rendering it practical only for very small numbers
of parties. The construction of \cite{keller2014ORAM} requires
$O(n^2ml\log m)$ communication and $O(n\log m)$ rounds, which is still less
efficient than existing non-uniform shuffle protocols. These approaches
are further discussed in Section~\ref{sec:Related Works}, with a comparison
of existing constructions provided in Table~\ref{table:shuffle protocols}.

\paragraph{Contributions.}
This paper develops efficient and uniform shuffle protocols for Shamir
secret sharing, with unconditional security against malicious parties
under an honest majority. These protocols are further enhanced to achieve
guaranteed output delivery (GOD).

First, we adopt the permute-in-turn approach to guarantee the uniformity
of the shuffle. At the core of our contributions is a novel permutation
sharing technique that enables the parties to share a permutation and
apply it to secret-shared data. Our technique requires
$O(\tfrac{nmk\log m}{\log k})$ communication to share an $m$-permutation
and $O(\tfrac{nml\log m}{\log k})$ communication to apply the shared
permutation to $m$ vectors of length $l$; here, $1\leq k\leq m$ is a
tunable parameter balancing communication and computation. By letting
each party share one secret, uniformly random permutation in the offline
phase and sequentially applying these permutations in the online phase,
we obtain our first shuffle protocol.

To further improve online efficiency, we leverage the shuffle correlation
technique of \cite{gao2024linearshuffle}. The original shuffle correlation
of Gao et al. achieves only security with abort: honest parties abort
without obtaining the output if the adversary deviates from the protocol.
We therefore investigate how to instantiate their construction using our
permutation sharing protocol. We show that, with the Shamir secret sharing
scheme of \cite{goyal2020guaranteed}, the dispute control technique
\cite{beerliova2006dispute}, and the authentication tag technique of
\cite{Ben-Sasson2012miniMPC}, we can securely generate and apply a shuffle
correlation with guaranteed output delivery (GOD). This yields
\textsf{SLIDE}, the first uniform shuffle protocol that achieves both GOD
and the best-known online communication complexity of $O(nml)$, along with
$O(n)$ rounds.

Our contributions are summarized as follows.
    \begin{enumerate}
\item We introduce a novel primitive for sharing a permutation and applying
    it to secret-shared vectors. Using this primitive, we construct a
    shuffle protocol with
    $O(\tfrac{(k+l)n^2m\log m}{\log k})$ total communication and
    $O(\tfrac{n\log m}{\log k}+\log\log k)$ rounds, where $k$ is a
    tunable parameter. This protocol outperforms all existing uniform
    shuffle protocols for Shamir secret sharing. Moreover, when
    $n=o(\log m)$, choosing $k=\omega(2^n)$ yields
    $o(nml\log m)$ online communication and $o(\log m)$ online rounds.
    Thus, in this parameter regime, the resulting protocol improves upon
    even switching- and sorting-network-based approaches, at the expense
    of higher offline communication (Section~\ref{sec:permutation}).
\item We enhance our shuffle protocol with the shuffle correlation
    technique of \cite{gao2024linearshuffle}. This yields a second
    shuffle protocol with the best-known $O(nml)$ online communication
    complexity and $O(n)$ rounds. Its offline communication complexity,
    $O(\tfrac{(k+l)n^2 m \log m}{\log k})$, also improves upon the previous
    best-known bound of $O(n^2 ml \log m)$ under the uniformity requirement.
    We further implement both protocols and conduct extensive experiments,
    showing that our constructions, particularly \textsf{Shuffle2}, achieve
    significant improvements in online efficiency and total cost over prior
    work (Section~\ref{sec:shuffle}).
\item We enhance both our protocols and the shuffle correlation of
    \cite{gao2024linearshuffle} to achieve GOD, ensuring that honest
    parties obtain the correct output regardless of adversarial behavior.
    This yields \textsf{SLIDE}, the first shuffle protocol that
    simultaneously achieves uniformity, GOD, and the best-known
    $O(nml)$ online communication. We note that whether shuffle correlation
    could be extended to GOD was not previously known, and our work
    establishes this as a new result (Section~\ref{sec:GOD}).
\item We analyze the communication, computation, and round complexities
    of our protocols to quantify the effect of $k$ and guide the choice
    of an appropriate value.
    \end{enumerate}

\paragraph{Organization.}

The rest of the paper is organized as follows.
In Section~\ref{sec:Related Works}, we review prior work on MPC shuffle
protocols, with a focus on constructions for Shamir secret sharing.
Section~\ref{sec:preliminary} introduces the basic notation and security
models used throughout the paper. In Section~\ref{sec:permutation}, we
present our permutation sharing technique and show how to build a shuffle
protocol from it. Section~\ref{sec:shuffle} demonstrates how to leverage
this technique to construct a shuffle protocol with linear online
complexity, matching the overall complexity of the protocol in
Section~\ref{sec:permutation} up to constant factors.
Section~\ref{sec:GOD} shows how to enhance our protocols to support
guaranteed output delivery (GOD), including how to adapt the shuffle
correlation of \cite{gao2024linearshuffle}.

For brevity, the security proofs are provided in
Appendix~\ref{sec:security proofs}, the proofs of our complexity analyses
are given in Appendix~\ref{sec:complexity}, and several auxiliary
discussions are deferred to Appendix~\ref{sec:discussion}.

\section{Related Works}\label{sec:Related Works}
\paragraph{Secure MPC Shuffle.}
The concept of a multi-party shuffle originates from
\cite{chaum1981untraceable} as a byproduct of the decryption shuffle.
In this scenario, clients iteratively encrypt messages using the servers'
public keys and send them to the first server. Each server then decrypts,
randomly permutes, and forwards the message batch. After the final server
processes the batch, the messages are securely shuffled, ensuring that no
server can link messages to their senders. While the original protocol
assumes honest servers, subsequent works achieve active security
\cite{adida2007shuffle, groth2010verifiable, hohenberger2007securely, abe1999mix}
through zero-knowledge proofs that verify the correctness of decryption
and permutation. These approaches incur $O(m^2)$ computation per server
and are generally heavy due to the overhead of public-key operations and
proof generation.

Despite its importance, the decryption shuffle technique is not well
suited for multi-party computation (MPC) tasks. After decryption and
permutation, its output consists of permuted plaintexts. In contrast, MPC
scenarios require the output to remain secret-shared so that subsequent
secure computations can be performed on the data. For instance, in
\cite{hamada2014radixsort} and \cite{bogdanov2014radixsort}, the authors
construct MPC sorting protocols based on shuffling, where it is essential
that the shuffle outputs remain secret-shared to enable further secure
computations.

A recent breakthrough by \cite{chase2020secret} significantly improved
the efficiency of two-party MPC shuffles, thereby enhancing the efficiency
of multi-party MPC and its shuffle-based applications. Their key idea is
a permutation decomposition technique that breaks a large $m$-permutation
into $O\!\big(\frac{m \log m}{k \log k}\big)$ small $k$-permutations.
Each $k$-permutation is implemented using obliviously punctured matrices
(OPMs) with $O(kl + k\log k)$ communication and $O(k^2 l)$ local
computation, yielding a two-party shuffle protocol with
$O\!\big((1 + \frac{l}{\log k}) m \log m\big)$ communication. At its core,
the protocol relies only on oblivious transfer extension (OTE) and a
pseudorandom generator (PRG), both of which are computationally
lightweight.

\cite{han2025permutation} and \cite{peceny2025permutation} identify the
randomness underlying \cite{chase2020secret} as permutation correlation.
\cite{han2025permutation} improves its concrete generation efficiency,
while \cite{peceny2025permutation} proposes a new method with
$O(m \log l)$ communication for generating such randomness under the
Learning Parity with Noise (LPN) assumption. Initially designed for the
two-party semi-honest setting, this line of work has since been extended
to the multi-party setting \cite{laud2021linearshuffle}, to linear online
communication \cite{eskandarian2021clarion, chen2025shuffle}, and to
malicious security \cite{song2023secret}. A recent advance by
\cite{gao2024linearshuffle} achieves both malicious security and linear
online communication in the multi-party setting.

However, all the above advances in MPC shuffle protocols are designed for
additive secret sharing and cannot be directly applied to Shamir secret
sharing. The gap stems from fundamental differences between the two
schemes. In Shamir secret sharing, shares must be points on a low-degree
polynomial, which excludes additive-sharing techniques such as PRG-based
rerandomization, used in \cite{chase2020secret} to balance computation
and communication. Consequently, shuffle protocols over Shamir secret
sharing face significantly stricter design constraints and efficiency
limitations.

\paragraph{Secure MPC Shuffle for Shamir Secret Sharing.}
To date, four main approaches are known for constructing MPC shuffle
protocols in Shamir secret sharing.

The first approach uses switching networks, introduced by
\cite{lu2019shamirshuffle} and optimized by \cite{mardi2021shuffle}.
In this approach, a shuffle is implemented via layers of switch gates:
each gate randomly swaps two inputs based on a random bit, and each gate
can be realized using MPC multiplication. \cite{mardi2021shuffle} achieve
a best-known communication complexity of $O(nml \log m)$. However, a
critical limitation is that the resulting permutation is not uniform, as
some permutations occur with significantly higher probability than others
(see Table~4 of \cite{mardi2021shuffle}), weakening the overall security
guarantee.

The second approach is based on sorting networks, initially proposed by
\cite{movahedi2015shuffle1,movahedi2015shuffle2}. The idea is to tag each
data item with a random value; sorting by these tags yields a uniform
shuffle. This reduces the problem to implementing an MPC sorting protocol.
A key drawback, however, is that MPC sorting protocols rely on heavy MPC
primitives. In addition, the design of sorting networks, or equivalently
oblivious sorting algorithms, is difficult. Existing networks are either
too deep (e.g., Batcher's odd--even merge sort and bitonic sort
\cite{batcher1968OddEvenMergeSort}, both with depth $O(\log^2 m)$), have
impractically large constants (e.g., the AKS network
\cite{ajtai1983AKSNetwork}), or sort correctly only with some probability
(e.g., the LP network \cite{leighton1990LPNetwork}). This also explains
why many MPC sorting protocols adopt the shuffle-then-sort paradigm
\cite{laud2016shufflethensort,hamada2013shufflethensort}, eliminating the
need for an oblivious sorting algorithm. Using the LP network,
\cite{movahedi2015shuffle1,movahedi2015shuffle2} achieve
$O(nml\log m)$ communication and a permutation that is uniform with
probability $1 - O(m^{-3})$. Although this leaves a small, albeit
non-negligible, probability of a non-uniform shuffle, this guarantee is
already sufficient for many applications.

The third approach is based on the square network, which was originally
proposed as a plaintext algorithm by~\cite{haastad2006square}.
Later, \cite{lu2023rpm} used it to realize a secure shuffle under
Shamir secret sharing. At a high level, the square network composes a
large permutation from a sequence of smaller permutations, resulting in a
distribution that is close to, but not exactly, uniform. Concretely,
\cite{lu2023rpm} adopts an approach that approximates the sampling of a
uniform permutation over $m$ elements by sampling $q n \sqrt{m}$
permutations over $\sqrt{m}$ elements, where $q$ denotes the number of
layers in the square network and is set to the constant $15$. After $q$
layers, the resulting permutation distribution $\Pi_q$ approaches the
uniform distribution $U_m$ in statistical distance. In particular, it
holds that
$
\Delta(\Pi_q, U_m) \leq
O\!(
m^{1 - \lfloor q/3 \rfloor \cdot \frac{1}{4}}
(\log m)^{\lfloor q/3 \rfloor}
\!).
$
To shuffle $m$ messages of length $l$, the parties perform
$q n l \sqrt{m}$ matrix multiplications, each involving two square
matrices of dimension $\sqrt{m}$. Since each such multiplication incurs
$O(n \cdot m^{1.5})$ computation in MPC, the total computation cost is
$O(q n^2 m^2 l)$.

The fourth approach is permute-in-turn, adopted by
\cite{laur2011shuffle} and \cite{keller2014ORAM}. In
\cite{laur2011shuffle}, the strategy is to have multiple subsets of
parties agree on independent uniform permutations, apply these
permutations to the array, and then reshare the result with all parties.
Each subset must be large enough to allow secret reconstruction, and
sufficiently many subsets are required so that an adversary cannot corrupt
all of them. Their analysis shows that approximately
$O(2^n / \sqrt{n})$ subsets are needed, resulting in a communication
complexity of $O(2^n n^{1.5} m l \log m)$.
\cite{keller2014ORAM} improve this paradigm by incorporating a switching
network. In their construction, each party sequentially applies a
permutation by sharing $O(m \log m)$ control bits that determine the
network's switching operations. As all parties permute sequentially, the
resulting shuffle is uniform. This reduces the communication cost to
$O(n^2 ml\log m)$ within $O(n\log m)$ rounds.
The primary advantage of the permute-in-turn paradigm is its guarantee of
uniformity. However, both variants incur higher communication and round
complexities than switching- or sorting-network-based approaches. The
construction of \cite{laur2011shuffle} requires a round complexity
exponential in $n$, while the construction of \cite{keller2014ORAM} adds
a factor of $n$ in communication complexity.

Beyond the aforementioned advantages and drawbacks, a crucial limitation
of existing works is their neglect of online complexity. Many MPC
protocols adopt a two-phase design consisting of a data-independent
offline phase and a data-dependent online phase. This phase separation
enables most heavy operations to be precomputed offline, leaving the
online phase lightweight so that the parties can respond swiftly to
real-world data as it arrives. In Appendix~\ref{subsec:Online Complexity},
we discuss why online complexity is often
more crucial than offline complexity.

However, nearly all existing shuffle protocols have only a trivial
offline phase that prepares basic randomness, such as shared random values
or bits. For instance, switching-network-based protocols require all
input data to be available before they can be fed into the network,
meaning that only random bits can be precomputed offline. Similarly,
sorting-based approaches cannot proceed without actual inputs, as there
is no data to sort. Even though the construction of \cite{keller2014ORAM}
allows parties to share and verify control bits offline, its online phase
still incurs $O(n^2 ml\log m)$ communication, matching its offline cost.
As a result, the online workload remains heavy and becomes a bottleneck
in large-scale MPC applications.

\begin{landscape}
\vspace*{\fill}
\begin{table}[h]
    \centering
\caption{Summary of existing MPC shuffle protocols based on Shamir secret sharing.
Here, $n$ is the number of parties, $m$ is the number of vectors to be
shuffled, $l$ is the vector length in field elements, $k$ is a tuning
parameter balancing communication, computation, and round complexity, and
$q$ is the number of layers in the square network, set to the constant
$15$ in~\cite{lu2023rpm}. $\mathcal{B}$ and $\mathcal{C}$ denote the
costs of generating one shared random bit and one MPC comparison,
respectively, as determined by the underlying implementation. ``Unif.''
stands for ``uniform''. ``Onl./Offl.'' denotes ``online/offline'', and
``Comm.'' and ``Rds.'' denote communication and round complexity,
respectively. The protocols of
\cite{mardi2021shuffle, movahedi2015shuffle1, movahedi2015shuffle2, keller2014ORAM}
are generic and not limited to Shamir secret sharing, although their
asymptotic costs may differ under other MPC frameworks.}
        \label{table:shuffle protocols}

        \setlength{\tabcolsep}{2.4pt}
        \footnotesize
        \begin{tabularx}{\linewidth}{>{\raggedright\arraybackslash}p{2.7cm}>{\raggedright\arraybackslash}p{0.65cm}>{\raggedright\arraybackslash}p{3cm}>{\raggedright\arraybackslash}p{1.68cm}>{\raggedright\arraybackslash}p{2.58cm}>{\raggedright\arraybackslash}p{3.38cm}X}
            \hline
            Protocol & Unif. & Onl. Comm. & Onl. Rds. & Offl. Comm. & Offl. Rds. & Approach \\
            \hline
            \cite{lu2019shamirshuffle} & $\times$ & $O(nml\log^2 m)$ & $O(\log^2 m)$ & $O(\mathcal{B}m\log^2m)$ & $O(1)$ & Switching Network \\

            \rowcolor{gray!20}
            \cite{mardi2021shuffle} & $\times$ & $O(nml\log m)$ & $O(\log m)$ & $O(\mathcal{B}m\log m)$ & $O(1)$ & Switching Network \\

            \cite{movahedi2015shuffle1, movahedi2015shuffle2} & w.p.\rlap{\textsuperscript{\text{\tiny *}}} & $O(nml\log m)$ & $O(\log m)$ & $O(\mathcal{C}m\log m)$ & $O(1)$ & Sorting Network \\

            \rowcolor{gray!20}
            \cite{lu2023rpm}  & $\approx^{\dagger}$ & $O(q n m l)$ & $O(q)$ & $O(q n m^{1.5})$ & $O(\log n)$ & Square Network \\

            \cite{laur2011shuffle} & $\sqrt{}$ & $O(2^n n^{1.5}ml\log m)$ & $O(2^n/\sqrt{n})$ & $O(1)$ & $O(1)$ & Permute-in-turn \\

            \rowcolor{gray!20}
            \cite{keller2014ORAM} & $\sqrt{}$ & $O(n^2ml\log m)$ & $O(n\log m)$ & $O(n^2m\log m)$ & $O(1)$ & Permute-in-turn \\

            Shuffle1 & $\sqrt{}$ & $O(\frac{n^2ml\log m}{\log k})$ & $O(\frac{n\log m}{\log k})$ & $O(\frac{kn^2m\log m}{\log k})$ & $O(\log\log k)$ & Permute-in-turn \\

            \rowcolor{gray!20}
            Shuffle2 and SLIDE$^{\ddagger}$ & $\sqrt{}$ & $O(nml)$ & $O(n)$ & $O(\frac{(k+l)n^2m\log m}{\log k})$ & $O(\frac{\log m}{\log k}+\log\log k)$ & Permute-in-turn \\

            \hline
        \end{tabularx}
    \begin{tablenotes}[flushleft]
        \scriptsize
        \item \textsuperscript{\textasteriskcentered} According to \cite{movahedi2015shuffle1}, this probability is $1-O(m^{-3})$.
        \item \textsuperscript{\textdagger} Statistical closeness to uniformity, i.e., the permutation distribution $\Pi_q$ satisfies $\Delta(\Pi_q, U_m) \leq O \!(m^{1 - \lfloor q/3 \rfloor \cdot \frac{1}{4}} (\log m)^{\lfloor q/3 \rfloor} \!)$.
        \item \textsuperscript{\textdaggerdbl} When $n \geq \kappa$, the offline communication complexity reduces to $O\left(\frac{k+l}{\log k}\kappa nm\log m\right)$, and the online communication complexity reduces to $O(\kappa ml)$. The $O(n)$ online round complexity also holds for \textsf{SLIDE} under dispute control; see Appendix~\ref{subsec:Round Complexity under Dispute Control} for further discussion.
\end{tablenotes}
\end{table}
\vspace*{\fill}
\end{landscape}

\paragraph{Shuffle with Guaranteed Output Delivery.}
A desirable property of shuffle protocols is guaranteed output delivery
(GOD), which ensures that honest parties obtain the correct output even
in the presence of malicious behavior. Although seemingly demanding, most
existing shuffle protocols for Shamir secret sharing already achieve
GOD~\cite{lu2019shamirshuffle, mardi2021shuffle, movahedi2015shuffle1, movahedi2015shuffle2, keller2014ORAM}.
This is because these protocols are typically built solely on MPC
multiplication and thus naturally inherit GOD when instantiated with a
Shamir secret sharing scheme that supports multiplication with GOD
(e.g.,~\cite{goyal2020guaranteed}).

In contrast, achieving GOD for uniform shuffle protocols with linear
online complexity is significantly more challenging. Linear-online
shuffles were first obtained for semi-honest additive secret sharing by
\cite{eskandarian2021clarion} and later extended to malicious security by
\cite{gao2024linearshuffle}. For efficiency, both works rely on highly
customized MPC primitives that are not designed to provide GOD, and simply
replacing the underlying secret sharing scheme does not suffice to
guarantee GOD. Consequently, whether a uniform shuffle protocol can
simultaneously achieve linear online complexity and GOD has remained open.

\paragraph{Summary.}
In this paper, we present uniform shuffle protocols for Shamir secret
sharing that substantially improve online efficiency. We first introduce
a permutation sharing technique that enables a nontrivial trade-off
between the communication cost of sharing a permutation and that of
applying it. Unlike the share translation protocol of
\cite{chase2020secret}, which uses a PRG to reduce communication in
additive secret sharing by trading communication for computation, our
construction leverages a structural property of Shamir secret sharing:
inner products of shared vectors of length $l$ can be computed with only
$O(n)$ communication and $O(nl)$ computation. This yields our first shuffle
protocol with
$O\!\left(\frac{n^2ml\log m}{\log k}\right)$ online communication and
$O\!\left(\frac{n\log m}{\log k}\right)$ online rounds. To further reduce
online complexity, we leverage the shuffle correlation technique of
\cite{gao2024linearshuffle}, obtaining our second shuffle protocol with
$O(nml)$ online communication and $O(n)$ online rounds. We then enhance
the shuffle correlation technique and our second shuffle protocol to
achieve guaranteed output delivery (GOD). This yields \textsf{SLIDE}, the
first uniform shuffle protocol that achieves both GOD and the best-known
online communication complexity of $O(nml)$, along with $O(n)$ online
rounds.

\section{Preliminaries} \label{sec:preliminary}

\subsection{Basic Notation}
We consider an MPC setting with $n$ parties
$P_1, P_2, \ldots, P_n$ participating in the computation.
Let $\mathbb{F}$ be a finite field with $|\mathbb{F}| > n$,
and denote its prime subfield by $\mathbb{F}_p$.
Let $\mathbb{K}$ be an extension of $\mathbb{F}$ such that
$|\mathbb{K}|\geq 2^\kappa$ for the statistical security parameter
$\kappa$.
Let $\{\alpha_i\}_{i=1}^n$ be $n$ distinct nonzero elements of
$\mathbb{F}$ that are publicly known to all parties.
Let $t < n/2$ denote the number of corrupted parties.

We use bold letters (e.g., $\mathbf{x}, \mathbf{y}, \mathbf{z}$) to denote vectors.
Vectors are column vectors unless stated otherwise.
We denote the $i$-th entry of a vector $\mathbf{x}$ by $x_i$
or $\mathbf{x}(i)$.
The latter notation is useful when the vector itself has a subscript,
e.g., $\mathbf{x}_1$ or $\mathbf{x}_2$.
Bold uppercase letters
(e.g., $\mathbf{A}, \mathbf{B}, \mathbf{C}$) denote matrices,
and the entry in the $i$-th row and the $j$-th column of $\mathbf{A}$
is denoted by $a_{i,j}$ or $\mathbf{A}(i,j)$.

Define $[m] = \{1, 2, \ldots, m\}$ for any positive integer $m$.
For a vector $\mathbf{x} = (x_i)_{i\in [m]}$ of length $m$
and an $m$-permutation $\pi: [m]\to [m]$, applying $\pi$ to
$\mathbf{x}$ yields
\[
\pi(\mathbf{x}) \coloneqq (x_{\pi(1)}, x_{\pi(2)}, \ldots, x_{\pi(m)}).
\]
The permutation matrix of a permutation $\pi$ is defined as
$\mathbf{P}_\pi \coloneqq (p_{i,j})_{i,j\in [m]}$, where
\[
p_{i,j} =
\begin{cases}
1 & \text{if } j=\pi(i),\\
0 & \text{otherwise}.
\end{cases}
\]
With $i$ as the row index and $j$ as the column index, we have
\[
\pi(\mathbf{x}) = \mathbf{P}_\pi\cdot \mathbf{x}.
\]

We denote the Shamir secret sharing of a value \(s\)
among all parties by \(\llbracket s \rrbracket\).
This means that there exists a degree-\(t\) polynomial
\(f \in \mathbb{F}[X]\), where \(t\) is the number of corrupted parties,
such that \(f(0)=s\) and \(f(\alpha_i)\) is known exclusively to \(P_i\).
For a secret-shared vector \(\llbracket \mathbf{v} \rrbracket\) or
matrix \(\llbracket \mathbf{A} \rrbracket\), every entry is
secret-shared individually. For notational simplicity, we write a shared
vector as
\[
\llbracket \mathbf{v} \rrbracket
=
(\llbracket v_1 \rrbracket, \llbracket v_2 \rrbracket, \ldots,
\llbracket v_m \rrbracket)^\top
=
\llbracket v_1, v_2, \ldots, v_m \rrbracket^\top,
\]
with the same convention applying to shared matrices.

When invoking a sub-protocol, three types of parameters may be involved.
For instance, an invocation written as
\[
\mathrm{Prot}(a, \llbracket b \rrbracket, P_i: c)
\]
denotes invoking protocol \(\mathrm{Prot}\) with a constant \(a\),
a secret-shared value \(\llbracket b \rrbracket\), and a private input
\(c\) from party \(P_i\).

We assume that the items to be shuffled consist of $m$ field elements or
$m$ vectors, where $m$ is a power of 2. The main reason for this
assumption is to leverage a technique proposed by
\cite{chase2020secret}, which uses a Bene\v{s} network
\cite{benevs1964permutation} for permutation decomposition.
The Bene\v{s} network used in this decomposition is designed for
permutations whose size is a power of 2. Since a shuffle for an array
of size $m$ can be readily used to shuffle an array of smaller size
$m^\prime \leq m$ (as discussed in \cite{gao2024linearshuffle}), this
assumption is without loss of generality.

\subsection{Security Model}
We consider a malicious adversary capable of corrupting at most $t < n/2$ parties.
    We assume the existence of secure pairwise channels and a secure broadcast channel.
    For simplicity, we measure communication complexity in terms of
    the number of field elements transmitted,
    implicitly assuming $\log |\mathbb{K}| = O(1)$.
    Thus, broadcasting a field element in either $\mathbb{F}$ or $\mathbb{K}$
incurs $O(n)$ communication.

For clarity, our primary objective is to construct shuffle protocols
that achieve security with abort, with unconditional security against
static adversaries under an honest majority.
A static adversary selects and fixes the set of parties to corrupt before
the protocol’s initiation. Unconditional security means our construction
does not rely on any computational hardness assumptions,
i.e., the adversary may be computationally unbounded.
    Specifically, as long as the number of corrupted parties is less
    than $n/2$, our construction remains statistically secure:
if corrupted parties deviate from the protocol,
honest parties will abort with overwhelming
probability $p>1-C\cdot 2^{-\kappa}$,
where $C$ denotes the number of MPC operations
(i.e., the size of the computation circuit).
Such a strong guarantee is known to be achievable only in the
honest-majority setting \cite{ben2019completeness, chaum1988multiparty}.

In Section \ref{sec:GOD},
    we extend our constructions to guarantee output delivery (GOD).
    GOD is a stronger security guarantee stating that honest parties
    always obtain correct results regardless of adversarial behavior.
This is achieved by replacing all security-with-abort MPC primitives
    with their GOD counterparts \cite{goyal2020guaranteed},
    applying the dispute control framework of \cite{beerliova2006dispute}
    to identify and exclude corrupted parties,
    and using the authentication tag mechanism of \cite{Ben-Sasson2012miniMPC} to
    recover a correct intermediate state for honest parties.

To preserve the continuity of the main exposition, the detailed security proofs are deferred to Appendix~\ref{sec:security proofs}.

\subsection{Primitives for Shamir Secret Sharing}
Our constructions
    require only a high-level interface to
    Shamir secret sharing, which we formalize through
    an ideal MPC functionality $\mathcal{F}_{\mathrm{Shamir}}$
    supporting the following commands:
\begin{itemize}
    \item $\llbracket s \rrbracket \gets \Pi_{\mathrm{share}}(P_i: s)$:
    share a secret from $P_i$ among all parties.
    \item $s\gets \Pi_{\mathrm{open}}(\llbracket s \rrbracket)$:
    publicly open a secret $s$ to all parties.
    \item $s\gets \Pi_{\mathrm{open}}(\llbracket s \rrbracket, P_i)$:
    open $s$ to party $P_i$.
    \item $\llbracket s \rrbracket \gets \Pi_{\mathrm{add}}(\llbracket a \rrbracket, \llbracket b \rrbracket)$:
    return $s=a+b$. This protocol requires no communication.
    \item $\llbracket s \rrbracket \gets \Pi_{\mathrm{mul}}(\llbracket a \rrbracket, \llbracket b \rrbracket)$:
    return $s=ab$ with rerandomization.
    \item $\llbracket s\rrbracket \gets \Pi_{\mathrm{inner}}(\llbracket \mathbf{a} \rrbracket, \llbracket \mathbf{b}\rrbracket)$:
    return the inner product $s=\mathbf{a}^\top\cdot \mathbf{b}$ with rerandomization.
    \item $\llbracket r \rrbracket \gets \Pi_{\mathrm{rand}}(\mathbb{J})$:
    draw $r\overset{\$}{\gets} \mathbb{J}$ uniformly, where $\mathbb{J}$ is
    either $\mathbb{F}_p$, $\mathbb{F}$, or $\mathbb{K}$.
    \item $\lambda\gets \Pi_{\mathrm{challenge}}(\mathbb{K})$:
    draw a public challenge $\lambda$ uniformly from the designated field
    $\mathbb{K}$.
\end{itemize}

The functionality $\mathcal{F}_{\mathrm{Shamir}}$ models Shamir secret
sharing at a high level. It provides a modular interface for protocol
design without committing to a specific implementation.
All operations are assumed to be executed correctly, and any misbehavior
causes the protocol to abort.
In practice, $\mathcal{F}_{\mathrm{Shamir}}$ can be instantiated using
standard MPC protocols based on Shamir secret sharing, in either the
semi-honest or the malicious setting (e.g.,~\cite{goyal2020guaranteed}).
By replacing calls to $\mathcal{F}_{\mathrm{Shamir}}$ with the
corresponding concrete protocols, correctness follows directly, and
security is preserved via standard composition theorems.

Note that $\Pi_\mathrm{share}$, $\Pi_\mathrm{open}$,
$\Pi_\mathrm{mul}$, $\Pi_\mathrm{rand}$, and
$\Pi_\mathrm{challenge}$ can be implemented with $O(n)$ communication
and $O(n)$ computation.
The inner-product primitive $\Pi_{\mathrm{inner}}$ requires $O(nm)$
computation but only $O(n)$ communication when invoked on vectors of
length $m$.
In our later constructions, we leverage this asymmetry to trade
computation for communication and thereby improve the overall complexity.

A concrete realization of these primitives with guaranteed output
delivery (GOD) is provided by~\cite{goyal2020guaranteed}.
In practice, implementations of $\mathcal{F}_{\mathrm{Shamir}}$ rely on
batching to achieve amortized linear communication.
This amortized complexity is fully compatible with our protocols, since
we do not require immediate correctness checks for each multiplication.
We also note that, for efficiency reasons, the concrete implementation
of $\Pi_\mathrm{challenge}$ produces a distribution that is only
``close to'' uniform and incurs $O(n)$ communication overhead
\cite{goyal2020guaranteed}.
For simplicity, we abstract away this detail and assume perfect
uniformity; this abstraction does not affect the security of our protocol.

\subsection{Generic Approach for GOD}
    We enhance our constructions to guarantee output delivery (GOD)
    by leveraging the primitives and framework
    of \cite{goyal2020guaranteed}, which provides a generic approach
    for realizing GOD in the honest-majority setting via
    Shamir secret sharing.
    Their framework builds on the
    dispute control technique of \cite{beerliova2006dispute},
    a strategy for identifying corrupted parties.

    \paragraph{Dispute Control.}
    The core idea of the dispute control technique is to maintain two public sets:

(1) $\mathrm{Corr}$, containing parties already identified as corrupted.

(2) $\mathrm{Disp}$, containing the pairs of parties ${P_i,P_j}$ currently in dispute
(meaning at least one of $P_i$ and $P_j$ must be corrupted).

        It follows that if a party is disputed with at least $(t+1)$ parties,
        it must be corrupted.
        Additionally, any party identified as corrupted is automatically placed
        in dispute with all remaining parties.
        The primitives proposed by \cite{goyal2020guaranteed} are designed
        such that all honest parties consistently agree on the state of
        these two sets, and innocent parties are never wrongfully added
        to $\mathrm{Corr}$.

    Parties proceed with the computation.
    Whenever a correctness check fails,
    the parties revert to the beginning of the current segment and
    update $\mathrm{Corr}$ or $\mathrm{Disp}$.
    As every failure enlarges one of the two sets,
    and as the number of potential disputes is $O(n^2)$,
    the computation will be rolled back at most $O(n^2)$ times, which is
sufficient to identify all corrupted parties.

    Communication between disputed parties is restricted: they never send messages directly to each other.
        If $P_i$ needs to share a secret, it simply sets the share intended for $P_j$ to zero.
        If $P_i$ must send a non-sharing message to $P_j$,
        it routes the message through a relay $P_{i \leftrightarrow j}$,
        who is the lowest-index party not in dispute with either of them.
        Subsequently, any inconsistency in the relayed message forces this relay to side
        with one party, creating a new dispute and ensuring
        eventual corruption identification.

    When a corrupted party is finally excluded from computation,
    its shares become unusable.
    This poses a challenge for multiplication computations,
        which need at least $2t+1$ valid shares to proceed.
        \cite{goyal2020guaranteed} proposes a primitive
        dubbed ``small surgery'' to enable computation
        to continue by nullifying the shares held by corrupted parties.
        Note that this does not impact our constructions,
        as our constructions treat $\Pi_{\mathrm{mul}}$ and
        related primitives as black boxes.
        Hence, these details are abstracted away by our
        ideal-functional assumption.

    \paragraph{Circuit Division.}
    For efficiency optimization, the arithmetic circuit $\mathcal{C}$ is divided into $O(n^2)$ segments.
        Following the computation of each segment,
        a correctness check is performed to verify that all computations up
        to that point are correct. If the check passes, parties additionally
        compute authentication tags \cite{Ben-Sasson2012miniMPC},
        which act as commitments to the shares produced in that segment.
        Broadly speaking, the purpose of these authentication tags is to prevent corrupted parties
        from lying about their shares obtained in earlier segments.
        The rationale is that if a party falsifies its shares for
        the current segment, parties can simply recompute that segment.
        However, if a party falsifies shares from a previous segment,
        rewinding the computation to that segment is impractical due to
        efficiency constraints.
        In such cases, the authentication tags will help identify the
        lying party and add it to $\mathrm{Corr}$.
        This ensures the computation never reverts to previous segments.
        Since there will be at most $O(n^2)$ rollbacks and each segment has size
$O(|C|/n^2)$, the overall computation remains $O(|C|)$,
linear in the size of the circuit.

\section{Permutation Sharing Protocol and Shuffle Protocol}\label{sec:permutation}

In this section, we begin by constructing a permutation sharing protocol
that enables a designated party $P_w$ to efficiently share a secret
permutation $\pi$, which can subsequently be applied to any vector of
compatible size. Given such a permutation sharing mechanism, a shuffle
protocol follows directly via the standard permute-in-turn paradigm:
in the offline phase, each party shares a secret, uniformly random
$m$-permutation; in the online phase, all parties apply the shared
permutations sequentially. The uniformity of the final shuffle is
guaranteed by the fact that, even if only one permutation remains secret
and uniform, the composite permutation is uniform.

Consider permuting $m$ secret-shared vectors of length $l$.
There are two straightforward approaches to implementing such a
permutation sharing scheme:
    \begin{enumerate}[noitemsep, topsep=2pt, partopsep=2pt]
        \item Share an $m\times m$ permutation matrix among parties.
            Sharing such a permutation incurs $O(nm^2)$ communication,
            while applying it incurs $O(nml)$ communication, $O(nm^2l)$ computation
            and $O(1)$ rounds.
        \item Share $m\log m$ bits among parties.
            Each bit corresponds to a control bit
            in the permutation network \cite{keller2014ORAM}.
            Sharing a permutation hence incurs $O(nm\log m)$ communication,
            while applying it incurs $O(nml\log m)$ communication,
            $O(nml\log m)$ computation and $O(\log m)$ rounds.
    \end{enumerate}
The first approach achieves lower communication overhead for permutation
application, while the second offers better overall communication
complexity. The key limitation of the first approach is its
\(O(nm^2)\) communication and computation overhead for permutation
sharing, which is prohibitive for large $m$. The key drawbacks of the
second approach are its \(O(\log m)\) round complexity and the fact that
permutation application is no more efficient than permutation sharing
itself, meaning that the protocol does not effectively leverage the
offline phase.

Our construction unifies these extremes. We observe that an
$m$-permutation can be represented using
$\frac{m\log m}{k\log k}$ many $k\times k$ permutation matrices.
Permutation application then reduces to multiplying these $k\times k$
matrices with $k$-dimensional vectors in the correct order.
Sharing a permutation via our approach thus incurs
$O(\frac{knm\log m}{\log k})$ communication, while applying the
permutation incurs $O(\frac{nml\log m}{\log k})$ communication and
$O(\frac{nmkl\log m}{\log k})$ computation.

To realize this, we require a protocol for sharing well-formed
permutation matrices. However, verifying whether an arbitrarily shared
matrix is a well-formed permutation matrix is generally nontrivial
(see the discussion in Appendix~\ref{sec:Generating Permutation Matrix}).
To address this challenge, we first enable a party to share $k$ distinct
one-hot vectors. By verifying that these $k$ one-hot vectors are pairwise
distinct, the parties can confirm the well-formedness of the resulting
permutation matrix.

We proceed as follows. We first formalize the definition of a shared
permutation and illustrate how parties use it to perform permutation
operations. We next present our construction for sharing one-hot vectors
as a key building block for generating shared permutation matrices.
We then describe our permutation sharing protocol. Finally, we present
the shuffle protocol built from this permutation sharing protocol,
followed by a comprehensive complexity analysis.

\subsection{Shared Permutation}

\begin{definition}[$k$-Shared $m$-Permutation]\label{definition: shared permutation}
    We say that an $m$-permutation $\pi$ is $k$-shared, denoted by
    $\ll \pi\rr_k$, if
    $$ \ll \pi \rr_k \coloneqq (k, s, \{\{\ll \mathbf{P}_{\pi^\prime_{i, j}}\rr, T_{i, j}\}_{j\in [m/k]}\}_{i\in [s]}),$$
    where
    \begin{enumerate}[noitemsep, topsep=2pt, partopsep=2pt]
        \item Each $\pi_{i, j}^\prime$ is a $k$-permutation.
        \item $s=O(\frac{\log m}{\log k})$. This bound is tight; see
        Appendix~\ref{subsec:Definition of Shared Permutation}.
        \item $\mathbf{P}_{\pi^\prime_{i, j}}$ denotes the permutation matrix
        corresponding to \(\pi^\prime_{i,j}\), which is secret-shared among parties.
        \item $T_{i, j}\subseteq [m]$ is a subset of size $k$, referred to as
        the ``task'' of $\pi^\prime_{i, j}$. For a fixed $i$, the sets
        $T_{i, j}$ are disjoint.
        \item $\pi \! =\! \pi_{s, 1}\circ\pi_{s, 2}\circ\cdots\circ\pi_{s, m/k}
        \circ\pi_{s-1, 1}\circ\cdots \circ\pi_{1, m/k}$,
        where $\circ$ denotes permutation composition. Here, \(\pi_{i,j}\)
        is the $m$-permutation induced by applying \(\pi^\prime_{i,j}\) to
        the entries specified by \(T_{i,j}\).
    \end{enumerate}

    For simplicity, we further require that $m$ and $k$ are powers of 2.
    When $k$ is irrelevant to the context, we abbreviate $\ll\pi\rr_k$ as
    $\ll\pi\rr$.
\end{definition}

Once $\ll \pi\rr$ is shared, the parties can apply $\pi$ to any number
of $m$-dimensional vectors. This is done by iteratively replacing the
relevant entries of the $m$-vector with the product of the corresponding
shared $k\times k$ permutation matrix and the extracted subvector.
For instance, when $k=m$, the shared permutation reduces to a single
secret-shared $m\times m$ permutation matrix, and multiplying it by the
vector directly yields the permuted vector. This process is formalized
in Algorithm~\ref{alg:Permute}.

\begin{algorithm}
    \caption{$\ll \pi(\mathbf{v})\rr \gets \mathrm{Permute}(\ll\pi\rr, \ll \mathbf{v}\rr)$}\label{alg:Permute}
    \begin{algorithmic}
    \Require $\pi$ is a shared $m$-permutation and $\mathbf{v}$ is of length $m$.
    \Ensure Return $\ll \pi(\mathbf{v})\rr$.\;
    \State Parse
    $\ll \pi\rr$ as $(k, s, \{\{\ll \mathbf{P}_{\pi^\prime_{i, j}}\rr, T_{i, j}\}_{j\in [m/k]}\}_{i\in [s]})$\;
    \For{$i=1$ to $s$}
        \For{$j=1$ to $m/k$} \textbf{parallel}
            \State $\ll \mathbf{u}\rr \gets \ll \mathbf{v}(l)\rr_{l\in T_{i, j}}$\;
            \State $\ll \mathbf{u}^\prime\rr \gets \ll \mathbf{P}_{\pi'_{i,j}}\rr\cdot \ll \mathbf{u}\rr$\; \Comment{$k$ invocations of $\Pi_\mathrm{inner}$.}
            \State $\ll \mathbf{v}(l)\rr\gets \ll \mathbf{u}'(l)\rr$ for each index $l\in T_{i, j}$\;
        \EndFor
    \EndFor
    \State Return $\ll \mathbf{v}\rr$.\;
    \end{algorithmic}
\end{algorithm}

\subsection{Sharing a One-Hot Vector}
A one-hot vector is a vector with exactly one nonzero entry, whose value is $1$.
We define the ``index'' of a one-hot vector as the position of the $1$.
Sharing one-hot vectors is the first step toward constructing a shared
permutation matrix, as a $k\times k$ permutation matrix consists of $k$
pairwise distinct one-hot vectors.

Suppose party $P_w$ intends to share a $k$-dimensional one-hot vector
with index $\mathrm{ind}$, where $k=2^d$. First, $P_w$ decomposes the
index into $\log k=d$ bits $b_1, b_2, ..., b_d$ such that
$$
 \mathrm{ind} = \sum_{i=1}^{d} 2^{i-1}b_i.
 $$
Next, it secret-shares each $b_i$ among all parties. To check that all
shared values are binary, the parties generate a random challenge
\(\lambda \in \mathbb{K}\) and verify that
$$
0 = \sum_{i=1}^{d} \lambda^{i-1}\ll b_i\rr\ll 1-b_i\rr.
$$
This check evaluates a polynomial \(F \in \mathbb{K}[X]\) of degree at
most \(d-1\) at a random point \(\lambda \in \mathbb{K}\).
    Since a polynomial of degree \(d-1\) has at most \(d-1\) roots in \(\mathbb{K}\),
    the probability that the check detects a non-binary \(b_i\) is overwhelming:
    $$ \Pr[F(\lambda)\neq 0\mid \exists i \text{ s.t. } b_i(1-b_i)\neq 0] \geq 1-\frac{d-1}{|\mathbb{K}|}\geq 1-\frac{d-1}{2^\kappa}. $$
    This binary checking protocol is described in Algorithm \ref{alg:BinCheck}.

    \begin{algorithm}
        \caption{$\mathrm{BinCheck}(\ll b_1\rr, ..., \ll b_d\rr)$}\label{alg:BinCheck}
        \begin{algorithmic}
        \Ensure Abort if any $b_i$ is not in $\{0, 1\}$.
        \State $\lambda\gets\Pi_\mathrm{challenge}(\mathbb{K})$\;
        \State $\ll u\rr\gets \sum_{i=1}^d \lambda^{i-1}\ll b_i\rr\ll 1-b_i\rr$\;\Comment{One call to $\Pi_\mathrm{inner}$.}
        \State $u\gets\Pi_\mathrm{open}(\ll u\rr)$\;
        \State All parties abort if $u\neq 0$.\;
        \end{algorithmic}
    \end{algorithm}

If the binary check passes, the parties convert the shared bits into a
one-hot vector. This process of converting $\log k$ bits into a one-hot
vector of length $k$ is known as demultiplexing. In MPC, this can be done
using the demux protocol proposed by \cite{launchbury2012demux}, which
computes the desired one-hot vector with $O(k)$ MPC multiplications in
$O(\log \log k)$ rounds. The protocol works recursively: it splits the
$d$ bits into two subsets, generates a one-hot vector for each subset via
recursive calls, and then computes the tensor product of these two
vectors to obtain the final one-hot vector with the correct index.
For completeness, we present this demux protocol in
Algorithm~\ref{alg:Demux}.

Algorithm~\ref{alg:OneHot} describes our $\mathrm{OneHot}$ protocol,
which enables party $P_w$ to share a one-hot vector. Notably, the parties
can verify all shared bits for multiple one-hot vectors with a single call
to \(\mathrm{BinCheck}\) by batching all their bit shares, rather than
performing separate checks for each one-hot vector.

\begin{algorithm}
    \caption{$(\ll s_1\rr, ..., \ll s_{2^d}\rr) \gets \mathrm{Demux}(\ll b_1\rr, ..., \ll b_{d}\rr)$}\label{alg:Demux}
    \begin{algorithmic}
    \Ensure Output a one-hot vector with index $\sum_{i=1}^{d} 2^{i-1} b_i + 1$.
    \If{$d=1$}
        \State Return $(1-\llbracket b_1\rrbracket, \llbracket b_1\rrbracket)$\;
    \EndIf
    \State $t \gets \lfloor d/2 \rfloor$\;
    \State $\ll u_1\rr, \ll u_2\rr, ..., \ll u_{2^t}\rr \gets \mathrm{Demux}(\ll b_1\rr, ..., \ll b_{t}\rr)$\;
    \State $\ll v_{1}\rr, \ll v_{2}\rr, ..., \ll v_{2^{d-t}}\rr \gets \mathrm{Demux}(\ll b_{t+1}\rr, ..., \ll b_{d}\rr)$\;\Comment{Parallel recursive calls.}
    \For{$i=1$ to $2^t$} \textbf{parallel}
        \For{$j=1$ to $2^{d-t}$} \textbf{parallel}
            \State $\ll s_{i+2^t(j-1)}\rr\gets\Pi_\mathrm{mul}(\ll u_i \rr, \ll v_j \rr)$\;
        \EndFor
    \EndFor
    \State Return $\ll s_1, ..., s_{2^d}\rr$\;
    \end{algorithmic}
\end{algorithm}

\begin{algorithm}
    \caption{$(\ll s_1\rr, ..., \ll s_{k}\rr) \gets \mathrm{OneHot}(k, P_w: \mathrm{ind})$}\label{alg:OneHot}
    \begin{algorithmic}
    \Require $k=2^d$ is a power of 2.
    \Ensure Output a one-hot vector with index $\mathrm{ind}$.
    \State $P_w$ computes bits $b_1, ..., b_d$ such that $\mathrm{ind} = \sum_{i=1}^d 2^{i-1}b_i+1$.\;
    \State $P_w$ shares $\ll b_1\rr$, ..., $\ll b_d\rr$.\;
    \State $\mathrm{BinCheck}(\ll b_1\rr, ..., \ll b_d\rr)$\;
    \State Return $\mathrm{Demux}(\ll b_1\rr, ..., \ll b_d\rr)$.\;
    \end{algorithmic}
\end{algorithm}

\subsection{Sharing a Permutation Matrix}
A $k\times k$ permutation matrix consists of $k$ one-hot row vectors with
distinct indices. Thus, party $P_w$ can share such a matrix by invoking
the $\mathrm{OneHot}$ protocol $k$ times and assembling the resulting row
shares into a matrix $\mathbf{P}$. While each row then contains exactly
one~$1$, a corrupted $P_w$ may render the matrix ill-formed by mapping
multiple rows to the same column. Therefore, the parties must additionally
verify that every column sums to~$1$.

Opening each column sum individually requires $k$ openings. To batch the
verification, the parties instead sample a random challenge
$\lambda\gets \Pi_\mathrm{challenge}(\mathbb{K})$ and check
\[
 \sum_{i=1}^k \left(\lambda^{i-1}\left(1 - \sum_{j=1}^k \mathbf{P}(j, i) \right)\right) = 0.
\]
Since a nonzero polynomial of degree $k-1$ has at most $k-1$ roots in
\(\mathbb{K}\), the equality fails to hold with probability at least
$1-\frac{k-1}{|\mathbb{K}|}\geq 1-\frac{k-1}{2^\kappa}$
if any column has a sum different from~$1$.

Algorithms~\ref{alg:PermMat} and~\ref{alg:PermCheck} detail the sharing
and verification procedures. The entire process of
Algorithm~\ref{alg:PermCheck} consists of one call to
$\Pi_{\mathrm{challenge}}$ and one call to $\Pi_\mathrm{open}$.

\begin{algorithm}
    \caption{$\ll \mathbf{P}\rr \gets \mathrm{PermMat}(P_w: \pi)$}\label{alg:PermMat}
    \begin{algorithmic}
    \Require $\pi$ is a $k$-permutation, where $k$ is a power of 2.
    \Ensure Output a permutation matrix $\mathbf{P}$.
    \For{$i = 1$ to $k$} \textbf{parallel}
        \State $\mathbf{v}_i \gets \mathrm{OneHot}(k, P_w: \pi(i))$\;
    \EndFor
    \State $\ll \mathbf{P}\rr \coloneqq \ll v_{i, j}\rr_{i, j\in [k]}$\;
    \State $\mathrm{PermCheck}(\ll \mathbf{P}\rr)$\;
    \State Return $\ll \mathbf{P}\rr$.\;
    \end{algorithmic}
\end{algorithm}

    \begin{algorithm}
        \caption{$\mathrm{PermCheck}(\ll\mathbf{P}\rr)$}\label{alg:PermCheck}
        \begin{algorithmic}
        \Require $\mathbf{P}$ is of size $k\times k$, with each row a one-hot vector.
        \Ensure Abort with high probability if $\mathbf{P}$ is not a permutation matrix.
        \State $\ll \mathrm{sum}\rr\gets \ll 0 \rr$\;
        \State $\lambda\gets \Pi_\mathrm{challenge}(\mathbb{K})$\;
        \For{$i=1$ to $k$}\Comment{Enumerate columns.}
            \State $\ll \mathrm{sum}\rr \gets \ll\mathrm{sum}\rr + \lambda^{i-1}(1-\sum_{j = 1}^{k} \ll \mathbf{P}(j, i)\rr)$\;
        \EndFor
        \State $\mathrm{sum} \gets \Pi_\mathrm{open}(\ll \mathrm{sum}\rr)$\;
        \State Abort if $\mathrm{sum}\neq 0$.\;
        \end{algorithmic}
    \end{algorithm}

\subsection{Decomposing and Sharing a Permutation}

The above protocols allow the parties to share a $k\times k$ permutation
matrix. To share an $m$-permutation for large $m$, we first represent it
using many $k$-permutations. This is achieved using the Bene\v{s} network.

The Bene\v{s} network, due to the seminal work of \cite{benevs1964permutation},
is a switching network capable of representing any $m$-permutation with
$O(\log m)$ depth. By partitioning the network into layers and extracting
intermediate permutations, one obtains a natural decomposition of the
original permutation. \cite{chase2020secret} introduced a permutation
decomposition technique based on this reformulation, decomposing an
$m$-permutation into $O(\frac{m\log m}{k\log k})$ smaller permutations,
each of size~$k$. The classic Bene\v{s} network requires $m$ to be a power
of~2. Although there exist generalizations to arbitrary~$m$, they lose some
important properties, such as the independence of the small decomposed
permutations, which is crucial for keeping the round complexity at
$O(\log m)$.

To formalize this process, let
$$ (\pi_1, \pi_2, ..., \pi_s) \gets \mathrm{Decompose}(\pi, k) $$
denote the process of decomposing $\pi$ such that
$$\pi = \pi_s\circ\pi_{s-1}\circ\cdots \circ\pi_1,$$
where each $\pi_i$ is an $m$-permutation. According to
\cite{chase2020secret}, the concrete number of layers is
$s = (2\log m - 1)/\log k = O(\frac{\log m}{\log k})$.
Each $\pi_i$ can be further decomposed as
$\{\pi_{i,j}\}_{j\in[m/k]}$, where each $\pi_{i,j}$ is an
$m$-permutation that affects only $k$ entries. Thus,
$$\pi = \pi_{s,1}\circ\pi_{s, 2}\circ\cdots\circ\pi_{s, m/k}\circ\pi_{s-1, 1}\circ\cdots \circ\pi_{1, m/k}. $$

Define the ``task'' of each $\pi_{i, j}$ as
$$ \mathrm{Task}(m, k, i, j) = \{x\in [m]\mid \pi_{i, j}\text{ affects the }x\text{-th entry}\}. $$
Let $\pi'_{i,j}$ be the compressed (i.e., $k$-permutation) version of
$\pi_{i,j}$ acting only on its task. The decomposition guarantees that
each task has size~$k$.

A key property of this decomposition is its task invariance:
the task of \(\pi_{i,j}\) is independent of the original permutation \(\pi\).
This allows all parties to agree on the tasks in advance
(i.e., which entries to permute) without learning anything about the
specific permutation.

\begin{example}
Consider decomposing an $8$-permutation $\pi$ with $k=4$. Suppose the
decomposition yields
$$ \pi_1 = \left(\begin{matrix}
    1 & 2 & 3 & 4 & 5 & 6 & 7 & 8\\
    5 & 4 & 3 & 6 & 7 & 8 & 1 & 2
\end{matrix}\right).$$
It can be further decomposed into two disjoint permutations:
$$ \pi_{1,1} = \left(\begin{matrix}
    1 & 2 & 3 & 4 & 5 & 6 & 7 & 8\\
    5 & 2 & 3 & 4 & 7 & 6 & 1 & 8
\end{matrix}\right),$$
$$ \pi_{1,2} = \left(\begin{matrix}
    1 & 2 & 3 & 4 & 5 & 6 & 7 & 8\\
    1 & 4 & 3 & 6 & 5 & 8 & 7 & 2
\end{matrix}\right).$$
$\pi_{1,1}$ and $\pi_{1,2}$ are disjoint $8$-permutations, in the sense
that they act on disjoint fixed subsets (i.e., their tasks) of
$\{1,..., 8\}$:
$$ \begin{aligned}
    \mathrm{Task}(8, 4, 1, 1) &= \{1, 3, 5, 7\},\\
    \mathrm{Task}(8, 4, 1, 2) &= \{2, 4, 6, 8\}.
\end{aligned}$$
Although $\pi_{1,1}$ may vary with $\pi$, its task does not, due to the
task-invariant property. We may compress these two permutations into
$\pi^\prime_{1,1}$ and $\pi^\prime_{1,2}$:
$$
\pi^\prime_{1,1} = \left(\begin{matrix}
    1 & 2 & 3 & 4\\
    3 & 2 & 4 & 1
\end{matrix}\right),\
\pi^\prime_{1,2} = \left(\begin{matrix}
    1 & 2 & 3 & 4\\
    2 & 3 & 4 & 1
\end{matrix}\right).
$$

Thus, $\pi_1$ is represented by two disjoint $4$-permutations
$\pi'_{1,1}$ and $\pi'_{1,2}$.
\end{example}

Our permutation sharing protocol is given in Algorithm~\ref{alg:SharePerm}.
The parties first agree on the parameter $k$ and locally compute a
sequence of tasks $\{\mathrm{Task}(m,k,i,j)\}$. The dealer $P_w$ chooses
a permutation $\pi$ and locally decomposes $\pi$ into the small
permutations $\{\pi'_{i,j}\}$. Then $P_w$ shares each small permutation
as a permutation matrix. Since the permutations $\{\pi_{i,j}\}$ for a
fixed $i$ are disjoint, this yields a valid $k$-shared $m$-permutation.

    \begin{algorithm}
        \caption{$\ll \pi\rr_k \gets \mathrm{SharePerm}(k, P_w: \pi)$}\label{alg:SharePerm}
        \begin{algorithmic}
        \Require $\pi$ is an $m$-permutation.
        \Ensure Return a $k$-shared $m$-permutation $\ll \pi\rr_k$.\;
        \State Party $P_w$ lets $ (\pi_1, ..., \pi_{s})\gets \mathrm{Decompose}(\pi, k)$.\;
        \State Party $P_w$ further decomposes each $\pi_i$ into $\pi_{i, 1}, ..., \pi_{i, m/k}$.\;
        \For{$i=1$ to $s$} \textbf{parallel}
            \For{$j=1$ to $[m/k]$} \textbf{parallel}
                \State $T_{i, j}\gets \mathrm{Task}(m, k, i, j)$\;
                \State $\ll\mathbf{P}_{\pi'_{i, j}}\rr\gets \mathrm{PermMat}(P_w: \pi^\prime_{i, j})$\;
            \EndFor
        \EndFor
        \State Return $\ll \pi\rr_k\coloneqq (k, s, \{\{\ll \mathbf{P}_{\pi'_{i, j}}\rr, T_{i, j}\}_{j\in [m/k]}\}_{i\in [s]})$.\;
        \end{algorithmic}
    \end{algorithm}

\subsection{Shuffle Protocol}
Our first shuffle protocol consists of exactly $n$ calls to the
permutation protocol developed above, following a permute-in-turn
mechanism where parties apply their permutations sequentially.
The two phases of the protocol are formally presented in
Algorithms~\ref{alg:FirstShuffleOff} and~\ref{alg:FirstShuffleOn}.
In the offline phase, the parties prepare $n$ shared permutations, one
from each party. In the online phase, the parties apply these permutations
to the incoming data by invoking $\mathrm{Permute}$.

    \begin{algorithm}
        \caption{$\{\ll \pi_i\rr_k\}_{i\in [n]} \gets \mathrm{Shuffle1}_\mathrm{off}(k, m)$}\label{alg:FirstShuffleOff}
        \begin{algorithmic}
        \Require Both $m$ and $k$ are power of 2.
        \For{$w=1$ to $n$}
            \State Party $P_w$ draws an $m$-permutation $\pi_w$ uniformly.\;
            \State $\mathrm{SharePerm}(k, P_w: \pi_w)$\;
        \EndFor
        \State Return $\{\ll \pi_i\rr_k\}_{i\in [n]}$.\;
        \end{algorithmic}
    \end{algorithm}

\begin{algorithm}
    \caption{$\ll \mathbf{v}^\prime_1, \mathbf{v}^\prime_2, ..., \mathbf{v}^\prime_l\rr \gets \mathrm{Shuffle1}_\mathrm{on}(\ll \mathbf{v}_1, \mathbf{v}_2, ..., \mathbf{v}_l\rr)$}\label{alg:FirstShuffleOn}
    \begin{algorithmic}
    \Require Each $\mathbf{v}_i$ is of length $m$, where $m$ is a power of 2.
    \Ensure Return the shuffled shares $\ll \pi(\mathbf{v}_i)\rr_{i\in [l]}$, with a secret uniform $\pi$.\;
    \State Fetch fresh $\{\ll\pi_i\rr\}_{i\in [n]}$ generated by $\mathrm{Shuffle1}_\mathrm{off}$.\;
    \For{$w=1$ to $n$}
        \For{$i=1$ to $l$} \textbf{parallel}
            \State $\ll \mathbf{v}_i\rr\gets \mathrm{Permute}(\ll\pi_w\rr, \ll \mathbf{v}_i\rr)$\;
        \EndFor
    \EndFor
    \State Return $\ll \mathbf{v}^\prime_1, \mathbf{v}^\prime_2, ..., \mathbf{v}^\prime_l\rr$.\;
    \end{algorithmic}
\end{algorithm}

\paragraph{Security of \textsf{Shuffle1}.}
\rev{The security of \textsf{Shuffle1} follows from three observations. First,
the final permutation is the composition of the permutations chosen by all
parties. Hence, as long as one honest party chooses its permutation
uniformly at random and keeps it hidden, the composed permutation is also
uniformly random, regardless of the permutations chosen by corrupted
parties. Second, the privacy of each honest party's permutation is protected
by Shamir secret sharing: the small permutation matrices in
\(\ll \pi \rr_k\) are secret-shared, while the public task sets in the
decomposition are independent of the actual permutation. Thus, the task
structure reveals only where local \(k\)-permutations are applied, but not
which local permutations are used. Third, maliciously chosen permutations
are forced to be well-formed. The \textsf{BinCheck} protocol ensures that
the shared row encodings are binary, and \textsf{PermCheck} ensures that the
resulting one-hot rows form a valid permutation matrix. Therefore, a
malicious dealer can only make the protocol abort with overwhelming
probability or contribute a valid permutation. Detailed security proofs are
provided in Appendix~\ref{sec:security proofs}.}

\subsection{Complexity Analysis}\label{subsec:complexity of sec4}
The following theorems analyze the complexity of the key protocols
introduced in this section. Their proofs, together with the complexity
analyses of the remaining protocols referenced above, are provided in
Appendix~\ref{subsec:proof of sec4}. The complexity of the shuffle
protocols, with a clear separation between offline and online costs, is
summarized in Theorem~\ref{theorem:Complexity of First Shuffle}.

\begin{theorem}\label{theorem:Complexity of SharePerm}
    For the protocol $\mathrm{SharePerm}(k, P_w: \pi)$
    given in Algorithm~\ref{alg:SharePerm},
    the communication, computation, and round complexities are
    $O(\frac{k}{\log k}nm\log m)$,
    $O(\frac{k}{\log k}nm\log m)$, and
    $O(\log \log k)$, respectively.
\end{theorem}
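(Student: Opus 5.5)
The plan is a bottom-up accounting: bound the costs of $\mathrm{BinCheck}$, $\mathrm{Demux}$, $\mathrm{OneHot}$, $\mathrm{PermMat}$ and $\mathrm{PermCheck}$ in turn, then multiply by the number of $k\times k$ matrices that $\mathrm{SharePerm}$ produces. The local work of the dealer ($\mathrm{Decompose}$ via the Bene\v{s} network) and the public computation of the tasks $T_{i,j}$ take $O(m\log m)$ time, which is dominated by the final bound.

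\emph{Step 1: $\mathrm{Demux}$ on $d=\log k$ bits.} Let $C(d)$ count the $\Pi_\mathrm{mul}$ calls and $R(d)$ the rounds. Then
$$C(d)=C(\lfloor d/2\rfloor)+C(\lceil d/2\rceil)+2^d, \qquad R(d)=R(\lceil d/2\rceil)+1.$$
The $2^d$ term dominates geometrically, since $2^{\lceil d/2\rceil+1}\le 2^d$ up to constants. This gives $C(d)=O(2^d)=O(k)$, and hence $O(nk)$ communication and computation. The round recursion gives $R(d)=O(\log d)=O(\log\log k)$.

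\emph{Step 2: $\mathrm{OneHot}$ and $\mathrm{PermMat}$.} $\mathrm{OneHot}$ shares $d$ bits, costing $O(n\log k)$. $\mathrm{BinCheck}$ costs one challenge, one $\Pi_\mathrm{inner}$ of length $d$, and one open. This is $O(n)$ communication, $O(n\log k)$ computation, and $O(1)$ rounds. Adding $\mathrm{Demux}$ gives $O(nk)$ for $\mathrm{OneHot}$.

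As the paper notes, all $\mathrm{BinCheck}$ calls across every one-hot vector can be batched into a single check. The same holds for the $\mathrm{PermCheck}$ calls. This keeps the openings from adding separate per-instance rounds.

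$\mathrm{PermMat}$ runs $k$ parallel $\mathrm{OneHot}$ calls, costing $O(nk^2)$. $\mathrm{PermCheck}$ is purely local linear algebra over the $k^2$ shares, which is $O(k^2)$ per party, plus one challenge and one open. Summed over parties this is $O(nk^2)$ computation and $O(n)$ communication. So $\mathrm{PermMat}$ costs $O(nk^2)$ in communication and computation, in $O(\log\log k)+O(1)$ rounds.

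\emph{Step 3: aggregation.} $\mathrm{SharePerm}$ invokes $\mathrm{PermMat}$ $s\cdot m/k$ times, where $s=(2\log m-1)/\log k=O(\log m/\log k)$. All invocations run in parallel, because the tasks are public and the matrices are independent. The total is therefore
$$O\!\left(\frac{\log m}{\log k}\cdot\frac{m}{k}\cdot nk^2\right)=O\!\left(\frac{k}{\log k}nm\log m\right)$$
for both communication and computation. Parallelism keeps the rounds at those of a single $\mathrm{PermMat}$, namely $O(\log\log k)$.

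\emph{Main obstacle.} The delicate point is the $\mathrm{Demux}$ recursion. One must check that the unequal split $\lfloor d/2\rfloor$, $\lceil d/2\rceil$ still yields $O(k)$ multiplications rather than $O(k\log k)$, and that the round depth is $\log\log k$ rather than $\log k$. One must also ensure that the batched checks do not add more than $O(1)$ rounds. I would handle this with the explicit unrolling above. The unrolling uses that the top level costs $2^d$, while the sum of all lower levels is bounded by $\sum_j 2\cdot 2^{d/2^j+1}=O(2^{d/2}\log d)=o(2^d)$.
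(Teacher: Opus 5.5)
Your proposal is correct and follows the paper's proof essentially step for step. It uses the same bottom-up chain: $\mathrm{Demux}$ via the recurrence $C(d)\approx 2C(d/2)+2^d$, giving $O(nk)$ per one-hot vector in $O(\log\log k)$ rounds, then $O(nk^2)$ per $\mathrm{PermMat}$, then all $O(\frac{m\log m}{k\log k})$ $\mathrm{PermMat}$ calls run in parallel. There is one minor slip in your closing unrolling remark: depth $j$ of the $\mathrm{Demux}$ recursion has $2^j$ subproblems rather than $2$, so the lower levels cost about $\sum_{j\ge 1}2^j\,2^{\lceil d/2^j\rceil}=O(2^{d/2})$. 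The conclusion $o(2^d)$, and hence your bound, is unaffected.
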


    \begin{theorem}\label{theorem:Complexity of First Shuffle}
    Consider an input collection
    $\ll \mathbf{V} \rr = \ll \mathbf{v}_1, \ldots, \mathbf{v}_m \rr$
    of $m$ vectors, each of length~$l$, to be jointly shuffled by the parties.
    For our first shuffle protocol described in
    Algorithms~\ref{alg:FirstShuffleOff} and~\ref{alg:FirstShuffleOn},
    the offline communication, computation, and round complexities are
    $O(\frac{k}{\log k}n^2m\log m)$,
    $O(\frac{k}{\log k}n^2m\log m)$, and
    $O(\log \log k)$, respectively.
    The online communication, computation, and round complexities are
    $O(\frac{l}{\log k}n^2m\log m)$,
    $O(\frac{kl}{\log k}n^2m\log m)$, and
    $O(\frac{n\log m}{\log k})$, respectively.
    \end{theorem}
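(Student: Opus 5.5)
The bound separates cleanly into the offline phase (Algorithm~\ref{alg:FirstShuffleOff}) and the online phase (Algorithm~\ref{alg:FirstShuffleOn}), and I would prove the two halves separately. The offline phase is exactly $n$ invocations of $\mathrm{SharePerm}(k, P_w:\pi_w)$, one per dealer. The online phase is $n$ sequential rounds, and each round contains $l$ parallel invocations of $\mathrm{Permute}$. Throughout, I would use that $\Pi_\mathrm{share}$, $\Pi_\mathrm{open}$, $\Pi_\mathrm{mul}$ and $\Pi_\mathrm{challenge}$ each cost $O(n)$ communication and computation. I would also use that $\Pi_\mathrm{inner}$ on length-$k$ vectors costs $O(n)$ communication but $O(nk)$ computation.

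\emph{Offline.} By Theorem~\ref{theorem:Complexity of SharePerm}, each $\mathrm{SharePerm}$ call costs $O(\frac{k}{\log k}nm\log m)$ communication and computation in $O(\log\log k)$ rounds. Multiplying by $n$ dealers gives $O(\frac{k}{\log k}n^2m\log m)$ for both communication and computation. For rounds, the loop over $w$ is written sequentially, but no call depends on the output of another. So all $n$ calls run in parallel, and the round complexity stays $O(\log\log k)$. The binary and permutation checks can also be batched across dealers, although that is not needed for the asymptotic bound. I would state this parallelization explicitly, since it is the only non-mechanical point in the offline half.

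\emph{Online.} I would first bound a single $\mathrm{Permute}(\ll\pi\rr_k,\ll\mathbf{v}\rr)$ on one length-$m$ vector. By Definition~\ref{definition: shared permutation} it has $s=O(\frac{\log m}{\log k})$ layers, and each layer has $m/k$ blocks.
\begin{itemize}
\item Each block computes a $k\times k$ matrix--vector product. That is $k$ calls to $\Pi_\mathrm{inner}$ on length-$k$ vectors, costing $O(nk)$ communication and $O(nk^2)$ computation.
\item Per layer this gives $O(nm)$ communication and $O(nmk)$ computation.
\item All blocks in a layer act on disjoint tasks $T_{i,j}$, so they run in parallel in $O(1)$ rounds.
\end{itemize}
Hence one $\mathrm{Permute}$ costs $O(\frac{nm\log m}{\log k})$ communication, $O(\frac{nmk\log m}{\log k})$ computation and $O(\frac{\log m}{\log k})$ rounds.

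The input consists of $m$ vectors of length $l$. Equivalently, it is $l$ column vectors of length $m$, and these are permuted in parallel. That multiplies communication and computation by $l$ but leaves the rounds unchanged. The $n$ permutations must be applied one after another, since each takes the previous output as input. This multiplies all three costs by $n$, giving $O(\frac{l}{\log k}n^2m\log m)$ communication, $O(\frac{kl}{\log k}n^2m\log m)$ computation and $O(\frac{n\log m}{\log k})$ rounds.

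The main obstacle is bookkeeping rather than any deep step. One point is to correctly attribute the $O(n)$-communication, $O(nk)$-computation asymmetry of $\Pi_\mathrm{inner}$, which is what produces the extra factor $k$ in online computation but not in communication. The other is to justify the parallelism claims: across dealers offline, and across the disjoint tasks within a layer online. These are what keep the round counts at $O(\log\log k)$ and $O(\frac{n\log m}{\log k})$ respectively.
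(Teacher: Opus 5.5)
Your proposal is correct and follows the paper's proof. The offline bound comes from $n$ independent calls to $\mathrm{SharePerm}$ run in parallel, which keeps $O(\log\log k)$ rounds. The online bound comes from $n$ sequential stages of $l$ parallel $\mathrm{Permute}$ calls, and you re-derive the per-call cost of $\mathrm{Permute}$ exactly as the paper does in its separate complexity theorem for $\mathrm{Permute}$: $k$ calls to $\Pi_\mathrm{inner}$ per block, $m/k$ parallel blocks per layer, and $O(\frac{\log m}{\log k})$ layers.
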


\rev{
The above bounds have a simpler interpretation in the common setting where
the number of parties $n$ and the vector length $l$ are either treated as
fixed or grow much more slowly than the number of shuffled items $m$.
The parameter $k$ specifies the size of the small permutations used in the
decomposition: an $m$-permutation is represented by
$O(\frac{m\log m}{k\log k})$ small $k$-permutations, organized into
$O(\frac{\log m}{\log k})$ decomposition layers. After suppressing factors
depending on $n$ and $l$, the offline communication and computation costs
of \textsf{Shuffle1} scale as $O(\frac{km\log m}{\log k})$, while the
online communication scales as $O(\frac{m\log m}{\log k})$, the online
computation scales as $O(\frac{km\log m}{\log k})$, and the online round
complexity scales as $O(\frac{\log m}{\log k})$.
Thus, a larger $k$ reduces the number of decomposition layers and the
number of small $k$-permutations. This improves the online communication
and online round complexity of \textsf{Shuffle1}. However, a larger $k$
also makes each $k\times k$ permutation matrix more expensive to generate
and to apply. Consequently, it increases the preprocessing cost in the
offline phase and the computation cost in both the offline and online
phases. Therefore, \textsf{Shuffle1} provides a direct trade-off: larger
$k$ reduces online communication and rounds, while smaller $k$ reduces
preprocessing and computation.
}

\section{Shuffle Protocol with Linear Online Communication} \label{sec:shuffle}
In this section, we present our second shuffle protocol.
By leveraging the maliciously secure shuffle correlation technique of
\cite{gao2024linearshuffle}, we obtain a shuffle protocol with
$O(nml)$ online communication and
$O\!\left(\frac{k+l}{\log k} \, n^2 m \log m\right)$ offline communication.
For clarity of exposition, we first describe the construction for the case
$l=1$, that is, shuffling individual vector entries.

At a high level, the semi-honest shuffle correlation operates as follows.
Suppose the parties share a vector $\ll \mathbf{x}\rr$ and an independent
random vector $\ll \mathbf{r}\rr$. Each party $P_i$ samples a random
permutation $\pi_i$ and secret-shares a random vector
$\ll \mathbf{r}_i\rr$ among all parties. In the online phase, the parties
first open
$
\ll \mathbf{y}_0\rr = \ll \mathbf{x}\rr + \ll \mathbf{r}\rr
$
to $P_1$. Subsequently, each party $P_i$ sends to $P_{i+1}$ the value
\[
\mathbf{y}_i \gets \pi_i(\mathbf{y}_{i-1}) + \mathbf{r}_i ,
\]
while the last party $P_n$ broadcasts $\mathbf{y}_n$. To recover the
shuffled output, the parties jointly remove the random mask by subtracting
\[
\ll \mathbf{\Delta}\rr
= \pi_n\bigl(\pi_{n-1}(\cdots \pi_1(\ll \mathbf{r}\rr) + \ll \mathbf{r}_1\rr \cdots)
+ \ll \mathbf{r}_{n-1}\rr\bigr) + \ll \mathbf{r}_n\rr
\]
from $\mathbf{y}_n$.

Accordingly, the primary objective of the offline phase is to generate
the correction term $\ll \mathbf{\Delta}\rr$.
\cite{gao2024linearshuffle} show that this can be accomplished efficiently
using an ideal permutation functionality
\[
\ll \pi(\mathbf{x})\rr \gets \Pi_{\mathrm{perm}}(P_i : \pi, \ll \mathbf{x}\rr),
\]
which can be instantiated by our permutation protocol in
Section~\ref{sec:permutation}. They further demonstrate how to enhance
the construction to achieve malicious security, i.e., security with abort,
by jointly shuffling $\beta \mathbf{x}$ alongside $\mathbf{x}$ and checking
the correctness of messages.

For completeness, we provide a full specification of the protocol below.
As noted above, the resulting protocols initially achieve security with abort.
In Section~\ref{sec:GOD}, we show how to further enhance them to achieve
guaranteed output delivery (GOD).

\subsection{Correlation Check of Public Values}
The goal of the protocol
$\mathrm{CorrCheck}(\mathbf{a}, \mathbf{b}, \ll \beta \rr, \ll \mathbf{r} \rr)$
is to allow a designated party $P_w$ to check whether the relation
$\beta \mathbf{a} = \mathbf{b} + \mathbf{r}$ holds.
For an honest $P_w$, the protocol aborts with overwhelming probability
whenever the correlation is violated.

This protocol is used in the online phase of the shuffle protocol, where
each party $P_i$ sends vectors $\mathbf{a}$ and $\mathbf{b}$ to $P_{i+1}$.
Party $P_{i+1}$ can invoke $\mathrm{CorrCheck}$ to ensure that the received
messages are correct. If an honest $P_{i+1}$ accepts, this guarantees that
$P_i$ has sent correct messages. As shown in \cite{gao2024linearshuffle},
as long as each honest $P_w$ receives correct messages, the permutation
chosen by $P_w$ remains hidden.

To perform the check, party $P_w$ first samples a random challenge
$\lambda \gets \mathbb{K}$ locally. It then computes
\[
u \gets \sum_{i=1}^m \lambda^{i-1}\mathbf{a}(i),
\qquad
v \gets \sum_{i=1}^m \lambda^{i-1}\mathbf{b}(i),
\]
and broadcasts $\lambda, u, v$.
All parties then jointly compute
\[
\ll s \rr \gets \ll \beta \rr \cdot u - v
    - \sum_{i=1}^m \lambda^{i-1}\ll \mathbf{r}(i)\rr,
\]
and open $\ll s \rr$ to check whether it is zero.
Apart from the broadcast of $\lambda$, the only communication comes from
calls to $\Pi_\mathrm{mul}$ and $\Pi_\mathrm{open}$.

The protocol is formally described in Algorithm~\ref{alg:CorrCheck}.

    \begin{algorithm}
        \caption{$\mathrm{CorrCheck}(P_w, \mathbf{a}, \mathbf{b}, \ll \beta \rr, \ll \mathbf{r} \rr )$}\label{alg:CorrCheck}
        \begin{algorithmic}
        \Require $\mathbf{a}$ and $\mathbf{b}$ are plaintext vectors and known to $P_w$.
        \Ensure If $P_w = \emptyset$ or $P_w$ is honest, abort w.h.p. if $\beta\mathbf{a} \neq \mathbf{b} + \mathbf{r}$.
    \State Let $m$ be the length of $\mathbf{a}$, $\mathbf{b}$, and $\ll \mathbf{r}\rr$.\;
    \If{$P_w\neq \emptyset$}
        \State $P_w$ draws $\lambda\gets \mathbb{K}$ uniformly.\;
        \State $P_w$ computes $u\gets \sum_{i=1}^{m}\lambda^{i-1}\mathbf{a}(i)$.\;
        \State $P_w$ computes $v\gets \sum_{i=1}^{m}\lambda^{i-1}\mathbf{b}(i)$.\;
        \State $P_w$ broadcasts $\lambda, u, v$.\;
    \Else \Comment{$\mathbf{a}$ and $\mathbf{b}$ are known to all parties.}
        \State $\lambda \gets \Pi_\mathrm{challenge}()$\;
        \State All parties locally compute $u\gets \sum_{i=1}^{m}\lambda^{i-1}\mathbf{a}(i)$.\;
        \State All parties locally compute $v\gets \sum_{i=1}^{m}\lambda^{i-1}\mathbf{b}(i)$.\;
    \EndIf
    \State $\ll s\rr \gets\ll \beta \rr u - v - \sum_{i=1}^m \lambda^{i-1}\ll \mathbf{r}(i)\rr$\;
    \State $\ll c \rr \gets \Pi_\mathrm{rand}(\mathbb{K})$\;
    \State $\ll s\rr \gets \Pi_\mathrm{mul}(\ll s\rr, \ll c\rr)$\;
    \State $s\gets \Pi_\mathrm{open}(\ll s \rr)$\;
    \State All parties abort if $s\neq 0$.\;
        \end{algorithmic}
    \end{algorithm}

\subsection{Offline Phase} \label{sec:offline shuffle protocol}
The main task of the offline phase is to generate a shuffle correlation,
i.e., a collection of random values satisfying specific correlation
constraints. The shuffle correlation is defined below.

\begin{definition}[$m$-Shuffle Correlation from \cite{gao2024linearshuffle}]\label{def:malicious_cor}
    The $m$-shuffle correlation is defined as
    $$ \mathrm{cor} = \left\{\begin{matrix}
        \ll \beta \rr  & \ll \mathbf{r} \rr& \ll \beta \mathbf{r} \rr & \ll \mathbf{s}\rr & \\
        \pi_1 & \pi_2 & \cdots & \pi_n\\
        \ll \pi_1(\mathbf{r}_1^\prime)\rr& \ll \pi_2(\mathbf{r}_2^\prime)\rr& \cdots & \ll \pi_n(\mathbf{r}_n^\prime)\rr \\
        &\mathbf{z}_2 & \cdots & \mathbf{z}_n\\
    \end{matrix}\right\},$$
    where
    \begin{enumerate}
        \item Each $\pi_i$ is an $m$-permutation chosen by party $P_i$ and known only to it;
        if $P_i$ is honest, then $\pi_i$ is uniform over all $m$-permutations.
        \item $\ll \beta \rr$ is a uniformly random secret-shared element of $\mathbb{K}$.
        \item $\ll \mathbf{r}\rr, \ll\mathbf{s}\rr, \ll \mathbf{r}_1^\prime\rr, ..., \ll
        \mathbf{r}_n^\prime\rr$ are secret-shared vectors of length $m$, with each
        entry independently uniformly random over $\mathbb{K}$.
        Moreover, $\ll \beta\mathbf{r}\rr$ is a secret sharing of $\beta\cdot\mathbf{r}$.
        \item $\mathbf{z}_i = (\mathbf{z}_{i,1}, \mathbf{z}_{i,2})$, where
        each $\mathbf{z}_{i,j}$ is a vector of length $m$.
        The entries of $\mathbf{z}_{i,1}$ are uniformly random, subject to the constraint
        $$\left\{\begin{aligned}
            \mathbf{z}_{i,2} &= \beta\mathbf{z}_{i,1} + \pi_{i-1}(\mathbf{r}_{i-1}^\prime) - \mathbf{r}_i^\prime\quad \forall i=2,3,...,n,\\
            \mathbf{s} &= \pi_n(\pi_{n-1}(
                \cdots \pi_2(\pi_1(\mathbf{r})-\mathbf{z}_{2,1})-
                \mathbf{z}_{3,1}\cdots)-\mathbf{z}_{n,1}).
        \end{aligned}\right.$$
    \end{enumerate}
\end{definition}

To generate a shuffle correlation, the parties first generate a random
$\ll \beta \rr$ and $2n$ random vectors of length $m$ over $\mathbb{K}$,
denoted by
$$ \ll \mathbf{r}_1 \rr, \ll \mathbf{r}_2 \rr, ..., \ll \mathbf{r}_n \rr,
\ll \mathbf{r}^\prime_1 \rr, \ll \mathbf{r}^\prime_2 \rr, ..., \ll \mathbf{r}^\prime_n \rr. $$
The parties then call $\Pi_\mathrm{mul}$ to compute the scalar-vector
product of \(\llbracket \beta \rrbracket\) with each
\(\llbracket \mathbf{r}_i \rrbracket\), yielding
$$ \ll \beta \mathbf{r}_1 \rr, \ll \beta \mathbf{r}_2 \rr, ..., \ll \beta \mathbf{r}_n \rr. $$

The parties then invoke $\mathrm{Permute}$ $n$ times, once for each
$i\in [n]$, to compute
$$
    \ll \pi_i(\mathbf{r}_i),
    \pi_i(\beta\mathbf{r}_i),
    \pi_i(\mathbf{r}^\prime_i) \rr
    \gets \mathrm{Permute}(P_i: \pi_i, \ll \mathbf{r}_i, \beta\mathbf{r}_i, \mathbf{r}^\prime_i \rr).
$$

 For each $i\geq 2$, the parties compute
$$\begin{aligned}
    \ll \mathbf{z}_{i,1} \rr &\gets \ll \pi_{i-1}(\mathbf{r}_{i-1}) \rr - \ll \mathbf{r}_i \rr,\\
    \ll \mathbf{z}_{i,2} \rr &\gets \ll \pi_{i-1}(\beta\mathbf{r}_{i-1}) \rr + \ll \pi_{i-1}(\mathbf{r}_{i-1}^\prime) \rr - \ll \beta\mathbf{r}_i \rr - \ll \mathbf{r}^\prime_i \rr.
\end{aligned}$$
Note that $\ll\mathbf{z}_i\rr\coloneqq (\ll \mathbf{z}_{i,1} \rr, \ll \mathbf{z}_{i,2} \rr)$
is a vector of length $2m$.

The parties then open each $\ll \mathbf{z}_i \rr$ to $P_i$ for each
$i\geq 2$. The shuffle correlation returned by this protocol is
$$ \mathrm{cor}\coloneqq \left\{\begin{matrix}
    \ll\beta\rr & \ll \mathbf{r}_1\rr & \ll\beta \mathbf{r}_1\rr & \ll\pi_n(\mathbf{r}_n)\rr\\
    \pi_1 & \pi_2 & ... & \pi_n\\
    \ll \pi_1(\mathbf{r}_1^\prime)\rr& \ll \pi_2(\mathbf{r}_2^\prime)\rr& \cdots & \ll \pi_n(\mathbf{r}_n^\prime)\rr \\
    & \mathbf{z}_2 & \cdots & \mathbf{z}_n\\
\end{matrix} \right\}.
$$
Note that each $\mathbf{z}_i$ is held as plaintext by party $P_i$ for
$i\geq 2$, while variables enclosed in brackets are shared among all
parties.

This protocol is formally described in Algorithm~\ref{alg:shuffle off}.

\begin{algorithm}
    \caption{$\mathrm{cor} \gets \mathrm{Shuffle2}_\mathrm{off}(P_1:\pi_1, ..., P_n:\pi_n)$}\label{alg:shuffle off}
    \begin{algorithmic}
    \Require For each honest $P_i$, $\pi_i$ is sampled uniformly from all $m$-permutations.
    \Ensure Output a shuffle correlation.
    \State $\ll \beta \rr\gets\Pi_\mathrm{rand}(\mathbb{K})$.\;
    \For{$i=1$ to $n$} \textbf{parallel}
        \State $\ll \mathbf{r}_i(j) \rr\gets \Pi_\mathrm{rand}(\mathbb{K})$ for $j\in [m]$\;
        \State $\ll \mathbf{r}_i^\prime(j) \rr\gets \Pi_\mathrm{rand}(\mathbb{K})$ for $j\in [m]$\;
        \State $\ll \beta \mathbf{r}_i \rr \gets \Pi_\mathrm{mul}(\ll \beta \rr, \ll \mathbf{r}_i\rr)$\;
        \State $\ll\pi_i\rr\gets \mathrm{SharePerm}(k, P_i:\pi_i)$\;
        \State $\ll \pi_i(\mathbf{r}_i),
            \pi_i(\beta\mathbf{r}_i),
            \pi_i(\mathbf{r}^\prime_i) \rr
            \gets \mathrm{Permute}(\ll\pi_i\rr,
            \ll \mathbf{r}_i, \beta\mathbf{r}_i, \mathbf{r}^\prime_i \rr)$\;
        \If{$i\geq 2$}
            \State $\ll \mathbf{z}_{i,1} \rr \gets \ll \pi_{i-1}(\mathbf{r}_{i-1}) \rr - \ll \mathbf{r}_i \rr$\;
            \State $\ll \mathbf{z}_{i,2} \rr \gets \ll \pi_{i-1}(\beta\mathbf{r}_{i-1}) \rr - \ll \beta\mathbf{r}_i \rr + \ll \pi_{i-1}(\mathbf{r}_{i-1}^\prime) \rr - \ll \mathbf{r}^\prime_i \rr$\;
            \State $ \ll \mathbf{z}_i \rr \coloneqq (\ll \mathbf{z}_{i,1} \rr, \ll \mathbf{z}_{i,2} \rr)$\;
        \EndIf
    \EndFor
    \For{$i=2$ to $n$} \textbf{parallel}
        \State $\mathbf{z}_i \gets \Pi_\mathrm{open}(\ll \mathbf{z}_i\rr, P_i)$\;
    \EndFor
    \State Return $\mathrm{cor}\coloneqq \left\{\begin{matrix}
        \ll\beta\rr & \ll \mathbf{r}_1\rr & \ll\beta \mathbf{r}_1\rr & \ll\pi_n(\mathbf{r}_n)\rr\\
        \pi_1 & \pi_2 & ... & \pi_n\\
        \ll \pi_1(\mathbf{r}_1^\prime)\rr& \ll \pi_2(\mathbf{r}_2^\prime)\rr& \cdots & \ll \pi_n(\mathbf{r}_n^\prime)\rr \\
        & \mathbf{z}_2 & \cdots & \mathbf{z}_n\\
    \end{matrix} \right\}$.\;
    \end{algorithmic}
\end{algorithm}

\subsection{Online Phase}
In the online phase, to shuffle $\ll \mathbf{x} \rr$, the parties first compute
$$\ll \mathbf{z}_1\rr \gets (\ll\mathbf{x}\rr-\ll\mathbf{r}_1\rr, \ll\beta\rr\ll\mathbf{x}\rr - \ll\beta\mathbf{r}_1\rr - \ll \mathbf{r}^\prime_1\rr),$$
and open $\ll \mathbf{z}_1 \rr$ to $P_1$.
Then $P_1$ applies its permutation and sends
$$ \mathbf{y}_1 \gets \pi_1(\mathbf{z}_1) $$
to party $P_2$.
Here, we slightly abuse notation by defining
$ \pi_1(\mathbf{z}_1) \coloneqq (\pi_1(\mathbf{z}_{1,1}), \pi_1(\mathbf{z}_{1,2})). $
This notation extends to all such vectors of length $2m$, although
$\pi$ is an $m$-permutation.

For $2\leq i\leq n-1$, $P_i$ first checks whether the message received
from $P_{i-1}$ is correct by invoking
$$ \mathrm{CorrCheck}(P_i, \mathbf{y}_{i-1, 1}, \mathbf{y}_{i-1, 2}, \ll\beta\rr, \ll \pi_{i-1}(\mathbf{r}^\prime_{i-1}) \rr). $$
If the check passes, $P_i$ computes
$$ \mathbf{y}_i \gets \pi_i(\mathbf{y}_{i-1} + \mathbf{z}_i) $$
and sends $\mathbf{y}_i$ to $P_{i+1}$; otherwise, all parties abort.

The last party $P_n$ performs the same check on $\mathbf{y}_{n-1}$.
If the check passes, $P_n$ broadcasts
$ \mathbf{y}_n \gets \pi_n(\mathbf{y}_{n-1} + \mathbf{z}_n) $.
Then all parties invoke
$$ \mathrm{CorrCheck}(\emptyset, \mathbf{y}_{n, 1}, \mathbf{y}_{n, 2}, \ll\beta\rr, \ll \pi_{n}(\mathbf{r}^\prime_{n}) \rr). $$
If the final check passes, the parties output the secret-shared shuffled
vector by computing
$$\ll \mathbf{x} ^\prime \rr \gets \mathbf{y}_{n,1} + \ll \pi_n(\mathbf{r}_n)\rr .$$

The above process is formally described in Algorithm~\ref{alg:shuffle on}.

To see the correctness of the protocol, note that if all parties are
honest, each $P_{i+1}$ receives $\mathbf{y}_i$ satisfying
$$ \begin{aligned}
    \mathbf{z}_{i+1, 1} &= \pi_i(\mathbf{r}_i) - \mathbf{r}_{i+1},\\
    \mathbf{z}_{i+1, 2} &= \beta \mathbf{z}_{i+1, 1} + \pi_i(\mathbf{r}^\prime_i) - \mathbf{r}^\prime_{i+1},\\
    \mathbf{y}_{i, 1} &= \pi_i\circ\cdots\circ\pi_1(\mathbf{x}) - \pi_i(\mathbf{r}_i), \\
    \mathbf{y}_{i, 2} &= \beta \mathbf{y}_{i, 1} - \pi_i(\mathbf{r}^\prime_i).
\end{aligned} $$
Then $P_{i+1}$ sends
$\mathbf{y}_{i+1} =\pi_{i+1}(\mathbf{y}_i + \mathbf{z}_{i+1})$.
The sequence $\mathbf{y}_i$ maintains the invariant
$$ \begin{aligned}
    \mathbf{y}_{i+1, 1} &= \pi_{i+1}\circ\cdots\circ\pi_1(\mathbf{x}) - \pi_{i+1}(\mathbf{r}_{i+1}), \\
    \mathbf{y}_{i+1, 2} &= \beta \mathbf{y}_{i+1, 1} - \pi_{i+1}(\mathbf{r}^\prime_{i+1}).
\end{aligned}
$$
Finally, the parties compute $\mathbf{y}_{n,1} + \pi_n(\mathbf{r}_n)$,
which is exactly $\pi(\mathbf{x})$.

\begin{algorithm}
    \caption{$\ll \mathbf{x}^\prime \rr\gets \mathrm{Shuffle2}_\mathrm{on}(\ll \mathbf{x}\rr)$}\label{alg:shuffle on}
    \begin{algorithmic}
    \Ensure Output the shuffled value $\ll\pi(\mathbf{x})\rr$ for some secret uniform $\pi$.
    \State Fetch a fresh $m$-shuffle correlation $\mathrm{cor}$ generated by $\mathrm{Shuffle2}_\mathrm{off}$.\;
    \State $\ll \mathbf{z}_1\rr \gets (\ll\mathbf{x}\rr-\ll\mathbf{r}_1\rr, \ll\beta\rr\ll\mathbf{x}\rr - \ll\beta\mathbf{r}_1\rr - \ll \mathbf{r}^\prime_1\rr)$\;
    \State Open $\ll \mathbf{z}_1\rr$ to $P_1$.\;
    \State $P_1$ computes $\mathbf{y}_1 \gets \pi_1(\mathbf{z}_1)$.\;
    \State $P_1$ sends $\mathbf{y}_1$ to $P_2$.\;
    \For{$i = 2$ to $n$}
        \State $\mathrm{CorrCheck}(P_i, \mathbf{y}_{i-1, 1}, \mathbf{y}_{i-1, 2}, \ll \beta\rr, \ll \pi_{i-1}(\mathbf{r}^\prime_{i-1})\rr)$\;
        \State $P_i$ computes $\mathbf{y}_i \gets \pi_i(\mathbf{y}_{i-1} + \mathbf{z}_i)$.\;
        \If{$i < n$}
            \State $P_i$ sends $\mathbf{y}_i$ to $P_{i+1}$.\;
        \Else
            \State $P_n$ broadcasts $\mathbf{y}_n$.\;
        \EndIf
    \EndFor
    \State $\mathrm{CorrCheck}(\emptyset, \mathbf{y}_{n, 1}, \mathbf{y}_{n, 2}, \ll \beta\rr, \ll \pi_{n}(\mathbf{r}^\prime_{n})\rr)$\;
    \State Return $\mathbf{y}_{n, 1} + \ll \pi_n(\mathbf{r}_n)\rr$.\;
    \end{algorithmic}
\end{algorithm}

\paragraph{Security of \textsf{Shuffle2}.}
\rev{
The shuffle correlation moves most of the expensive permutation work to the
offline phase while preserving the same security guarantee as a direct
permute-in-turn shuffle. The online messages are masked by independent
random values prepared in the correlation, so revealing or forwarding them
does not expose the underlying input vector or the honest parties'
permutations. The correctness of the online chain is enforced by the hidden
scalar \(\beta\) and the auxiliary masks \(\mathbf r'_i\). If a corrupted
party sends an inconsistent message, then the relation checked by
\textsf{CorrCheck} becomes false. Since \(\beta\) is unknown to the
adversary and the check is compressed by a fresh random challenge, such an
inconsistency is rejected with overwhelming probability. Consequently,
conditioned on all checks accepting, the online chain is consistent with the
preprocessed shuffle correlation and applies exactly the composed
permutation to the input, while the permutation remains hidden. This
argument follows the shuffle-correlation security of
\cite{gao2024linearshuffle}, together with the security of our
\textsf{Permute} protocol. The detailed security proofs can be found in
Appendix~\ref{sec:security proofs}.
}

\subsection{Complexity Analysis}\label{subsec:complexity of sec5}
We analyze the complexity of the above shuffle protocol, distinguishing
explicitly between its offline and online phases. The main results are
stated below, and the proofs are provided in
Appendix~\ref{subsec:proof of sec5}.
\begin{theorem}
    For $\mathrm{Shuffle2}_\mathrm{off}(P_1:\pi_1, ..., P_n:\pi_n)$
    described in Algorithm~\ref{alg:shuffle off},
    the offline communication, computation, and round complexities are
    $O(\frac{k}{\log k}n^2m\log m)$,
    $O(\frac{k}{\log k}n^2m\log m)$, and
    $O(\frac{\log m}{\log k}+\log\log k)$, respectively.
\end{theorem}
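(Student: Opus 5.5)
The plan is to walk through Algorithm~\ref{alg:shuffle off} line by line, charge each step against the primitive costs of $\mathcal{F}_{\mathrm{Shamir}}$ and against Theorem~\ref{theorem:Complexity of SharePerm}, and then sum. Throughout, $n$ party-indexed iterations run in parallel. So communication and computation add up over the $n$ iterations, while rounds take the maximum over a single iteration plus the final opening.

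\textbf{Step 1: the randomness-preparation steps.} Generating $\ll\beta\rr$, $\ll\mathbf{r}_i\rr$ and $\ll\mathbf{r}'_i\rr$ takes $O(nm)$ calls to $\Pi_\mathrm{rand}$. Computing the products $\ll\beta\mathbf{r}_i\rr$ takes $O(nm)$ calls to $\Pi_\mathrm{mul}$. Each call costs $O(n)$ communication and computation, so together these cost $O(n^2m)$ in $O(1)$ rounds. The linear combinations forming $\ll\mathbf{z}_i\rr$ are local. Opening each $\ll\mathbf{z}_i\rr$ (of length $2m$) to $P_i$ costs $O(n)$ per entry, hence $O(n^2m)$ in total and $O(1)$ rounds. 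All of this is dominated by the final bound.

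\textbf{Step 2: the $n$ invocations of $\mathrm{SharePerm}$.} By Theorem~\ref{theorem:Complexity of SharePerm}, each invocation costs $O(\frac{k}{\log k}nm\log m)$ communication and computation in $O(\log\log k)$ rounds. The $n$ invocations run in parallel, giving $O(\frac{k}{\log k}n^2m\log m)$ and $O(\log\log k)$ rounds. This term dominates the totals.

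\textbf{Step 3: the $\mathrm{Permute}$ calls, each on three length-$m$ vectors.} Permute has $s=O(\frac{\log m}{\log k})$ sequential layers. Each layer has $m/k$ parallel blocks, and each block needs $k$ invocations of $\Pi_\mathrm{inner}$ of length $k$ per vector. Per layer and per vector this costs $O(m)$ inner products. Each inner product costs $O(n)$ communication and $O(nk)$ computation, in $O(1)$ rounds.
\begin{itemize}
\item Communication over all layers, the constant number of vectors, and the $n$ parties is $O(\frac{n^2m\log m}{\log k})$, which is absorbed since $k\ge 2$.
\item Computation is $O(\frac{k}{\log k}n^2m\log m)$, matching the claimed bound.
\item Rounds are $O(\frac{\log m}{\log k})$, and they come strictly after $\mathrm{SharePerm}$.
\end{itemize}
Adding the rounds of Steps 1 to 3 gives $O(\frac{\log m}{\log k}+\log\log k)$.

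\textbf{Main obstacle: computation.} The only delicate point is the computation accounting. Each block is a $k\times k$ shared matrix times a shared vector, so a naive count gives $k^2$ local multiplications per block. I must confirm this yields $O(mk)$ rather than something larger per layer. The count is $(m/k)\cdot k^2 = mk$ multiplications per layer, each scaled by $O(n)$ share operations, times $n$ parties and $s$ layers. This stays within $O(\frac{k}{\log k}n^2m\log m)$, so no step exceeds the stated bound.
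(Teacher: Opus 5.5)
Your proposal is correct and follows the paper's proof. In both, the $\Pi_\mathrm{rand}$/$\Pi_\mathrm{mul}$/$\Pi_\mathrm{open}$ steps cost only $O(n^2m)$, and the bottleneck is the $n$ parallel calls to $\mathrm{SharePerm}$ followed by $\mathrm{Permute}$ on three vectors, with rounds adding to $O(\log\log k)+O(\frac{\log m}{\log k})$. The only difference is that you recompute the $\mathrm{Permute}$ cost inline, charging $O(n)$ communication and $O(nk)$ computation per length-$k$ inner product, rather than citing Theorem~\ref{theorem:Complexity of Permute}; this yields the same bounds.
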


\begin{theorem}
    For $\mathrm{Shuffle2}_\mathrm{on}(\ll \mathbf{v}\rr)$ described in
    Algorithm~\ref{alg:shuffle on}, the communication, computation, and
    round complexities are $O(nm+n^2)$, $O(n^2m)$, and $O(n)$,
    respectively. Note that when $n=O(m)$, the communication complexity
    simplifies to $O(nm)$.
\end{theorem}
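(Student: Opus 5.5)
The plan is a line-by-line cost accounting of Algorithm~\ref{alg:shuffle on}. Each primitive of $\mathcal{F}_{\mathrm{Shamir}}$ is charged $O(n)$ communication per call, a broadcast of one element of $\mathbb{K}$ is charged $O(n)$, and a point-to-point message of a length-$2m$ vector is charged $O(m)$. The work splits into three parts: the preparation and opening of $\ll \mathbf{z}_1\rr$, the $n-1$ sequential hops $P_{i-1}\to P_i$ with their $\mathrm{CorrCheck}$ calls, and the final broadcast together with the last $\mathrm{CorrCheck}$ and the output computation.

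\textbf{Communication.} Computing $\ll\beta\rr\ll\mathbf{x}\rr$ takes $m$ calls to $\Pi_\mathrm{mul}$, i.e.\ $O(nm)$. Opening the length-$2m$ vector $\ll\mathbf{z}_1\rr$ to $P_1$ takes $O(nm)$, since each party sends one share per entry. Each hop $P_i\to P_{i+1}$ sends $2m$ field elements, so the $n-1$ hops cost $O(nm)$ in total.

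Each $\mathrm{CorrCheck}$ consists of one broadcast of $(\lambda,u,v)$ or one call to $\Pi_\mathrm{challenge}$, one call to $\Pi_\mathrm{rand}$, one to $\Pi_\mathrm{mul}$, and one to $\Pi_\mathrm{open}$, for $O(n)$ in all. The linear combination $\sum_i\lambda^{i-1}\ll\mathbf{r}(i)\rr$ and the scalar product $\ll\beta\rr u$ are local operations. With $n$ checks this gives $O(n^2)$.

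The final broadcast of $\mathbf{y}_n$ costs $O(nm)$. The output $\mathbf{y}_{n,1}+\ll\pi_n(\mathbf{r}_n)\rr$ is local. Summing the three parts gives $O(nm+n^2)$, which is $O(nm)$ when $n=O(m)$.

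\textbf{Computation.} Each $\mathrm{CorrCheck}$ requires every party to compute the length-$m$ combination over its shares, so all $n$ parties together spend $O(nm)$ per check. Over $n$ checks this is $O(n^2m)$. The designated $P_w$ additionally spends $O(m)$ computing $u,v$, and each $P_i$ spends $O(m)$ applying $\pi_i$ and adding $\mathbf{z}_i$. Preparing $\ll\mathbf{z}_1\rr$ and opening it costs $O(nm)$, and receiving the broadcast of $\mathbf{y}_n$ costs $O(nm)$. The total is dominated by $O(n^2m)$.

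\textbf{Rounds.} The preparation and opening of $\ll\mathbf{z}_1\rr$ take $O(1)$ rounds. The loop is inherently sequential, since $P_i$ needs $\mathbf{y}_{i-1}$ before it can act. Each iteration uses $O(1)$ rounds: one message, one broadcast, and constant-round $\Pi_\mathrm{rand}$, $\Pi_\mathrm{mul}$ and $\Pi_\mathrm{open}$. This gives $O(n)$ rounds overall.

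The main point needing care is the $n^2$ term. The $\mathrm{CorrCheck}$ calls must be shown to cost only $O(n)$ each, independent of $m$. This holds because the check compresses the length-$m$ vectors into single values $u,v$ via the challenge $\lambda$. I would also confirm that the $\Pi_\mathrm{challenge}$ overhead noted earlier is $O(n)$, so that the check cost does not pick up an extra factor.
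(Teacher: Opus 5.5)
Your proposal is correct and matches the paper's proof: each $\mathrm{CorrCheck}$ costs $O(n)$ communication, $O(nm)$ computation and $O(1)$ rounds, to which you add the $O(nm)$ cost of the $\Pi_\mathrm{mul}$ calls, the $n-1$ point-to-point messages of length $O(m)$ and the final broadcast of $\mathbf{y}_n$, and you multiply the constant per-hop rounds by $n$. Your accounting is slightly more explicit than the paper's, for example in counting the opening of $\llbracket \mathbf{z}_1 \rrbracket$ to $P_1$ and noting that $\llbracket\beta\rrbracket u$ is local, but the decomposition and the resulting bounds are the same.
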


\begin{corollary}
    For shuffling the rows of an $m\times l$ matrix, the shuffle protocol
    described in Algorithms~\ref{alg:shuffle off} and~\ref{alg:shuffle on}
    requires $O(nml+n^2)$ online communication,
    $O(n^2ml)$ online computation, and $O(n)$ online rounds,
    with an offline overhead of
    $O(\frac{k+l}{\log k}n^2m\log m)$ communication,
    $O(\frac{kl}{\log k}n^2m\log m)$ computation, and
    $O(\frac{\log m}{\log k}+\log\log k)$ rounds.
\end{corollary}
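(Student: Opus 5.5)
The plan is to reduce the $m\times l$ case to the $l=1$ analysis in the two preceding theorems, tracking exactly which steps scale with $l$.

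\textbf{Offline phase.} Sharing a permutation does not depend on $l$. The per-vector costs come only from $\Pi_\mathrm{rand}$, $\Pi_\mathrm{mul}$, $\mathrm{Permute}$ and the openings of $\ll\mathbf{z}_i\rr$, and each of these now acts on $l$ columns.

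1. In $\mathrm{Shuffle2}_\mathrm{off}$, the vectors $\mathbf{r}_i,\mathbf{r}_i',\beta\mathbf{r}_i$ become $m\times l$ matrices. Generating them costs $O(nml)$ invocations of $\Pi_\mathrm{rand}$ and $\Pi_\mathrm{mul}$ per $i$, so $O(n^2ml)$ communication in total.
2. Each party still runs one $\mathrm{SharePerm}$. By Theorem~\ref{theorem:Complexity of SharePerm}, this costs $O(\frac{k}{\log k}nm\log m)$ communication and computation and $O(\log\log k)$ rounds. Over $n$ parties this gives $O(\frac{k}{\log k}n^2m\log m)$.
3. $\mathrm{Permute}$ is now applied to $3l$ column vectors instead of $3$. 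Each application performs $s\cdot m/k$ matrix--vector products, i.e.\ $O(\frac{m\log m}{\log k})$ calls to $\Pi_\mathrm{inner}$ on length-$k$ vectors. Each call costs $O(n)$ communication and $O(nk)$ computation. Per party and column this is $O(\frac{nm\log m}{\log k})$ communication and $O(\frac{knm\log m}{\log k})$ computation, over $s=O(\frac{\log m}{\log k})$ sequential rounds. Multiplying by $nl$ gives $O(\frac{l}{\log k}n^2m\log m)$ communication and $O(\frac{kl}{\log k}n^2m\log m)$ computation.
4. Opening the $\mathbf{z}_i$, now of size $2ml$, costs $O(n^2ml)$, which is dominated by the previous terms.

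Summing gives $O(\frac{k+l}{\log k}n^2m\log m)$ communication and $O(\frac{kl}{\log k}n^2m\log m)$ computation, using $k\ge 2$ so that $k+l\le kl$ up to constants. The rounds are $O(\log\log k)$ for sharing followed by $O(\frac{\log m}{\log k})$ for the sequential layers of $\mathrm{Permute}$.

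\textbf{Online phase.} I would reuse the online theorem's accounting with vectors of length $ml$.

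1. Opening $\ll\mathbf{z}_1\rr$ costs $O(nml)$ communication. It requires $ml$ products $\beta x$, each via $\Pi_\mathrm{mul}$ at $O(n)$ cost, again $O(nml)$.
2. Passing the $\mathbf{y}_i$ along the chain costs $O(ml)$ per hop, and the final broadcast costs $O(nml)$.
3. Each $\mathrm{CorrCheck}$ compresses its vectors with a single challenge $\lambda$ over all $ml$ entries, by indexing the entries linearly. It therefore costs only $O(n)$ communication and one multiplication and opening. The $n$ checks contribute $O(n^2)$, giving $O(nml+n^2)$ communication in total.
4. In each check every party computes $\sum\lambda^{i-1}\ll\mathbf{r}(i)\rr$ locally, costing $O(ml)$ per party per check. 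Over $n$ checks and $n$ parties this is $O(n^2ml)$ computation.
5. The rounds remain $O(n)$ because the chain is sequential, each hop having a constant-round check.

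\textbf{Main obstacle.} The main point needing care is that $\mathrm{CorrCheck}$ must be batched across all $l$ columns with a single challenge rather than run once per column. Otherwise an $O(n^2l)$ term appears in place of $O(n^2)$. I would justify the batching by noting that CorrCheck is defined for vectors of arbitrary length $m$, here taken as $ml$, with soundness error $ml/|\mathbb{K}|$.
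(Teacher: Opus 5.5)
Your proposal is correct and follows the same route the paper implicitly takes, since the corollary has no separate proof and is obtained by lifting the $l=1$ accounting of the two preceding theorems: $\mathrm{SharePerm}$ is independent of $l$, $\mathrm{Permute}$ scales linearly in the $3l$ columns, and the online chain runs on vectors of length $ml$ with a single batched $\mathrm{CorrCheck}$ per hop, which is what keeps the $n^2$ term free of a factor $l$. The only slip is in offline step 1, where there are $O(ml)$ invocations per $i$, not $O(nml)$; your $O(n^2ml)$ communication total is still right.
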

\rev{
The above bounds can be interpreted more directly in common parameter
settings. For the online phase of \textsf{Shuffle2}, the communication complexity
$O(nml+n^2)$ simplifies to $O(nml)$ in typical MPC deployments where $n$
is much smaller than $ml$, since the $n^2$ term is then dominated by $nml$.
In particular, if $n$ and $l$ are either treated as fixed or grow
much more slowly than $m$, the online communication and online computation
both scale linearly with the number of shuffled items $m$, while the online
round complexity is independent of $m$.
For the offline phase, after suppressing factors depending only on the
number of parties $n$, the offline communication scales as
$O(\frac{(k+l)m\log m}{\log k})$, the offline computation scales as
$O(\frac{klm\log m}{\log k})$, and the offline round complexity scales as
$O(\frac{\log m}{\log k}+\log\log k)$.
}

\paragraph{Choosing the parameter $k$.}
\rev{
For \textsf{Shuffle1}, a larger $k$ reduces online communication and rounds,
whereas a smaller $k$ reduces preprocessing and computation. For
\textsf{Shuffle2}, $k$ does not affect the online phase and only tunes the
offline phase. For both protocols, a simple empirical cost model can guide
the choice of $k$: the $k$-dependent part of the total running time can be
approximated as proportional to
$\frac{(1+\frac{1}{3l})k+r}{\log k}$, where $r$ is a platform-dependent
coefficient reflecting the ratio between local computation throughput and
network bandwidth.
}

\rev{
In practice, $k$ can be tuned
empirically according to the performance objective and system bottleneck.
A rough estimate of $r$ can be obtained as the ratio between the single-party
integer-operation throughput, measured in MOps/sec, and the pairwise network
bandwidth, measured in MB/sec. One may then test powers of 2 near the
minimizer of $\frac{(1+\frac{1}{3l})k+r}{\log k}$ and choose the best value
under the target deployment. Since $r$ is often large in practice, the total
running time is usually not very sensitive to $l$ when $l$ is large, nor to
moderate deviations from the optimal value of $k$. When the target $n$, $m$,
or $l$ is large, these trials can be performed on smaller representative
instances to reduce tuning time.
}

\subsection{Evaluation}\label{subsec:Evaluation}

The theoretical analysis presented above already indicates
that our scheme achieves excellent performance.
In particular, it achieves state-of-the-art asymptotic complexity,
while maintaining small constant factors, suggesting strong practical efficiency.
To provide a more concrete and intuitive demonstration of these advantages,
we present a simple experimental evaluation below.

We compare our two protocols, \textsf{Shuffle1} and \textsf{Shuffle2},
with the protocols of~\cite{lu2023rpm} and~\cite{keller2014ORAM}.
The construction in~\cite{lu2023rpm} produces permutations that are statistically close to uniform
and is specifically optimized for online performance,
while the protocol of~\cite{keller2014ORAM} achieves exact uniformity and represents the previous state of the art.
The source code of our experiments is publicly available\footnote{\url{https://github.com/GJCPP/ShamirShuffle}}.

\begin{landscape}
\begin{table}
\centering
\caption{Performance metrics vs. number of vectors ($m$) with $l=1$.
``---'' denotes missing data due to timeout (1200\,s).
For the same $m$, the best values for the online-related metrics and total
time are bolded.}
\label{tab:m_variation}
\scriptsize
\setlength{\tabcolsep}{2.0pt}
\begin{tabular}{|c|c|c|c|c|c|c|c|c|}
\hline
$m$ & Protocol & Online & Offline & Online & Offline & Online & Offline & Total \\
& & Comm (MB) & Comm (MB) & Rounds & Rounds & Time (s) & Time (s) & Time (s) \\

\hline
16 & \texttt{\cite{lu2023rpm}} & 0.004 & 1.213 & 32 & 22 & 0.643 & 0.624 & 1.267 \\
& \texttt{\cite{keller2014ORAM}} & 0.013 & 0.055 & 70 & 8 & 1.404 & 0.172 & 1.576 \\
& \texttt{Shuffle1} & 0.004 & 0.035 & 20 & 3 & 0.402 & 0.072 & 0.474 \\
& \texttt{Shuffle2} & \textbf{$<$0.001} & 0.042 & \textbf{6} & 8 & \textbf{0.120} & 0.178 & \textbf{0.298} \\
\hline
256 & \texttt{\cite{lu2023rpm}} & 0.098 & 70.164 & 32 & 22 & 0.675 & 6.406 & 7.081 \\
& \texttt{\cite{keller2014ORAM}} & 0.468 & 1.875 & 150 & 16 & 3.031 & 0.560 & 3.591 \\
& \texttt{Shuffle1} & 0.094 & 1.406 & 30 & 4 & 0.624 & 0.299 & 0.923 \\
& \texttt{Shuffle2} & \textbf{0.004} & 1.555 & \textbf{6} & 11 & \textbf{0.121} & 0.498 & \textbf{0.618} \\
\hline
1024 & \texttt{\cite{lu2023rpm}} & 0.391 & 550.863 & 32 & 22 & 0.804 & 50.691 & 51.496 \\
& \texttt{\cite{keller2014ORAM}} & 2.372 & 9.500 & 190 & 20 & 3.954 & 1.285 & 5.238 \\
& \texttt{Shuffle1} & 0.624 & 5.875 & 50 & 6 & 1.091 & 0.945 & 2.035 \\
& \texttt{Shuffle2} & \textbf{0.016} & 6.718 & \textbf{6} & 17 & \textbf{0.122} & 1.291 & \textbf{1.414} \\
\hline
4096 & \texttt{\cite{lu2023rpm}} & 1.563 & 4367.163 & 32 & 22 & 1.681 & 486.224 & 487.905 \\
& \texttt{\cite{keller2014ORAM}} & 11.499 & 46.000 & 230 & 24 & 5.316 & 4.775 & 10.092 \\
& \texttt{Shuffle1} & 4.000 & 27.000 & 80 & 9 & 2.024 & 3.683 & 5.707 \\
& \texttt{Shuffle2} & \textbf{0.063} & 31.875 & \textbf{6} & 26 & \textbf{0.129} & 4.492 & \textbf{4.621} \\
\hline
16384 & \texttt{\cite{lu2023rpm}} & --- & --- & --- & --- & --- & --- & ---\\
& \texttt{\cite{keller2014ORAM}} & 53.995 & 216.000 & 270 & 28 & 8.626 & 20.817 & 29.443 \\
& \texttt{Shuffle1} & 17.998 & 122.000 & 90 & 10 & 3.594 & 15.778 & 19.373 \\
& \texttt{Shuffle2} & \textbf{0.250} & 143.498 & \textbf{6} & 29 & \textbf{0.158} & 18.267 & \textbf{18.425} \\
\hline
65536 & \texttt{\cite{lu2023rpm}} & --- & --- & --- & --- & --- & --- & ---\\
& \texttt{\cite{keller2014ORAM}} & 247.999 & 992.000 & 310 & 32 & 20.463 & 92.394 & 112.858 \\
& \texttt{Shuffle1} & 127.999 & 632.000 & 160 & 17 & 12.385 & 69.447 & 81.832 \\
& \texttt{Shuffle2} & \textbf{1.000} & 773.999 & \textbf{6} & 50 & \textbf{0.274} & 80.083 & \textbf{80.357} \\
\hline
262144 & \texttt{\cite{lu2023rpm}} & --- & --- & --- & --- & --- & --- & ---\\
& \texttt{\cite{keller2014ORAM}} & 1119.994 & 4480.000 & 350 & 36 & 71.133 & 425.162 & 496.295 \\
& \texttt{Shuffle1} & 575.997 & 2848.000 & 180 & 19 & 45.586 & 315.835 & \textbf{361.421} \\
& \texttt{Shuffle2} & \textbf{4.000} & 3479.997 & \textbf{6} & 56 & \textbf{0.761} & 363.395 & 364.156 \\
\hline
\end{tabular}
\end{table}
\end{landscape}

\rev{
All protocols are implemented in \texttt{C++}, based on Shamir secret
sharing, in the semi-honest security model. The benchmark implementation
captures the core execution cost of the proposed shuffle constructions,
including permutation sharing, permutation application, and shuffle
correlation. This level of implementation is sufficient to illustrate
both the asymptotic and practical advantages of our constructions in the
dominant protocol operations, especially online communication, online
round complexity, and concrete online latency. It does not include the
additional checking, tag-generation, and state-maintenance mechanisms
needed for malicious security and guaranteed output delivery (GOD,
introduced in Section~\ref{sec:GOD}); these mechanisms are expected to
result only in modest constant-factor overheads. A more detailed
discussion is provided in Appendix~\ref{app:implementation-scope}.
}

All experiments are conducted over a 128-bit prime field with statistical
security parameter $\kappa = 40$.
Experiments are run on a host equipped with 32 processors, each an
Intel(R) Xeon(R) Gold 5122 CPU at 3.60\,GHz, running Ubuntu 18.04.6 LTS.
Each party is simulated as an independent process, and communication is
implemented via the Message Passing Interface (MPI). To emulate a
wide-area network (WAN) setting, each pair of parties is connected with a
bandwidth of 50\,MB/s and a round-trip time (RTT) of 20\,ms.

Table~\ref{tab:m_variation} presents the performance metrics as the
number of vectors ($m$) varies from 16 to 262,144, while keeping the
vector length ($l$) fixed at 1. The results demonstrate the significant
advantages of our protocols, particularly \textsf{Shuffle2}.

The key observations from Table~\ref{tab:m_variation} are:

\begin{itemize}
\item \textbf{Online Communication}: \textsf{Shuffle2} achieves a dramatic
reduction in online communication compared to prior work. For
$m = 262{,}144$, it requires only 4~MB, which is $280\times$ smaller
than~\cite{keller2014ORAM} (1120~MB). Compared to~\cite{lu2023rpm},
which is also specifically optimized for the online phase, \textsf{Shuffle2}
still achieves significantly lower online communication, demonstrating a
substantial improvement in online efficiency. In addition, \textsf{Shuffle2}
reduces online communication by $144\times$ compared to \textsf{Shuffle1}
(576~MB), showing that our refinement yields a substantial improvement
even over a strong base construction.

    \item \textbf{Online Rounds}: \textsf{Shuffle2} uses a constant number of online rounds (6), independent of $m$.
    In contrast, the number of rounds in \textsf{Shuffle1} and~\cite{keller2014ORAM} increases with the dataset size,
    reaching up to 180 and 350 rounds, respectively, at $m = 262{,}144$,
    while~\cite{lu2023rpm} also incurs a higher round complexity (e.g., 32 rounds).
    This constant-round property makes \textsf{Shuffle2} particularly suitable for high-latency environments.

\item \textbf{Online Time}: As a direct consequence of reduced communication
and constant rounds, \textsf{Shuffle2} achieves significantly lower online
latency. Even for $m = 262{,}144$, the online execution time remains below
1~second, whereas other protocols require tens to hundreds of seconds.
Although~\cite{lu2023rpm} reduces some online overhead, its online latency
remains noticeably higher than that of \textsf{Shuffle2}.

\item \textbf{Overall Cost}:
From an overall perspective, \textsf{Shuffle1} already achieves a significant
improvement over prior work such as~\cite{keller2014ORAM} in both total cost
and online efficiency, establishing a strong and practically efficient
foundation for secure shuffling. Building upon this foundation,
\textsf{Shuffle2} further optimizes the online phase while largely preserving
the total overhead. In contrast, although~\cite{lu2023rpm} improves the
online phase to some extent, it incurs prohibitively large offline
communication and runtime, which makes it impractical for large-scale
settings, as reflected by timeouts for large $m$. Therefore,
\textsf{Shuffle2} provides a strict improvement in online efficiency while
preserving a competitive overall cost. Generally, \textsf{Shuffle2} achieves
an excellent balance between online efficiency and total cost.
\end{itemize}

In summary, \textsf{Shuffle2} substantially improves the online phase while
maintaining a competitive total cost. It achieves low online communication,
constant online rounds, and sub-second online latency even for large
instances. These properties make it particularly suitable for latency-sensitive
secure computation and high-latency network settings.

\paragraph{When is \textsf{Shuffle1} preferable?}
\rev{
Although \textsf{Shuffle2} is preferable when online efficiency is the main
objective, \textsf{Shuffle1} remains a practical choice in some scenarios,
for example, when the online-latency requirement is relaxed and the overall
cost is prioritized. It directly applies our permutation-sharing primitive in the
permute-in-turn paradigm and avoids the shuffle correlation machinery,
thereby giving a simpler construction with less additional preprocessing
overhead. As a result, \textsf{Shuffle1} can have smaller constant factors
in its concrete overall communication and computation costs. This advantage
becomes more relevant in malicious or GOD instantiations, where
\textsf{Shuffle1} incurs less additional overhead; see
Appendix~\ref{app:implementation-scope}.
}

\section{Guaranteed Output Delivery}\label{sec:GOD}
This section focuses on enhancing the aforementioned secure-with-abort protocols
to achieve guaranteed output delivery (GOD),
a critical security property
that ensures honest parties obtain correct outputs
regardless of corrupted parties' behavior.
Our high-level approach is as follows.
For the protocols constructed purely from standard MPC primitives,
we can achieve GOD directly by replacing secure-with-abort
components with their GOD-enabled counterparts (e.g., \cite{goyal2020guaranteed}).
For the more specialized scenario --- namely,
the online phase of the second shuffle protocol,
where parties perform local computations and
exchange intermediate results ---
we further leverage the dispute control technique proposed
by \cite{beerliova2006dispute} to achieve GOD.
Consequently, all protocols proposed in this work are enhanced
from secure-with-abort to GOD protocols.

We next present the two key ingredients of our construction.
First, our scheme fully supports corruption identification,
which in turn enables GOD.
Second, our scheme leverages the authentication tag technique
of \cite{Ben-Sasson2012miniMPC},
thereby enabling the use of circuit division to achieve GOD
with an overhead linear in the circuit size.
In particular, authentication tags serve as commitments
to the shares produced within each circuit segment,
allowing honest parties to recover
a correct intermediate state in the presence of adversarial deviations,
rather than restarting.
Consequently, our GOD-enabled \textsf{SLIDE} achieves linear online
communication and rounds,
matching those of the linear-online-communication shuffle proposed in Section~\ref{sec:shuffle}.
Additionally,
we introduce further optimizations for our GOD-enabled protocols.

\subsection{Corruption Identification}
The key to achieving GOD lies in identifying corrupted parties
and subsequently excluding them from further computation.
We now show that our construction is robust enough to support
    corruption identification,
    in the sense that any failure of a correctness
    check will expose either a new corrupted party or a new
disputing pair.

The first step is to equip all underlying primitives with corruption
identification.
Concretely, each primitive supported by the ideal functionality
$\mathcal{F}_{\mathrm{Shamir}}$ is enhanced so that, upon failure, it reveals
either a new corrupted party or a new disputing pair, rather than
simply aborting.
For example, the multiplication protocol
$\Pi_{\mathrm{mul}}(\ll a\rr, \ll b\rr)$ now terminates in one of the following
three outcomes:
\begin{enumerate}
    \item The multiplication succeeds, and all honest parties hold
        consistent shares of $\ll ab\rr$.
    \item The multiplication fails and publicly identifies a new corrupted
        party $P_i \notin \mathrm{Corr}$.
    \item The multiplication fails and publicly identifies a new disputing
        pair $\{P_i, P_j\} \notin \mathrm{Disp}$.
\end{enumerate}
It is crucial that any reported corrupted party or disputing pair must be new,
so that the size of either $\mathrm{Corr}$ or $\mathrm{Disp}$ strictly
increases.
Moreover, these primitives are sound: an honest party is never falsely
identified as corrupted, nor will two honest parties ever be put into
dispute.
For primitives with public input, deviating from the publicly agreed input
is also detected as a failure.
Such primitives are provided by \cite{goyal2020guaranteed}.

We now argue that all protocols developed so far can be extended
to support corruption identification by replacing ``abort'' with identification of corrupted parties
or disputing pairs.

Recall that the generation of one-hot vectors relies on
$\mathrm{BinCheck}$ (Algorithm~\ref{alg:BinCheck}),
$\mathrm{Demux}$ (Algorithm~\ref{alg:Demux}), and
$\mathrm{OneHot}$ (Algorithm~\ref{alg:OneHot}).
These protocols only invoke basic MPC primitives and recursively call each
other; therefore, they directly inherit corruption identification.

For permutation sharing, we developed
$\mathrm{PermMat}$ (Algorithm~\ref{alg:PermMat}),
$\mathrm{PermCheck}$ (Algorithm~\ref{alg:PermCheck}),
$\mathrm{SharePerm}$ (Algorithm~\ref{alg:SharePerm}), and
$\mathrm{Permute}$ (Algorithm~\ref{alg:Permute}).
Protocols $\mathrm{PermMat}$ and $\mathrm{Permute}$ consist solely of calls
to previously established primitives.
If $\mathrm{PermCheck}$ fails with a nonzero final check value (as
opposed to a failure in a subroutine), then the party $P_w$ who specified
the permutation matrix must be corrupted.
Similarly, although the decomposition in $\mathrm{SharePerm}$ is performed
locally by $P_w$, any malformed permutation will be detected during
$\mathrm{PermMat}$ and $\mathrm{PermCheck}$, thereby identifying $P_w$ as
corrupted.
Conversely, if all checks pass, then the reconstructed permutation $\pi$ is
well-formed and serves as a valid input for $P_w$.

Building on these enhanced protocols, the first shuffle protocol
introduced in Section~\ref{sec:permutation}
therefore supports corruption identification.

For the shuffle protocol with linear communication complexity, we further
developed $\mathrm{CorrCheck}$ (Algorithm~\ref{alg:CorrCheck}),
$\mathrm{Shuffle2}_{\mathrm{off}}$ (Algorithm~\ref{alg:shuffle off}), and
$\mathrm{Shuffle2}_{\mathrm{on}}$ (Algorithm~\ref{alg:shuffle on}).
If $\mathrm{CorrCheck}$ fails due to a primitive failure, corrupted parties
are identified immediately.
The offline protocol again consists only of calls to primitives that
already support corruption identification.

When $\mathrm{CorrCheck}$ fails because a nonzero value $s \neq 0$ is
revealed, the misbehaving party is not immediately determined.
However, during $\mathrm{Shuffle2}_{\mathrm{on}}$, if party $P_i$ observes
such a failure, it immediately enters a dispute with $P_{i-1}$.
Indeed, as shown in \cite{gao2024linearshuffle}, if any party $P_j$
misbehaves by sending an incorrect message $\mathbf{y}_j$ to an honest
$P_{j+1}$, the correlation check at $P_{j+1}$ fails with overwhelming
probability.
Since all previous checks have passed, $P_{i-1}$ must have provided correct
inputs to $\mathrm{CorrCheck}$.
Thus, a current failure implies either \(P_{i-1}\) sent an
incorrect \(\mathbf{y}_{i-1}\) to \(P_i\), or \(P_i\)
used an incorrect \(\mathbf{y}_{i-1}\)
in the correlation check; hence, at least one of
$P_{i-1}$ or $P_i$ is corrupted.

If $P_{i-1}$ and $P_i$ are in dispute, the value $\mathbf{y}_{i-1}$ must
be relayed through a third party $P_{(i-1)\leftrightarrow i}$ that does not dispute
with either.
Letting \(P_{(i-1)\leftrightarrow i}\) access \(\mathbf{y}_{i-1}\)
does not compromise security,
since the analysis in \cite{gao2024linearshuffle} shows
that security is preserved even if all $\mathbf{y}_i$ values are broadcast
(broadcasting is avoided only to reduce communication overhead).
If the correlation check for $P_i$ fails again, the relay must
enter a public dispute with either
$P_{i-1}$ or $P_i$, thereby increasing $\mathrm{Disp}$.

Finally, if the last correlation check fails, then $P_n$ must be corrupted,
as all prior checks have succeeded and $P_n$ must have received correct
inputs.

In summary, all protocols proposed in this paper support corruption
identification.
Consequently, they can be strengthened to achieve GOD.

\subsection{Authentication Tag}
It remains to show that the authentication tag technique
of \cite{Ben-Sasson2012miniMPC}, presented below,
is applicable to our constructions and enables GOD with worst-case
computation linear in the circuit size.

\paragraph{Authentication Tag.}
To enable party $P_w$ to authenticate its share of a secret $\ll x\rr$,
the parties first decompose the secret as
\[
\ll x\rr = \sum_{i=1}^n \ll x(i)\rr,
\]
where $\ll x(i)\rr$ was shared earlier by party $P_i$.
Such a decomposition is always possible since all secrets in our
protocols are obtained via linear combinations of previously shared values.

For each $\ll \mathbf{x}(i)\rr$, party $P_w$ obtains one authentication
tag from each party $P_v$ that is not in dispute with $P_w$, where such a
party $P_v$ is referred to as a verifier.
Specifically, each verifier $P_v$ samples random values $\pmb{\mu}_{v\to w}$ and $u_{v\to w}$ from $\mathbb{K}$
and assists $P_w$ in computing
\[
\tau_{v\to w} \coloneqq \pmb{\mu}_{v\to w} \odot \mathbf{x}_w(i) + u_{v\to w},
\]
where $\mathbf{x}_w(i)$ denotes $P_w$'s share of $\mathbf{x}(i)$ and
$\odot$ denotes the inner product.
This computation is carried out using a mini-MPC protocol~\cite{Ben-Sasson2012miniMPC}, which is compatible with Shamir secret sharing.

Without knowledge of $(\pmb{\mu}_{v\to w}, u_{v\to w})$,
$P_w$ can forge a valid authentication tag only with probability at most $2^{-\kappa}$.
Consequently, if an inconsistency in $\ll \mathbf{x}\rr$ is later detected
and traced to $\ll \mathbf{x}(i)\rr$, party $P_i$ must accuse some party $P_j$
of misbehavior.
Upon an accusation, $P_j$ reveals $\mathbf{x}_j(i)$ together with the
corresponding authentication tags to all verifiers.
Each verifier independently checks the tag and votes to identify either $P_i$ or $P_j$ as corrupted;
a majority vote identifies a corrupted party.
Revealing $\mathbf{x}_j(i)$ is safe, since at least one of the two parties is corrupted.

\vspace{0.5em}

This technique is applicable to all our protocols, as they satisfy the
following property: upon termination, all honest parties hold
consistent shares.
Indeed, all shared values originate either from $\Pi_{\mathrm{share}}$
or from linear combinations thereof.
Any misbehavior occurring after protocol completion is therefore
handled by the authentication tag mechanism, which suffices to achieve GOD
with linear overhead.

\subsection{Optimizations}
Let $\mathcal{P}_\mathrm{active} = [n]\setminus \mathrm{Corr}$ denote the set of
parties remaining in the computation.
As observed in \cite{goyal2020guaranteed}, communication can be further reduced
by randomly selecting a subset $\mathcal{T}\subseteq \mathcal{P}_\mathrm{active}$
of size $t+1$ and letting only parties in $\mathcal{T}$ perform most of the computation.
This is sound since $t+1$ shares suffice to reconstruct a secret, and at least
one of these parties is honest.
The same optimization applies to our shuffle protocols.
Instead of having all parties permute in turn, under an honest-majority assumption,
it suffices to let only $t+1$ parties apply permutations sequentially to ensure
the uniformity of the resulting shuffle.
Consequently, the shuffle correlation can be generated for a smaller set of parties,
leading to reduced communication and computation costs.

We further remark that the offline and online phases of a shuffle protocol
can be cleanly separated into multiple segments.
This is because all secret-shared values in the correlation remain information-theoretically
hidden even if $\mathrm{CorrCheck}$ fails:
the opened value $s$ is either zero or a uniformly random element,
and thus reveals no information.
As a result, a shuffle correlation can be safely reused after a failure.

Finally, even protocols that appear atomic at a high level
(e.g., $\mathrm{Shuffle2}_\mathrm{off}$) admit decomposition into smaller
computation segments. This is because they ultimately consist of calls to
basic MPC primitives and therefore induce a divisible circuit structure.
They can thus be partitioned without compromising correctness or integrity,
which provides additional flexibility in implementation.

\section{Conclusion}

In this paper, we propose \textsf{SLIDE}, the first shuffle protocol that
simultaneously achieves uniformity, $O(nml)$ online communication, and
guaranteed output delivery (GOD). The core of \textsf{SLIDE} is our novel
permutation sharing technique, with which we construct a shuffle protocol
that outperforms all existing uniform shuffle constructions for Shamir
secret sharing. We further integrate the shuffle correlation technique of
\cite{gao2024linearshuffle} to obtain linear online complexity, and
finally extend the construction to GOD, completing the \textsf{SLIDE}
framework. As the shuffle operation is a fundamental building block for a
broad class of MPC protocols and primitives, our results have wide
applicability and directly benefit many practical MPC frameworks.

\section*{Acknowledgments}
This work was supported by the Natural Science Foundation on Frontier Leading Technology Basic Research Project of Jiangsu (No. BK20222001), the NSFC-62272215 and the 111 Center (No. B26023). Yuan Zhang and Sheng Zhong are the corresponding authors.

\bibliographystyle{SageH}
\bibliography{bib}
\appendix

\section{Security Proofs}\label{sec:security proofs}

\newcommand{\llb}{\llbracket}
\newcommand{\rrb}{\rrbracket}
\newcommand{\perm}{\Pi_\mathrm{perm}}
\newcommand{\open}{\Pi_\mathrm{open}}
\newcommand{\inner}{\Pi_\mathrm{inner}}
\renewcommand{\add}{\Pi_\mathrm{add}}
\newcommand{\challenge}{\Pi_\mathrm{challenge}}
\newcommand{\mul}{\Pi_\mathrm{mul}}
\newcommand{\Fshamir}{\mathcal{F}_\mathrm{Shamir}}

In this section, we analyze our protocols and prove that the proposed
shuffle protocols are secure. By virtue of the ideal functionality
$\Fshamir$ and the result of \cite{gao2024linearshuffle}, the security
largely follows from the definitions. Nevertheless, we provide some
intuition for the security of our protocols.

To simplify the analysis, we focus on detecting the adversary's
misbehavior. That is, we show that if the adversary deviates from the
protocol, the protocol aborts with overwhelming probability. This approach
is justified by standard simulation-based security: one constructs a
simulator that can simulate the adversary's view without knowing the
honest parties' inputs. In Shamir secret sharing, this simulation is
typically straightforward, because the simulator represents all honest
parties and can therefore reconstruct all values shared by the adversary.
Thus, in a typical proof, the simulator sends random values to the
adversary as the messages it receives, aborts if any check fails, and
finally modifies the honest parties' shares to make them consistent with
the output of the ideal functionality. It therefore suffices to argue that
any misbehavior leads to an abort with overwhelming probability, ensuring
that the simulator's output distribution coincides with the real
distribution.

Note that all basic operations in our construction rely on ideal
functionalities, which provide very strong security guarantees; for
example, even introducing an error in $\add$ triggers an immediate abort.
This strong condition is justified by the composition theorem, which
states that a protocol remains secure after ideal functionalities are
replaced with their secure implementations. Thus, any misbehavior against
the ideal functionalities is detected, and it remains only to prove that
misbehavior outside the ideal functionalities is detected.

Accordingly, we show that the protocol ends in one of two cases:
(1) the adversary misbehaves, and the protocol aborts with overwhelming
probability; or
(2) all parties act honestly, and the honest parties receive correct shares.

As shown by \cite{gao2024linearshuffle}, if the $\mathrm{Permute}$
protocol is secure, then $\mathrm{Shuffle2}_\mathrm{off}$ and
$\mathrm{Shuffle2}_\mathrm{on}$ are also secure. This is formalized in
Theorem~\ref{th:security of shuffle} below.

\begin{theorem}[\cite{gao2024linearshuffle}, informal]\label{th:security of shuffle}
Consider the ideal functionality
$$\llb \pi(\mathbf{v}) \rrb \gets \perm(P_w: \pi, \llb \mathbf{v} \rrb),$$
which takes a secret input $\pi$ from $P_w$ and securely permutes
$\llb \mathbf{v} \rrb$ into $\llb \pi(\mathbf{v}) \rrb$.

If $\mathrm{Permute}$ securely implements $\perm$, then
$\mathrm{Shuffle2}_\mathrm{off}$ and $\mathrm{Shuffle2}_\mathrm{on}$ are
secure by the composition theorem. More precisely:
\begin{itemize}
    \item If corrupted parties misbehave, all parties abort with
    overwhelming probability.
    \item If the protocol does not abort, then with overwhelming
    probability, $\mathbf{v}$ is correctly shuffled and shared among the
    honest parties, and the adversary learns no information about the
    applied permutation.
\end{itemize}
\end{theorem}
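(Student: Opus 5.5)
The theorem is attributed to \cite{gao2024linearshuffle} and stated informally, so the plan is a hybrid argument: replace $\mathrm{Permute}$ (together with $\mathrm{SharePerm}$) by the ideal $\perm$, then argue about $\mathrm{Shuffle2}_\mathrm{off}$ and $\mathrm{Shuffle2}_\mathrm{on}$ in the $(\Fshamir,\perm)$-hybrid model. By the composition theorem, it then suffices to establish the two bullet points in that hybrid model. Throughout we work in the two-case framework set out above: either the adversary deviates outside an ideal call and we show abort with overwhelming probability, or it does not and the honest shares are correct.

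\textbf{Offline phase.} In the hybrid model, $\mathrm{Shuffle2}_\mathrm{off}$ consists only of ideal calls: $\Pi_\mathrm{rand}$, $\mul$, $\perm$, local linear combinations, and $\open$ of $\llb\mathbf{z}_i\rrb$ to $P_i$. Hence deviations are caught by the functionalities, and the only leakage is the vectors $\mathbf{z}_i$ opened to $P_i$. We would show that the output satisfies Definition~\ref{def:malicious_cor}.
\begin{itemize}
\item Each $\mathbf{z}_{i,1}=\pi_{i-1}(\mathbf{r}_{i-1})-\mathbf{r}_i$ is uniform because $\mathbf{r}_i$ is a fresh uniform mask.
\item $\mathbf{z}_{i,2}$ is determined by the stated constraint $\mathbf{z}_{i,2}=\beta\mathbf{z}_{i,1}+\pi_{i-1}(\mathbf{r}'_{i-1})-\mathbf{r}'_i$, and the fresh $\mathbf{r}'_i$ masks it.
\item Consequently, the joint view of corrupted parties, namely their own $\mathbf{z}_i$ and their own $\pi_i$, is simulatable by uniform vectors and is independent of $\beta$ and of the honest permutations.
\end{itemize}

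\textbf{Online phase.} We would prove by induction on $i$ the invariant displayed earlier: if every check at $P_2,\dots,P_i$ accepts, then $\mathbf{y}_{i,1}=\pi_i\circ\cdots\circ\pi_1(\mathbf{x})-\pi_i(\mathbf{r}_i)$ and $\mathbf{y}_{i,2}=\beta\mathbf{y}_{i,1}-\pi_i(\mathbf{r}'_i)$, except with probability $O(m/|\mathbb{K}|)$. The key step, and the main obstacle, is the soundness of $\mathrm{CorrCheck}$. We would argue it as follows.
\begin{enumerate}
\item Suppose a corrupted $P_{i-1}$ sends $(\mathbf{y}_{i-1,1}+\mathbf{e}_1,\ \mathbf{y}_{i-1,2}+\mathbf{e}_2)$ with $(\mathbf{e}_1,\mathbf{e}_2)\neq 0$.
\item The checked value becomes $s=\beta u_e-v_e$, where $u_e,v_e$ are the $\lambda$-compressions of $\mathbf{e}_1,\mathbf{e}_2$.
\item Because the honest $P_i$ samples $\lambda$, the vector error compresses to a nonzero scalar except with probability $(m-1)/|\mathbb{K}|$.
\item Then $s=0$ forces $\beta=v_e/u_e$, or $v_e=0$ when $u_e=0$, which is ruled out since $(u_e,v_e)\neq 0$.
\item This requires $\beta$ to be uniform given the adversary's view. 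That holds because every value the adversary has seen that involves $\beta$, such as $\mathbf{z}_{i,2}$, is masked by an independent $\mathbf{r}'$.
\item The final multiplication by a random $\llb c\rrb$ makes the opened value either $0$ or uniform, so the check leaks nothing about $\beta$ for future checks.
\end{enumerate}

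\textbf{Privacy.} For privacy we would simulate each honest $P_j$'s outgoing message $\mathbf{y}_j$ by uniform vectors. The first component is masked by $\pi_j(\mathbf{r}_j)$, which is uniform and known to no one. The second component is determined from the first by $\beta$ and the independent mask $\pi_j(\mathbf{r}'_j)$. Since $\pi_j$ is hidden by $\perm$ and the messages it affects are uniform, $\pi_j$ stays uniform given the view, so the composed permutation is uniform. The last check with $P_w=\emptyset$ uses $\Pi_\mathrm{challenge}$ and the same argument, which covers a corrupted $P_n$. Correctness of the output $\mathbf{y}_{n,1}+\pi_n(\mathbf{r}_n)=\pi(\mathbf{x})$ then follows from the invariant. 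A union bound over the $n$ checks gives the overwhelming-probability bounds.
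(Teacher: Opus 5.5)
The paper gives no proof of this statement. It is imported from \cite{gao2024linearshuffle} as a black box, and the only argument the paper supplies is the composition step: $\mathrm{SharePerm}$ and $\mathrm{Permute}$ realize $\Pi_{\mathrm{perm}}$ (Corollaries~\ref{cor:Security of SharePerm} and~\ref{cor:Security of Permute}), so the cited hybrid-model result transfers. Your move to the $(\mathcal{F}_{\mathrm{Shamir}},\Pi_{\mathrm{perm}})$-hybrid model is exactly that step, and everything after it reconstructs the cited proof. The paper's route buys brevity. Yours explicitly checks that the argument survives this paper's instantiation (shares over $\mathbb{K}$, public broadcast of $\lambda,u,v$, masking by $\llbracket c\rrbracket$), and it isolates facts the GOD section later leans on: a failed check reveals only $s\neq 0$, and the $\mathbf{y}_i$ are simulatable even if broadcast. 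Your central computation is the right one: a deviation $(\mathbf{e}_1,\mathbf{e}_2)$ turns the checked value into $\beta u_e-v_e$, the compression is nonzero except with probability $(m-1)/|\mathbb{K}|$, and then $\beta$ must be guessed. Do state explicitly that $\Pi_{\mathrm{perm}}$ is invoked once on the $m\times 3$ matrix $(\mathbf{r}_i,\beta\mathbf{r}_i,\mathbf{r}'_i)$. Otherwise a corrupted $P_i$ could permute $\mathbf{r}_i$ and $\beta\mathbf{r}_i$ by some $\sigma\neq\pi_i$ while using $\pi_i$ on $\mathbf{r}'_i$ and online. Every $\beta$-check would then still pass, yet the output would be shifted by a nonzero error.

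Three points need tightening. First, your invariant is misquantified. The checks at $P_2,\dots,P_i$ never examine $\mathbf{y}_i$, and a check run by a corrupted party certifies nothing. An error injected anywhere inside a run of corrupted parties $P_{h+1},\dots,P_{j-1}$ surfaces only at the next honest checker $P_j$, or at the final public check. Induct over honest receivers instead, measuring $(\mathbf{e}_1,\mathbf{e}_2)$ against the canonical message obtained from the last honest output $\mathbf{y}_h$ through the committed $\pi_{h+1},\dots,\pi_{j-1}$ and the offline $\mathbf{z}_i$. Your formula for $s$ is then unchanged.

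Second, over such a run it is false that each value involving $\beta$ has its own independent mask. The values $\mathbf{y}_{h,2}=\beta\mathbf{y}_{h,1}-\pi_h(\mathbf{r}'_h)$ and $\mathbf{z}_{h+1,2}=\beta\mathbf{z}_{h+1,1}+\pi_h(\mathbf{r}'_h)-\mathbf{r}'_{h+1}$ share one. Moreover, the adversary can compute the canonical pair $(A,B)$ outright, and it satisfies $B=\beta A-\pi_{j-1}(\mathbf{r}'_{j-1})$. The correct argument is triangular. Each equation in the view introduces a fresh unknown $\mathbf{r}'_i$, so every value of $\beta$ is consistent with exactly one assignment of $\mathbf{r}'_h,\dots,\mathbf{r}'_{j-1}$. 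Hence $\beta$ is uniform given the view, and passing with $(a,b)\neq(A,B)$ requires $b-B=\beta(a-A)$, i.e. guessing $\beta$. The same joint reasoning is needed for privacy. The mask $\pi_j(\mathbf{r}_j)$ of $\mathbf{y}_{j,1}$ reappears in $\mathbf{z}_{j+1,1}=\pi_j(\mathbf{r}_j)-\mathbf{r}_{j+1}$, which a corrupted $P_{j+1}$ sees, so uniformity holds jointly via the fresh $\mathbf{r}_{j+1}$, not message by message.

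Third, the first bullet cannot hold literally. A corrupted checker may draw $\lambda$ non-uniformly, or a corrupted $P_i$ may silently repair a bad message from a corrupted $P_{i-1}$, and neither causes an abort. What your argument establishes is that every deviation altering an honest party's received message, or the output, causes an abort with overwhelming probability. That is the form the paper actually relies on.
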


This theorem allows us to focus on the security of the $\mathrm{Permute}$
protocol. Since $\mathrm{Permute}$ only invokes basic primitives and
previously developed protocols, it suffices to prove the security of its
subprotocols.

\subsection{Security of the One-Hot Protocol}
The security of the one-hot vector sharing protocol follows almost
directly from the basic primitives, which we briefly review below.

\begin{lemma}\label{le:Security of BinCheck}
The protocol $\mathrm{BinCheck}(\llb b_1 \rrb, \dots, \llb b_d \rrb)$
described in Algorithm~\ref{alg:BinCheck} is secure in the following
sense: if any input is non-binary (i.e., not in $\{0, 1\}$), the protocol
aborts with probability
$$ p \geq 1 - \frac{d-1}{2^\kappa}. $$
\end{lemma}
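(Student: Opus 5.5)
The argument is a Schwartz--Zippel bound applied to the polynomial evaluated in $\mathrm{BinCheck}$. The first step fixes the shared values. Because all shares are produced through $\Fshamir$, each $\llb b_i \rrb$ determines a unique secret $b_i \in \mathbb{F}$, and these secrets are fixed before $\lambda$ is sampled. The primitives $\Pi_{\mathrm{inner}}$, $\Pi_{\mathrm{add}}$ and $\Pi_{\mathrm{open}}$ are ideal, so any deviation inside them already causes an abort. It therefore suffices to treat the case in which the opened value $u$ equals exactly
\[
F(\lambda) = \sum_{i=1}^{d} \lambda^{i-1} c_i, \qquad c_i \coloneqq b_i(1-b_i).
\]

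The second step uses the fact that $\lambda$ is independent of the $b_i$. By the definition of $\Pi_{\mathrm{challenge}}$, the challenge is drawn uniformly from $\mathbb{K}$ after the $b_i$ are committed. So $F(X) = \sum_{i=1}^{d} c_i X^{i-1} \in \mathbb{K}[X]$ is a polynomial fixed independently of $\lambda$, with degree at most $d-1$.

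The third step shows that $F$ is nonzero whenever some input is non-binary. Since $\mathbb{F}$ is a field, $b_i(1-b_i) = 0$ holds if and only if $b_i \in \{0,1\}$. Hence if some $b_i$ is non-binary, then $c_i \neq 0$. The element $c_i$ lies in $\mathbb{F} \subseteq \mathbb{K}$, so $F$ is a nonzero polynomial of degree at most $d-1$ over the field $\mathbb{K}$.

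The fourth step bounds the probability of passing. A nonzero polynomial of degree at most $d-1$ has at most $d-1$ roots in $\mathbb{K}$. The protocol fails to abort only if $\lambda$ is one of these roots, so
\[
\Pr[u = 0] \le \frac{d-1}{|\mathbb{K}|} \le \frac{d-1}{2^\kappa}.
\]
This gives the abort probability $p \ge 1 - \frac{d-1}{2^\kappa}$ claimed in the statement.

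The main point that needs care is the independence of $\lambda$ from the adversary's choice of the $b_i$. This rests on the ordering in Algorithm~\ref{alg:BinCheck}: the sharing precedes $\Pi_{\mathrm{challenge}}$. It also rests on the idealized assumption that $\Pi_{\mathrm{challenge}}$ is perfectly uniform, which the paper explicitly adopts. The remaining steps are immediate.
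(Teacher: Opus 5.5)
Your proposal is correct and takes essentially the same route as the paper. The ideal primitives force the opened value to be $F(\lambda)=\sum_{i} b_i(1-b_i)\lambda^{i-1}$, which is a nonzero polynomial of degree at most $d-1$ whenever some $b_i$ is non-binary, so the uniform challenge is a root with probability at most $(d-1)/|\mathbb{K}| \le (d-1)/2^\kappa$; your explicit point that $\lambda$ is sampled only after the $b_i$ are fixed is used only implicitly in the paper and is worth keeping.
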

\begin{proof}
Since the computation uses the ideal functionalities $\open$ and $\inner$,
the parties correctly receive the opened value
$$ u = \sum_{i=1}^d b_i \cdot (1-b_i) \cdot \lambda^{i-1}. $$

The challenge $\lambda$ is uniformly random over $\mathbb{K}$, since it
is generated by $\challenge$. In $\mathbb{K}$, the product $b_i(1-b_i)$
is zero if and only if $b_i \in \{0, 1\}$. If any $b_i$ is non-binary,
then the polynomial
$$F(X) = \sum_{i=1}^d b_i(1-b_i) X^{i-1}$$
is nonzero and has degree at most $d-1$. A nonzero polynomial of degree
$d-1$ has at most $d-1$ roots in $\mathbb{K}$, so the probability that
$\lambda$ is a root is
$$ p_\mathrm{fail} \leq \frac{d-1}{|\mathbb{K}|} \leq \frac{d-1}{2^\kappa}. $$
Hence, $u \neq 0$, and the protocol aborts with probability
$$ p = 1 - p_\mathrm{fail} \geq 1 - \frac{d-1}{2^\kappa}. $$
\end{proof}

\begin{lemma}\label{le:Security of Demux}
If all inputs are shared bits, then the protocol
$\mathrm{Demux}(\llb b_1 \rrb, \dots, \llb b_d \rrb)$ described in
Algorithm~\ref{alg:Demux} is secure in the following sense: if the
protocol does not abort, the honest parties share
$\llb \mathbf{s} \rrb$ as a one-hot vector with index
$1 + \sum_{i=1}^d 2^{i-1}b_i$.
\end{lemma}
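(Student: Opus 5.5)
We argue by strong induction on $d$, the number of input bits, and we use only that $\Pi_\mathrm{mul}$ is an ideal functionality. If $\mathrm{Demux}$ does not abort, every call to $\Pi_\mathrm{mul}$ returned a correct sharing of the product of its inputs. Since the inputs are assumed to be shared bits, the claim reduces to the correctness of the plaintext recursion.

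The invariant we maintain is slightly stronger than one-hotness. For input bits $b_1,\dots,b_d\in\{0,1\}$, the output entry $s_j$ equals $1$ if $j = 1+\sum_{i=1}^d 2^{i-1}b_i$ and equals $0$ otherwise.

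\emph{Base case $d=1$.} The output $(1-b_1, b_1)$ is obtained by local linear operations alone ($\Pi_\mathrm{add}$ and constants). Since $b_1\in\{0,1\}$, it is $(1,0)$ when $b_1=0$ and $(0,1)$ when $b_1=1$. This is exactly the one-hot vector with index $1+b_1$.

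\emph{Inductive step.} Let $t=\lfloor d/2\rfloor$; note that $1\le t<d$ and $1\le d-t<d$ when $d\ge 2$.
\begin{itemize}
\item By the induction hypothesis, $\mathbf{u}$ is one-hot with index $a = 1+\sum_{i=1}^t 2^{i-1}b_i \in [2^t]$.
\item Likewise, $\mathbf{v}$ is one-hot with index $c = 1+\sum_{i=1}^{d-t}2^{i-1}b_{t+i}\in[2^{d-t}]$.
\item Each entry $s_{i+2^t(j-1)} = u_i v_j$ is computed by $\Pi_\mathrm{mul}$ and is therefore correct.
\item The product $u_iv_j$ equals $1$ exactly when $i=a$ and $j=c$, and equals $0$ otherwise.
\item The map $(i,j)\mapsto i+2^t(j-1)$ is a bijection from $[2^t]\times[2^{d-t}]$ onto $[2^d]$.
\end{itemize}
Hence $\mathbf{s}$ has exactly one entry equal to $1$, at position
\[
a + 2^t(c-1) = 1+\sum_{i=1}^t 2^{i-1}b_i + \sum_{i=1}^{d-t}2^{t+i-1}b_{t+i} = 1+\sum_{i=1}^d 2^{i-1}b_i ,
\]
as required. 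The main point to verify carefully is this index arithmetic: the offset $2^t(j-1)$ must match the shift of the high bits by $t$ positions. The displayed identity confirms this.

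On privacy, no values are ever opened in $\mathrm{Demux}$; the only communication comes from rerandomizing calls to $\Pi_\mathrm{mul}$. The adversary's view is therefore simulatable by the ideal functionality, and any deviation inside $\Pi_\mathrm{mul}$ causes an abort by assumption. Combined with Lemma~\ref{le:Security of BinCheck}, which guarantees binary inputs with overwhelming probability, this gives the claim.
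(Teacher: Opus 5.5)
Your proof is correct and follows the same route as the paper's: induction on $d$, with base case $(1-b_1, b_1)$ and the inductive step realized as a tensor product of two one-hot vectors computed by the ideal $\Pi_\mathrm{mul}$. You also verify two points the paper dismisses as ``follow directly'', namely the index arithmetic $a + 2^t(c-1) = 1+\sum_{i=1}^d 2^{i-1}b_i$ and the bijectivity of $(i,j)\mapsto i+2^t(j-1)$; your closing appeal to Lemma~\ref{le:Security of BinCheck} is unnecessary, since bit inputs are already a hypothesis of this lemma.
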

\begin{proof}
For the base case $d=1$, the honest parties hold
$$ \llb \mathbf{s} \rrb = (\llb 1 - b_1 \rrb, \llb b_1 \rrb), $$
which is correct and secure.

For $d>1$, the protocol essentially computes the tensor product of two
one-hot vectors via calls to the ideal functionality $\mul$. Correctness
and security follow directly.
\end{proof}

\begin{theorem}\label{th:Security of OneHot}
The protocol $\mathrm{OneHot}(k, P_w: \mathrm{ind})$ described in
Algorithm~\ref{alg:OneHot} is secure in the following sense: upon
successful termination, the honest parties hold a valid one-hot vector
with the index chosen by $P_w$.

Additionally, if $P_w$ is honest, then corrupted parties learn no
information about the chosen index.
\end{theorem}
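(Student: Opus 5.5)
The plan is to prove correctness and privacy separately, composing Lemma~\ref{le:Security of BinCheck} and Lemma~\ref{le:Security of Demux} in the order that $\mathrm{OneHot}$ runs them.

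\textbf{Correctness.} Let $\llb b_1\rrb,\dots,\llb b_d\rrb$ be the values dealt by $P_w$ through $\Pi_{\mathrm{share}}$. Because $\Fshamir$ is ideal, each $\llb b_i\rrb$ is a consistent degree-$t$ sharing of some well-defined $b_i\in\mathbb{F}$. For a corrupted $P_w$ these values may be arbitrary, but the simulator can extract them from the honest parties' shares. I will split into two cases.
\begin{enumerate}
\item Some $b_i\notin\{0,1\}$. Then Lemma~\ref{le:Security of BinCheck} says the protocol aborts except with probability at most $(d-1)/2^\kappa$. So successful termination with a malformed input happens only with negligible probability.
\item All $b_i$ are bits. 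Then Lemma~\ref{le:Security of Demux} says that, if the protocol does not abort, the honest parties hold a one-hot vector with index $1+\sum_i 2^{i-1}b_i$. By the bit decomposition in Algorithm~\ref{alg:OneHot}, this index is exactly $\mathrm{ind}$. For a corrupted dealer, I simply define $\mathrm{ind}$ as this extracted value, which always lies in $[k]$ because every $b_i$ is a bit.
\end{enumerate}

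\textbf{Privacy for an honest $P_w$.} I will list everything the corrupted parties see:
\begin{enumerate}
\item at most $t$ shares of each $b_i$ and of every intermediate product inside $\mathrm{Demux}$, all produced by ideal $\Pi_{\mathrm{share}}$, $\Pi_{\mathrm{mul}}$ and $\Pi_{\mathrm{inner}}$ with rerandomization;
\item the public challenge $\lambda$;
\item the single opened value $u$ from $\mathrm{BinCheck}$.
\end{enumerate}
By the $t$-privacy of Shamir sharing, $t$ shares of a degree-$t$ polynomial are uniformly distributed and independent of the secret. A simulator can therefore sample them at random. Since $P_w$ is honest, its bits are binary, so $u=0$ deterministically and the simulator outputs $0$. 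Hence the simulated view is identically distributed to the real view for every value of $\mathrm{ind}$.

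\textbf{Main obstacle.} The delicate point is formalizing ``no information'' when the opened value $u$ is a function of the secret bits. The argument only goes through because $u$ is forced to equal $0$ whenever the inputs are honest, so opening it leaks nothing. I also need to confirm that the inner product $\sum_i \lambda^{i-1} b_i(1-b_i)$ is computed by a single call to $\Pi_{\mathrm{inner}}$ and only $u$ is revealed, with no partial products. Finally, the statistical error $(d-1)/2^\kappa$ must be carried into the simulation distance to get the stated ``upon successful termination'' guarantee.
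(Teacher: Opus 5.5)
Your proposal is correct and takes essentially the same route as the paper. The non-binary case is handled by the BinCheck lemma, the binary case by the Demux lemma (with the extracted index $1+\sum_i 2^{i-1}b_i$ as the corrupted dealer's effective input), and privacy follows because the only opened value $u$ is identically $0$ for an honest dealer. Your explicit listing of the corrupted parties' view and the appeal to $t$-privacy of Shamir sharing simply spell out what the paper compresses into the remark that BinCheck ``opens only a zero value''.
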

\begin{proof}
If $P_w$ is honest, then upon successful termination, the result is a
one-hot vector with the index chosen by $P_w$. Moreover,
$\mathrm{BinCheck}$ opens only a zero value, which reveals only that the
inputs are binary.

If $P_w$ is malicious and shares $\llb b_i' \rrb$ during the
$\mathrm{OneHot}$ protocol, there are two cases:
\begin{itemize}
    \item If any $b_i' \notin \{0, 1\}$, then $\mathrm{BinCheck}$ aborts
    with overwhelming probability by Lemma~\ref{le:Security of BinCheck}.
    \item If all $b_i' \in \{0, 1\}$, then there exists a valid index
    $\mathrm{ind}' = \sum_{i=1}^d 2^{i-1}b_i' + 1$ for the generated
    vector. Since $P_w$ may freely choose the index, this does not
    compromise security.
\end{itemize}
\end{proof}

\subsection{Security of the Permutation Protocol}
\begin{theorem}\label{th:Security of Permutation Matrix Check}
If the input $k$ one-hot vectors do not form a valid $k \times k$
permutation matrix, then the protocol
$\mathrm{PermCheck}(\llb \mathbf{P} \rrb)$ described in
Algorithm~\ref{alg:PermCheck} aborts with probability
$$ p \geq 1 - \frac{k-1}{2^\kappa}. $$
\end{theorem}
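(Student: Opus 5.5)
The proof will follow the template of Lemma~\ref{le:Security of BinCheck}: first show that a malformed matrix makes the column-sum deficits a nonzero vector, then read the opened value as a polynomial evaluated at a uniform point.

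\textbf{Step 1: what the opened value is.} Because $\Pi_\mathrm{add}$, $\Pi_\mathrm{challenge}$ and $\Pi_\mathrm{open}$ are ideal, the parties open exactly
$$\mathrm{sum} = \sum_{i=1}^k \lambda^{i-1} c_i, \qquad c_i \coloneqq 1 - \sum_{j=1}^k \mathbf{P}(j,i),$$
where $\lambda$ is uniform over $\mathbb{K}$. Moreover, $\lambda$ is sampled only after the shares of $\mathbf{P}$ are fixed, so the adversary cannot choose $\mathbf{P}$ depending on $\lambda$.

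\textbf{Step 2: a malformed matrix gives some $c_i \neq 0$.} By the precondition of $\mathrm{PermCheck}$, and by Theorem~\ref{th:Security of OneHot} applied to $\mathrm{OneHot}$, every row of $\mathbf{P}$ is a genuine one-hot vector over $\{0,1\}$. Hence
$$\sum_{j,i} \mathbf{P}(j,i) = k \quad \text{in } \mathbb{F},$$
and each column sum $\sum_j \mathbf{P}(j,i)$ equals the integer $N_i \in \{0,\dots,k\}$ counting the rows whose index is $i$, reduced into $\mathbb{F}$.

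If the rows do not form a permutation matrix, two rows share an index. Then some column $i_0$ has $N_{i_0} \ge 2$, and by pigeonhole some column $i_1$ has $N_{i_1} = 0$. For $i_1$ we get $c_{i_1} = 1 - 0 = 1 \neq 0$, which holds in any characteristic. Using the empty column is the key choice: the overfull column alone would fail, because $N_{i_0}$ could be $\equiv 1 \pmod p$ when $p = \mathrm{char}(\mathbb{F})$ is small. So the deficit vector $(c_1,\dots,c_k)$ is nonzero.

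\textbf{Step 3: Schwartz--Zippel bound.} Define $F(X) = \sum_{i=1}^k c_i X^{i-1} \in \mathbb{K}[X]$. By Step 2 it is a nonzero polynomial of degree at most $k-1$, so it has at most $k-1$ roots. The check wrongly passes only when $\mathrm{sum} = F(\lambda) = 0$, which happens with probability at most
$$\frac{k-1}{|\mathbb{K}|} \le \frac{k-1}{2^\kappa}.$$
Therefore the protocol aborts with probability $p \ge 1 - \frac{k-1}{2^\kappa}$, as claimed.

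The main obstacle is Step 2: one must argue nonzero-ness of the deficits in fields of small characteristic, and relying on the empty column is what resolves it.
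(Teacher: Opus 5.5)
Your proof is correct and takes the same route as the paper. The opened value is the polynomial $\sum_{i}\lambda^{i-1}\bigl(1-\sum_j \mathbf{P}(j,i)\bigr)$ of degree at most $k-1$, evaluated at a uniform $\lambda\in\mathbb{K}$, so it vanishes with probability at most $(k-1)/2^\kappa$ once some coefficient is nonzero. The paper only asserts that an invalid matrix gives a nonzero column deficit. Your empty-column pigeonhole argument actually justifies this in every characteristic, where the overfull column alone would not suffice when the characteristic is at most $k$, so your write-up is slightly more complete.
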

\begin{proof}
The parties compute
$$ \llb \mathrm{sum} \rrb \gets \sum_{j=1}^k \lambda^{j-1} \left( 1 - \sum_{i=1}^k \llb \mathbf{P}(i, j) \rrb \right). $$
Since these arithmetic operations use ideal functionalities, the parties
receive the correct $\llb \mathrm{sum} \rrb$, unless the protocol has
already aborted due to misbehavior.

\begin{itemize}
    \item If $\mathbf{P}$ is a valid permutation matrix, then
    $\sum_{i=1}^k \mathbf{P}(i, j) = 1$ for all $j$, and hence
    $\llb \mathrm{sum} \rrb = \llb 0 \rrb$.
    \item If $\mathbf{P}$ is invalid, then at least one coefficient
    $1 - \sum_{i=1}^k \mathbf{P}(i, j)$ is nonzero, making
    $\llb \mathrm{sum} \rrb$ a nonzero polynomial in $\lambda$ of degree
    at most $k-1$.
\end{itemize}

A nonzero polynomial of degree $k-1$ has at most $k-1$ roots in
$\mathbb{K}$. Since $\lambda$ is uniform over $\mathbb{K}$, the
probability that $\lambda$ is a root is
$$ p_\mathrm{fail} \leq \frac{k-1}{|\mathbb{K}|} \leq \frac{k-1}{2^\kappa}. $$
Thus, $\llb \mathrm{sum} \rrb \neq \llb 0 \rrb$, and the protocol aborts
with probability
$$ p = 1 - p_\mathrm{fail} \geq 1 - \frac{k-1}{2^\kappa}. $$
\end{proof}

\begin{corollary}\label{corollary:Security of PermMat}
The protocol $\mathrm{PermMat}$ described in Algorithm~\ref{alg:PermMat}
is secure in the following sense:
\begin{itemize}
    \item If all parties act honestly, then the honest parties share a
    valid permutation matrix $\llb \mathbf{P} \rrb$, with the permutation
    chosen by and known to $P_w$.
    \item If the malicious adversary misbehaves, then the protocol aborts
    with overwhelming probability.
\end{itemize}

Additionally, if $P_w$ is honest, then corrupted parties learn no
information about $\mathbf{P}$.
\end{corollary}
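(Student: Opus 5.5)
The plan is to combine Theorem~\ref{th:Security of OneHot} and Theorem~\ref{th:Security of Permutation Matrix Check} through the structure of Algorithm~\ref{alg:PermMat}, treating all basic operations as calls to $\Fshamir$ so that only misbehavior outside the ideal functionalities must be handled. I will prove correctness in the honest case, then detection of a malicious dealer, and finally privacy for an honest dealer. Misbehavior inside primitive calls causes an immediate abort by the ideal functionality, so the only adversarial freedom is the choice of inputs by a corrupted $P_w$.

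\emph{Honest case.} If $P_w$ is honest, each call $\mathrm{OneHot}(k, P_w: \pi(i))$ yields a one-hot row with index $\pi(i)$ by Theorem~\ref{th:Security of OneHot}. Hence the assembled matrix has $\mathbf{P}(i,j)=1$ exactly when $j=\pi(i)$, which is $\mathbf{P}_\pi$. Every column then sums to $1$, so $\mathrm{PermCheck}$ opens $0$ and does not abort.

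\emph{Malicious dealer.} I would case split.
\begin{enumerate}
\item If some shared bit is non-binary, $\mathrm{BinCheck}$ aborts except with probability at most $(d-1)/2^\kappa$ by Lemma~\ref{le:Security of BinCheck}. If the bits of all $k$ rows are batched into one check, the bound becomes $(k\log k-1)/2^\kappa$, which is still negligible.
\item Otherwise, by Lemma~\ref{le:Security of Demux} every row is a valid one-hot vector with some index $\sigma(i)$ chosen by $P_w$, so the precondition of Theorem~\ref{th:Security of Permutation Matrix Check} holds.
   \begin{enumerate}
   \item If $\sigma$ is not a bijection, some column sum differs from $1$, and $\mathrm{PermCheck}$ aborts except with probability at most $(k-1)/2^\kappa$.
   \item If $\sigma$ is a bijection, the output is the valid permutation matrix $\mathbf{P}_\sigma$ for a permutation $\sigma$ effectively chosen by $P_w$. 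This outcome is allowed, because the dealer's input is free.
   \end{enumerate}
\end{enumerate}
A union bound over these events gives an overwhelming probability that either the protocol aborts or a well-formed matrix is output. The statement's phrase ``the protocol aborts with overwhelming probability'' should be read in exactly this sense: any deviation that would produce an ill-formed matrix is caught.

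\emph{Privacy.} When $P_w$ is honest, the corrupted parties see three things.
\begin{itemize}
\item Their Shamir shares of the bits and of the Demux products. There are at most $t$ of these per secret, so they are independent of the secret.
\item The public challenges, which are independent of $\pi$.
\item The opened values. $\mathrm{BinCheck}$ and $\mathrm{PermCheck}$ each open $0$, which is determined independently of $\pi$.
\end{itemize}
A simulator can therefore produce random shares and zero openings, and the view is identically distributed for every $\pi$.

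The main obstacle is making the malicious case rigorous rather than informal. Specifically, I must argue that the $\mathrm{PermCheck}$ bound applies because its input really consists of one-hot rows; that only holds conditioned on $\mathrm{BinCheck}$ passing. I also need to state precisely that a ``valid but adversarially chosen'' permutation is not counted as misbehavior. I will handle the first point by conditioning on the event that $\mathrm{BinCheck}$ passes and applying the union bound. I will handle the second by defining misbehavior as any deviation that produces an ill-formed matrix.
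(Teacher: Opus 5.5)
Your proposal is correct and follows the same route as the paper. The paper's proof simply says the corollary follows immediately from Theorem~\ref{th:Security of OneHot} and Theorem~\ref{th:Security of Permutation Matrix Check}; you unpack that combination into an explicit case split with a union bound over the $\mathrm{BinCheck}$ and $\mathrm{PermCheck}$ failure events, and you add a simulator-style privacy argument, consistent with the paper's convention (in the proof of Theorem~\ref{th:Security of OneHot}) that a dealer choosing a different valid input is not counted as misbehavior. Your step ``if $\sigma$ is not a bijection, some column sum differs from $1$'' holds in every characteristic, since a non-injective $\sigma$ leaves some column empty, with sum $0$.
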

\begin{proof}
This follows immediately from Theorem~\ref{th:Security of OneHot} and
Theorem~\ref{th:Security of Permutation Matrix Check}.
\end{proof}

\begin{corollary}\label{cor:Security of SharePerm}
The protocol $\mathrm{SharePerm}$ described in
Algorithm~\ref{alg:SharePerm} is secure in the following sense:
\begin{itemize}
    \item Any adversarial misbehavior leads to an abort with overwhelming
    probability.
    \item If all parties act honestly, then the honest parties share
    $\llb \pi \rrb$, where $\pi$ is chosen by and known to $P_w$.
\end{itemize}

Additionally, if $P_w$ is honest, then the adversary learns no information
about $\pi$.
\end{corollary}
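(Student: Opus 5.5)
The plan is to reduce everything to Corollary~\ref{corollary:Security of PermMat}, applied independently to each of the $s\cdot m/k$ calls to $\mathrm{PermMat}$ inside $\mathrm{SharePerm}$, and then combine the results with a union bound and the structural properties of the Bene\v{s}-based decomposition. $\mathrm{SharePerm}$ itself does no interactive work besides these calls. The decomposition and the split into the $\pi'_{i,j}$ are local computations of $P_w$. The tasks $T_{i,j}=\mathrm{Task}(m,k,i,j)$ are computed locally and publicly, and they are identical for all parties. So the only points where the adversary can deviate are inside the $\mathrm{PermMat}$ invocations, or in the choice of inputs $\pi'_{i,j}$ when $P_w$ is corrupted.

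\textbf{Misbehavior.} Each $\mathrm{PermMat}$ call aborts with overwhelming probability when it is misbehaved on. By Theorems~\ref{th:Security of OneHot} and~\ref{th:Security of Permutation Matrix Check}, each call fails to detect misbehavior with probability at most roughly $(k\log k + k)/2^{\kappa}$. A union bound over the $O(\frac{m\log m}{k\log k})$ calls gives a total failure probability of $O(\frac{m\log m}{2^\kappa})$. This is $C\cdot 2^{-\kappa}$ with $C$ the circuit size, which matches the security notion fixed in the security model.

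The key observation for a corrupted $P_w$ is that the adversary gains nothing by feeding arbitrary, non-Bene\v{s}-consistent $k$-permutations $\pi'_{i,j}$. Conditioned on no abort, every $\mathbf{P}_{\pi'_{i,j}}$ is a valid permutation matrix. Within a layer the tasks $T_{i,j}$ are disjoint, so the induced $\pi_{i,j}$ commute and compose to an $m$-permutation. The full composition in item~5 of Definition~\ref{definition: shared permutation} is therefore a well-defined $m$-permutation $\pi^\ast$, which the simulator can extract by reconstructing the dealer's shares. In other words, the output is always a valid $\ll\pi^\ast\rr_k$ for some $\pi^\ast$ effectively chosen by $P_w$, and this is exactly what an ideal functionality accepting $P_w$'s input allows. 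I expect this to be the main point needing care. The correctness requirement must be stated as "a valid shared permutation of the dealer's choice", not "the Bene\v{s} decomposition of some $\pi$", since any composition of such layers is a valid permutation anyway.

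\textbf{Honest execution and privacy.} For correctness under honest behavior, I would invoke the correctness of the decomposition of \cite{chase2020secret}: $\pi=\pi_s\circ\cdots\circ\pi_1$, each $\pi_i$ splits into disjoint $\pi_{i,j}$ acting on $T_{i,j}$, and $s=O(\log m/\log k)$. Together with the first item of Corollary~\ref{corollary:Security of PermMat}, this yields all five conditions of the definition. For privacy when $P_w$ is honest, I would use task invariance: the public values $k$, $s$, and $T_{i,j}$ are independent of $\pi$. The remaining view consists of the views of the individual $\mathrm{PermMat}$ calls, and each of these reveals nothing about its $\pi'_{i,j}$ by the corollary. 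Since all calls use fresh sharings, the simulator can run the $\mathrm{PermMat}$ simulators on dummy permutations, for instance the identity, and the joint view is identically distributed.
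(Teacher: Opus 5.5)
Your proposal is correct and follows the same route as the paper. The paper's proof is a one-line appeal to Corollary~\ref{corollary:Security of PermMat}: if no abort occurs, every shared matrix is a valid permutation matrix, and so a valid $\ll\pi\rr_k$ results. Your union bound, your explicit extraction of the composite $\pi^\ast$ for a corrupted dealer (where you rightly note that non-Bene\v{s}-consistent layers just amount to choosing a different input), and your task-invariance argument for privacy are all sound; they make explicit what the paper leaves implicit.
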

\begin{proof}
By Corollary~\ref{corollary:Security of PermMat}, if the protocol has not
aborted, then all shared matrices are valid permutation matrices, which
implies that the permutation is correctly shared.
\end{proof}

\begin{corollary}\label{cor:Security of Permute}
The protocol $\mathrm{Permute}(\llb \pi \rrb, \llb \mathbf{v} \rrb)$
described in Algorithm~\ref{alg:Permute} is secure in the following sense:
\begin{itemize}
    \item Any adversarial misbehavior leads to an abort with overwhelming
    probability.
    \item If all parties act honestly, then the honest parties share
    $\llb \pi(\mathbf{v}) \rrb$.
\end{itemize}

This follows directly from the definition of a shared permutation and the
security of the ideal functionalities.
\end{corollary}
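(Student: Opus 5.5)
The plan is to argue correctness and security of $\mathrm{Permute}$ separately, by induction over the $s$ layers of Algorithm~\ref{alg:Permute}. We use only Definition~\ref{definition: shared permutation} and the guarantees of the ideal functionality $\Fshamir$ (in particular $\inner$).

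\textbf{Correctness.} First fix one layer $i$ and one block $j$. The extracted subvector is $\mathbf{u}=(\mathbf{v}(l))_{l\in T_{i,j}}$. By Corollary~\ref{cor:Security of SharePerm}, $\llb \mathbf{P}_{\pi'_{i,j}}\rrb$ is a valid $k\times k$ permutation matrix, so each entry of $\mathbf{u}'$ is the inner product of one row with $\mathbf{u}$. These are exactly the $k$ invocations of $\inner$, and each returns a fresh sharing of $\mathbf{u}(\pi'_{i,j}(\cdot))$, i.e., of $\pi'_{i,j}(\mathbf{u})$. Writing the result back into the positions of $T_{i,j}$ therefore realizes the $m$-permutation $\pi_{i,j}$ on $\mathbf{v}$.

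Because the tasks $T_{i,j}$ for a fixed $i$ are disjoint, the parallel loop over $j$ yields a well-defined vector. The disjoint $\pi_{i,j}$ commute, so this vector equals the composition $\pi_{i,1}\circ\cdots\circ\pi_{i,m/k}$ applied to the input of layer $i$. The inductive invariant is that after layer $i$ the parties hold shares of the input permuted by layers $1,\dots,i$. After $s$ layers, item~5 of Definition~\ref{definition: shared permutation} gives $\llb \pi(\mathbf{v})\rrb$.

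The main care point is the order convention for composition: one must check that applying $\pi_1$ before $\pi_2$ matches the paper's $\pi(\mathbf{x})=\mathbf{P}_\pi\mathbf{x}$ indexing. I would resolve this by taking the decomposition's ordering as given in the definition.

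\textbf{Security.} The protocol makes no openings and exchanges no messages outside calls to $\inner$. Each call is ideal, so any deviation inside it causes an abort, and it outputs rerandomized shares. The task sets $T_{i,j}$ are public and, by task invariance, independent of $\pi$. Hence the adversary's view consists only of its shares of fresh degree-$t$ sharings, which a simulator produces as uniform values. Misbehavior can only occur inside ideal calls and is detected there, which gives the first bullet. The honest-execution claim is exactly the correctness argument above.
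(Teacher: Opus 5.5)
Your proposal is correct and follows the same route as the paper, which justifies the corollary in a single line from the definition of a $k$-shared $m$-permutation and the ideal-functionality guarantees of $\Pi_\mathrm{inner}$; you make explicit what the paper leaves implicit, namely the layer-by-layer induction (using disjointness of the tasks $T_{i,j}$ and $\mathbf{P}_{\pi'}\mathbf{u}=\pi'(\mathbf{u})$) and the simulation argument. Your caution about ordering is justified but harmless: under the paper's action $\pi(\mathbf{x})_i=x_{\pi(i)}$, applying $\tau$ and then $\sigma$ yields the pointwise composite $\tau\circ\sigma$, so item~5 of the definition must be read as composition of the induced vector maps, which is how you use it.
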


This concludes the security proof for our permutation protocol.

\subsection{Security of the Shuffle Protocol}
Finally, we establish the security of the shuffle protocols by combining
Theorem~\ref{th:security of shuffle},
Corollary~\ref{cor:Security of SharePerm}, and
Corollary~\ref{cor:Security of Permute}.

\begin{corollary}
The protocols $\mathrm{Shuffle1}_\mathrm{off}$ and
$\mathrm{Shuffle1}_\mathrm{on}$ described in
Algorithms~\ref{alg:FirstShuffleOff} and~\ref{alg:FirstShuffleOn} are
secure in the following sense:
\begin{itemize}
    \item They abort with overwhelming probability if the adversary
    misbehaves.
    \item Otherwise, they shuffle $\llb \mathbf{v} \rrb$ using a secret
    permutation that is uniformly random over all $m$-permutations.
\end{itemize}
\end{corollary}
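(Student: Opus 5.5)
The plan is to combine Corollary~\ref{cor:Security of SharePerm} for the offline phase with Corollary~\ref{cor:Security of Permute} for the online phase, and then add a short probabilistic argument for uniformity. The proof would go in three steps: (i) offline well-formedness and privacy, (ii) online correctness, (iii) uniformity of the composed permutation. Each step is conditioned on no abort, and a union bound collects the failure probabilities.

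For the offline phase, $\mathrm{Shuffle1}_\mathrm{off}$ makes $n$ calls to $\mathrm{SharePerm}$, one per dealer $P_w$. By Corollary~\ref{cor:Security of SharePerm}, each call either aborts with overwhelming probability when $P_w$ misbehaves, or yields a valid $k$-shared permutation $\llb \pi_w \rrb_k$ for some well-defined $\pi_w$. That $\pi_w$ is fixed by the adversary if $P_w$ is corrupted, and equals the honest choice otherwise. The failure probability of each call is bounded through Lemma~\ref{le:Security of BinCheck} and Theorem~\ref{th:Security of Permutation Matrix Check} by a term of the form $O(k)/2^\kappa$ per check. Summed over the polynomially many checks, this gives a total bounded by $C\cdot 2^{-\kappa}$, matching the security model. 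The same corollary gives that when $P_w$ is honest, the adversary's view is independent of $\pi_w$. This holds because the task sets $T_{i,j}$ are public but independent of $\pi_w$ (task invariance), and everything else is Shamir-shared or is an opened value equal to zero.

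For the online phase, each of the $n$ sequential rounds runs $\mathrm{Permute}$ on the $l$ vectors. By Corollary~\ref{cor:Security of Permute} together with the composition theorem, either misbehavior causes an abort, or after round $w$ the parties hold $\llb \pi_w(\cdots\pi_1(\mathbf{v}_i)) \rrb$. So if nothing aborts, the output equals $\pi(\mathbf{v}_i)$ with $\pi=\pi_n\circ\cdots\circ\pi_1$, applied consistently to all $l$ vectors. $\mathrm{Permute}$ opens no values (it only calls $\Pi_\mathrm{inner}$), so the online phase adds nothing to the adversary's view beyond fresh shares. Its view is therefore simulatable and independent of the honest permutations.

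The main obstacle is step (iii): showing that $\pi$ is uniform even though the corrupted parties' permutations may be chosen adaptively. Since $t<n/2$, some $P_h$ is honest. Write $\pi=\sigma\circ\pi_h\circ\rho$ with $\sigma=\pi_n\circ\cdots\circ\pi_{h+1}$ and $\rho=\pi_{h-1}\circ\cdots\circ\pi_1$. The key point is that $\sigma$ and $\rho$ are determined by the adversary's view and coins together with other honest parties' independent choices, and by step (i) that view is independent of $\pi_h$. All permutations are committed in the offline phase, before any online data, so the adversary cannot adapt $\sigma$ or $\rho$ to $\pi_h$. Conditioning on $(\sigma,\rho)$, the map $\pi_h\mapsto\sigma\circ\pi_h\circ\rho$ is a bijection of the symmetric group $S_m$, so $\pi$ is uniform. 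Finally, the event of abort is determined by the adversary's behavior and the challenges, not by $\pi_h$ beyond negligible probability. Conditioning on non-abort therefore preserves uniformity up to statistical distance $C\cdot 2^{-\kappa}$.
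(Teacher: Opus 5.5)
Your proposal is correct and takes the same route as the paper, whose proof simply invokes Corollary~\ref{cor:Security of SharePerm} for the offline phase and Corollary~\ref{cor:Security of Permute} for the online phase. Your explicit uniformity step formalizes the informal argument in the paper's ``Security of \textsf{Shuffle1}'' paragraph: some honest $P_h$ has a hidden uniform $\pi_h$ that is independent of the adversary's view, so the bijection $\pi_h\mapsto\sigma\circ\pi_h\circ\rho$ of $S_m$ makes the composite permutation uniform.
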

\begin{proof}
This follows directly from Corollary~\ref{cor:Security of SharePerm} and
Corollary~\ref{cor:Security of Permute}.
\end{proof}

\begin{corollary}
The protocols $\mathrm{Shuffle2}_\mathrm{off}$ and
$\mathrm{Shuffle2}_\mathrm{on}$ described in
Algorithms~\ref{alg:shuffle off} and~\ref{alg:shuffle on} are secure in
the following sense:
\begin{itemize}
    \item They abort with overwhelming probability if the adversary
    misbehaves.
    \item Otherwise, they shuffle $\llb \mathbf{v} \rrb$ using a secret
    permutation that is uniformly random over all $m$-permutations.
\end{itemize}
\end{corollary}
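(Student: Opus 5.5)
The plan is to apply Theorem~\ref{th:security of shuffle}, which reduces security of $\mathrm{Shuffle2}_\mathrm{off}$ and $\mathrm{Shuffle2}_\mathrm{on}$ to the claim that our concrete permutation machinery securely realizes the ideal functionality $\perm(P_w:\pi,\llb\mathbf{v}\rrb)$. In $\mathrm{Shuffle2}_\mathrm{off}$ (Algorithm~\ref{alg:shuffle off}), each use of $\perm$ is instantiated by $\mathrm{SharePerm}(k,P_i:\pi_i)$ followed by $\mathrm{Permute}(\llb\pi_i\rrb,\cdot)$. The first step is therefore to show that this pair securely implements $\perm$, for both honest and corrupted $P_i$.

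\textbf{Step 1: realizing $\perm$.} We combine Corollary~\ref{cor:Security of SharePerm} with Corollary~\ref{cor:Security of Permute}.

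\emph{Corrupted dealer.} Any misbehaviour by $P_i$ causes an abort with overwhelming probability. Otherwise every shared $k\times k$ matrix is a valid permutation matrix, by Theorem~\ref{th:Security of OneHot} and Theorem~\ref{th:Security of Permutation Matrix Check}. Since the task sets $T_{i,j}$ are public and disjoint within each layer, the shared object is a well-defined $\llb\pi_i\rrb_k$ for some $m$-permutation $\pi_i$ in the sense of Definition~\ref{definition: shared permutation}. The simulator extracts $\pi_i$ from the shares held by the honest parties and forwards it to $\perm$ as $P_i$'s input.

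\emph{Honest dealer.} The adversary sees only shares of bits and matrices, the zero outcomes of $\mathrm{BinCheck}$ and $\mathrm{PermCheck}$, and the public tasks. The tasks are independent of $\pi_i$ by task invariance. Hence the simulator can use random shares in place of the real ones.

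$\mathrm{Permute}$ uses only $\inner$ calls on these shares, so it outputs $\llb\pi_i(\mathbf{v})\rrb$ correctly. By composition, $\mathrm{SharePerm}$ followed by $\mathrm{Permute}$ securely implements $\perm$. The remaining operations, $\Pi_\mathrm{rand}$, $\Pi_\mathrm{mul}$, linear combinations and $\Pi_\mathrm{open}$ to $P_i$, are calls to $\Fshamir$ and need no further argument.

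\textbf{Step 2: the online phase and the abort case.} With $\perm$ established, the first bullet follows from Theorem~\ref{th:security of shuffle}. In the online phase, any incorrect $\mathbf{y}_{j}$ sent to an honest $P_{j+1}$ is caught by $\mathrm{CorrCheck}$. The reason is that $\beta$ is hidden and the relation is compressed with a fresh uniform $\lambda\in\mathbb{K}$. Consequently, a discrepancy survives only if $\lambda$ is a root of a nonzero polynomial of degree less than $m$, or if $\beta$ is guessed. Each of these events has probability $O(m/2^\kappa)$, which is negligible.

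\textbf{Step 3: uniformity in the non-abort case.} For the second bullet, we use the invariant already verified after Algorithm~\ref{alg:shuffle on}. Conditioned on no abort, the output equals $\pi_n\circ\cdots\circ\pi_1(\mathbf{x})$. At least one $P_i$ is honest, since $t<n/2$. That $\pi_i$ is uniform and, by Theorem~\ref{th:security of shuffle}, hidden from the adversary. The corrupted permutations are fixed before $\pi_i$ is used, because they are extracted during $\mathrm{SharePerm}$. Composing a uniform permutation with permutations independent of it yields a uniform permutation, so the overall shuffle is uniform over all $m$-permutations.

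\textbf{Expected obstacle.} The delicate point is the independence needed in Step~3. A corrupted $\pi_j$ with $j$ after the honest index $i$ could in principle be chosen adaptively. I would resolve this by noting that all $\mathrm{SharePerm}$ calls run in parallel in the offline phase. Every corrupted permutation is therefore committed, and extractable by the simulator, before any information about the honest $\pi_i$ is revealed, and the online messages $\mathbf{y}_j$ are masked by the $\mathbf{r}_j$. This makes the adversary's permutations independent of the honest one, and uniformity follows.
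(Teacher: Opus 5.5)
Your proposal is correct and follows the paper's route. The paper's proof is a one-line appeal to Theorem~\ref{th:security of shuffle} combined with Corollaries~\ref{cor:Security of SharePerm} and~\ref{cor:Security of Permute}, which is your Step~1 plus the first sentence of Step~2, while your \textsf{CorrCheck} soundness bound and the one-honest-party independence argument in Step~3 simply spell out what the paper delegates to the cited result of Gao et al.\ and to its informal discussion in the main text.
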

\begin{proof}
This follows directly from Theorem~\ref{th:security of shuffle},
Corollary~\ref{cor:Security of SharePerm}, and
Corollary~\ref{cor:Security of Permute}.
\end{proof}

\section{Proofs of Complexity Analysis} \label{sec:complexity}

\newcommand{\rand}{\Pi_\mathrm{rand}}
\subsection{Proofs for Section \ref{subsec:complexity of sec4}} \label{subsec:proof of sec4}
\begin{theorem}\label{theorem:Complexity of Permute}
    For the protocol $\mathrm{Permute}(\ll \pi\rr_k, \ll \mathbf{v}\rr)$
    described in Algorithm~\ref{alg:Permute},
    the communication, computation, and round complexities are
    $O(\frac{1}{\log k}nm\log m)$,
    $O(\frac{k}{\log k}nm\log m)$,
    and $O(\frac{\log m}{\log k})$, respectively.
\end{theorem}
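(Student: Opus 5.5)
The plan is to count directly from Algorithm~\ref{alg:Permute}, charging each call to $\Pi_\mathrm{inner}$ at the cost stated in the preliminaries. On vectors of length $k$, one call costs $O(n)$ communication and $O(nk)$ computation, and it takes a constant number of rounds. All other steps are free.

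\emph{Local steps.} The extraction $\ll \mathbf{u}\rr \gets \ll \mathbf{v}(l)\rr_{l\in T_{i,j}}$ and the write-back into $\ll\mathbf{v}\rr$ are purely local re-indexings of shares. The task sets $T_{i,j}$ are public by Definition~\ref{definition: shared permutation}, so these steps incur no communication. Their computation is $O(k)$ per inner iteration, which is dominated by the later terms.

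\emph{Counting the inner products.} Each inner iteration computes $\ll \mathbf{P}_{\pi'_{i,j}}\rr\cdot\ll\mathbf{u}\rr$. This is $k$ inner products between a row of a $k\times k$ shared matrix and a length-$k$ shared vector, so it costs $O(nk)$ communication and $O(nk^2)$ computation. The loop runs over $i\in[s]$ and $j\in[m/k]$, for $s\cdot m/k$ iterations in total.

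By Definition~\ref{definition: shared permutation}, we have $s=O(\frac{\log m}{\log k})$; concretely $s=(2\log m-1)/\log k$ from the Bene\v{s} decomposition. Multiplying out gives the totals:
\begin{itemize}
\item communication: $O(nk)\cdot \frac{m}{k}\cdot O(\frac{\log m}{\log k}) = O(\frac{nm\log m}{\log k})$;
\item computation: $O(nk^2)\cdot\frac{m}{k}\cdot O(\frac{\log m}{\log k}) = O(\frac{k}{\log k}nm\log m)$.
\end{itemize}

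\emph{Rounds.} The $m/k$ iterations for a fixed $i$ are marked parallel. They are genuinely independent because the $T_{i,j}$ are disjoint for fixed $i$, so no iteration reads an entry that another writes. All $k$ inner products inside one iteration are likewise independent. Each outer layer $i$ therefore costs $O(1)$ rounds, namely one batch of $\Pi_\mathrm{inner}$. The $s$ layers must run sequentially, since layer $i+1$ consumes the output of layer $i$. This gives $O(s)=O(\frac{\log m}{\log k})$ rounds.

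\emph{Main obstacle.} The counting is routine; the step needing care is justifying the $O(n)$-communication, $O(nk)$-computation cost of $\Pi_\mathrm{inner}$. The point is that each party locally computes the degree-$2t$ inner product of its own shares and then performs a single degree reduction and rerandomization, rather than one per multiplication. Here I will simply invoke the stated interface of $\mathcal{F}_\mathrm{Shamir}$.

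A smaller point is that when $k=1$ the factor $\frac{1}{\log k}$ is degenerate. I will assume $k\ge 2$, as implicit throughout.
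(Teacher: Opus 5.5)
Your proposal is correct and follows the paper's proof. Both count $s\cdot m/k = O(\frac{m\log m}{k\log k})$ matrix--vector products and charge each one $k$ calls to $\Pi_{\mathrm{inner}}$, which costs $O(nk)$ communication and $O(nk^2)$ computation per product. Both then get $O(s)$ rounds by running all small permutations within a layer in parallel. Your accounting of $O(n)$ communication per call to $\Pi_{\mathrm{inner}}$ is the consistent reading of the paper's argument. The paper's wording loosely attributes $O(nk)$ to each call rather than to each product.
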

\begin{proof}
    The parties compute
    $O(\frac{m\log m}{k\log k})$ matrix-vector products,
    each multiplying a $k\times k$ permutation matrix
    by a data vector of length $k$.
    Such a multiplication consists of $k$ calls to $\Pi_\mathrm{inner}$,
    each of which has communication complexity $O(nk)$.
    Hence, the total communication complexity is
    $O(\frac{1}{\log k}nm\log m)$.
    The computation complexity of a single matrix-vector product is
    $O(nk^2)$, and hence the total computation complexity is
    $O(\frac{k}{\log k}nm\log m)$.

    For the round complexity, note that the parties can parallelize
    all small permutations $\pi'_{i, j}$ with the same $i$.
    Thus, the entire $i$-th iteration is completed within $O(1)$ rounds,
    and the round complexity is simply
    $O(s) = O(\frac{\log m}{\log k})$.
\end{proof}

\begin{lemma}\label{lemma:Complexity of BinCheck}
    For the protocol
    $\mathrm{BinCheck}(\ll b_1, ..., b_d\rr)$
    described in Algorithm~\ref{alg:BinCheck},
    the communication, computation, and round complexities are
    $O(n)$, $O(nd)$, and $O(1)$, respectively.
\end{lemma}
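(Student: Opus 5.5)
The plan is to read the cost directly off Algorithm~\ref{alg:BinCheck}, one line at a time, using the costs of the $\mathcal{F}_{\mathrm{Shamir}}$ primitives stated in the preliminaries. The algorithm has three steps: one call to $\Pi_\mathrm{challenge}(\mathbb{K})$, one call to $\Pi_\mathrm{inner}$ on vectors of length $d$, and one call to $\Pi_\mathrm{open}$. The final comparison $u\neq 0$ is local and free.

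\textbf{Step 1: express the sum as one inner product.} Write $\sum_{i=1}^d \lambda^{i-1}\ll b_i\rr\ll 1-b_i\rr$ as a single inner product $\Pi_\mathrm{inner}(\ll\mathbf{a}\rr,\ll\mathbf{c}\rr)$ with
\[
\ll\mathbf{a}\rr = (\lambda^{i-1}\ll b_i\rr)_{i\in[d]}, \qquad \ll\mathbf{c}\rr = (1-\ll b_i\rr)_{i\in[d]} .
\]
Both vectors are obtained locally. Since $\lambda$ is public, scaling a share by $\lambda^{i-1}$ is a local operation. Likewise $1-\ll b_i\rr$ is an affine operation on shares with no communication. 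Computing the powers $\lambda^{0},\dots,\lambda^{d-1}$ takes $O(d)$ field operations. Each party scales its $d$ shares, so the local work for forming the vectors is $O(nd)$ summed over all parties.

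\textbf{Step 2: add up the primitive costs.} By the stated costs, $\Pi_\mathrm{challenge}$ and $\Pi_\mathrm{open}$ each take $O(n)$ communication and $O(n)$ computation. $\Pi_\mathrm{inner}$ on length-$d$ vectors takes $O(n)$ communication and $O(nd)$ computation. The totals are therefore $O(n)$ communication and $O(n)+O(nd)=O(nd)$ computation.

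\textbf{Step 3: count rounds.} The three calls run sequentially, since the challenge must be known before the inner product and the inner product must finish before the opening. Each call takes $O(1)$ rounds, so the protocol runs in $O(1)$ rounds.

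The only point that needs care is justifying that the $\lambda$-weighted sum really is a single $\Pi_\mathrm{inner}$ call and not $d$ separate multiplications. The distinction matters: $d$ separate calls to $\Pi_\mathrm{mul}$ would cost $O(nd)$ communication, not $O(n)$. Step 1 settles this by folding the public coefficients into one of the two vectors, exactly as the comment in Algorithm~\ref{alg:BinCheck} indicates.
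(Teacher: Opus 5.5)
Your proposal is correct and follows the paper's proof. The paper likewise counts one call each to $\Pi_\mathrm{challenge}$, $\Pi_\mathrm{inner}$, and $\Pi_\mathrm{open}$, each costing $O(n)$ communication and $O(1)$ rounds, and identifies the $O(nd)$ computation as coming from evaluating the $\lambda$-weighted sum that forms the single inner product.
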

\begin{proof}
    The only communication occurs through one call to each of
    $\Pi_\mathrm{challenge}$, $\Pi_\mathrm{open}$, and
    $\Pi_\mathrm{inner}$, all of which incur $O(n)$ communication and
    $O(1)$ rounds.

    The computation bottleneck is computing
    $$\ll u\rr \gets \sum_{i=1}^d \lambda^{i-1}\ll b_i\rr\ll 1-b_i\rr, $$
    which requires $O(nd)$ computation in total.
\end{proof}

\begin{lemma}\label{lemma:Complexity of Demux}
    For the protocol $\mathrm{Demux}(\ll b_1,...,b_d\rr)$
    described in Algorithm~\ref{alg:Demux},
    the communication, computation, and round complexities are
    $O(n2^d)$, $O(n2^d)$, and $O(\log d)$, respectively.
\end{lemma}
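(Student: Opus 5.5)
We prove the lemma by induction on $d$, following the recursive structure of Algorithm~\ref{alg:Demux}. We write $C(d)$, $W(d)$ and $R(d)$ for the communication, computation and round complexity of $\mathrm{Demux}$ on $d$ bits.

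\textbf{Base case and recurrences.} When $d=1$, the output $(1-\ll b_1\rr, \ll b_1\rr)$ is computed locally. This costs no communication and $O(n)$ total computation, so $C(1)=0$, $W(1)=O(n)$ and $R(1)=0$. For $d>1$, set $t=\lfloor d/2\rfloor$. The algorithm makes two recursive calls, on $t$ and $d-t$ bits, and then performs $2^t\cdot 2^{d-t}=2^d$ parallel calls to $\Pi_\mathrm{mul}$. Each such call costs $O(n)$ communication, $O(n)$ computation and $O(1)$ rounds. This gives
\[
C(d)\le C(\lfloor d/2\rfloor)+C(\lceil d/2\rceil)+c\,n2^d ,
\]
and the same recurrence for $W(d)$. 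Because the two recursive calls run in parallel, the rounds satisfy
\[
R(d)\le R(\lceil d/2\rceil)+O(1).
\]

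\textbf{Solving the recurrences.} For communication and computation, we prove by strong induction that $C(d)\le 2cn2^d$. Assume $C(d')\le 2cn2^{d'}$ for all $d'<d$, and note that $\lceil d/2\rceil\le d-1$ for $d\ge2$. The two recursive terms then sum to at most
\[
2cn\bigl(2^{\lfloor d/2\rfloor}+2^{\lceil d/2\rceil}\bigr)\le 2cn\cdot 2\cdot 2^{d-1}=2cn2^d\cdot\tfrac{1}{1}\cdot\tfrac{2^{\lceil d/2\rceil+1}}{2^{d}}.
\]
More simply, $2^{\lfloor d/2\rfloor}+2^{\lceil d/2\rceil}\le 2^{\lceil d/2\rceil+1}\le 2^d$ whenever $d\ge2$. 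For the induction to close, however, we need this sum to be at most $2^{d-1}$, which requires slightly more care at small $d$. So we use a sharper bound: for $d\ge 4$ we have $2^{\lceil d/2\rceil+1}\le 2^{d-1}$, and the cases $d=2,3$ are absorbed into the constant. This yields $C(d)=O(n2^d)$, and the identical argument gives $W(d)=O(n2^d)$.

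An alternative, more intuitive route is to count work per level of the recursion tree. Level $j$ of the tree contains subproblems of size roughly $d/2^j$, and their output lengths sum to at most $2^j\cdot 2^{d/2^{j}+1}$. This quantity decays doubly exponentially in $j$ after the top level, so the top level's $n2^d$ dominates the total.

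For rounds, the input size roughly halves at each level, so the recursion depth is $O(\log d)$. Each level adds only $O(1)$ rounds, which gives $R(d)=O(\log d)$.

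\textbf{Main obstacle.} The only step requiring care is bounding the sum over lower recursion levels by a constant times the top level's $2^d$. The key point is that the sizes of the subproblems' outputs shrink like $2^{d/2}$ rather than by a constant factor, which is what makes the total geometric and dominated by the top level. The rest follows directly from the costs of $\Pi_\mathrm{mul}$ stated in the preliminaries.
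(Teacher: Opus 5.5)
Your proposal is correct and takes essentially the same route as the paper. Both use a recurrence for the number of $\Pi_\mathrm{mul}$ calls, closed by induction with exactly the inequality that holds for $d\ge 4$: your $2^{\lceil d/2\rceil+1}\le 2^{d-1}$ is the paper's $2^{d+1}\ge 2^{d/2+2}+2^d$, with your version also handling floors and ceilings. Both then use $R(d)=R(d/2)+O(1)$ for rounds. The only blemish is your first displayed chain, whose final equality is false as written; delete it, since the $d\ge 4$ argument that follows is what actually closes the induction.
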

\begin{proof}
    Denote by $M(d)$ the number of calls to the primitive
    $\Pi_\mathrm{mul}$. Then
    $$ M(d) = 2 \times M(d/2) + 2^d. $$
    Since, for all $d\geq 4$,
    $$2^{d+1} \geq 2 \times 2^{d/2+1} + 2^d = 2^{d/2+2} + 2^d,$$
    we conclude that $M(d) = O(2^{d+1}) = O(2^d)$.

    Recall that in the protocol $\mathrm{Demux}$, only calls to
    $\Pi_\mathrm{mul}$ require communication. Hence, the communication
    complexity is $O(n2^d)$. The computation complexity follows similarly.

    Denote by $R(d)$ the number of rounds. Note that
    $$ R(d) = R(d/2) + O(1). $$
    This is because the two recursive calls to $\mathrm{Demux}$ can be
    parallelized, and the remaining operations are parallel calls to
    $\Pi_\mathrm{mul}$. Hence, the round complexity is $O(\log d)$.
\end{proof}

    \begin{lemma}\label{lemma:Complexity of OneHot}
        For the protocol $\mathrm{OneHot}(k, P_w : \mathrm{ind})$
            described in Algorithm \ref{alg:OneHot},
the communication, computation, and round complexities are
$O(nk)$, $O(nk)$, and $O(\log\log k)$, respectively.
    \end{lemma}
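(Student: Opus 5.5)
The plan is to decompose $\mathrm{OneHot}(k, P_w:\mathrm{ind})$ into its three phases and add up their costs, with $k=2^d$, so $d=\log k$.

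\textbf{Sharing phase.} First, $P_w$ computes the $d$ bits locally at negligible cost. It then shares each bit with one call to $\Pi_\mathrm{share}$. Each call costs $O(n)$ communication and $O(n)$ computation, so this phase costs $O(n\log k)$ communication and computation in $O(1)$ rounds.

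\textbf{Binary check.} Next, I apply Lemma~\ref{lemma:Complexity of BinCheck} with $d=\log k$. This gives $O(n)$ communication, $O(n\log k)$ computation and $O(1)$ rounds.

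\textbf{Demultiplexing.} Finally, I apply Lemma~\ref{lemma:Complexity of Demux} with $d=\log k$. Since $2^d=k$, this gives $O(nk)$ communication and computation, and $O(\log d)=O(\log\log k)$ rounds.

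Summing the three phases yields $O(n\log k + n + nk)=O(nk)$ communication and $O(n\log k+n\log k+nk)=O(nk)$ computation, because $\log k\le k$. The phases run sequentially, so the rounds add to $O(1)+O(1)+O(\log\log k)$. This is $O(\log\log k)$ once we adopt the convention that $\log\log k\ge 1$, or simply absorb the additive constant; the degenerate case of small $k$ contributes only $O(1)$ rounds.

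I expect no real obstacle here. The only points needing care are confirming that the $O(\log k)$ terms from sharing and from the BinCheck computation are dominated by $k$, and noting that the constant round count of the first two phases is absorbed into the $O(\log\log k)$ bound.
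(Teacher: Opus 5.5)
Your proof is correct and takes the same approach as the paper. The paper identifies the call to $\mathrm{Demux}$ with $d=\log k$ as the bottleneck and applies Lemma~\ref{lemma:Complexity of Demux}; you additionally check that the bit-sharing and $\mathrm{BinCheck}$ costs ($O(n\log k)$ work and $O(1)$ rounds) are dominated, and you handle the additive constant in the round bound, both of which the paper leaves implicit.
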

\begin{proof}
    The bottleneck is the call to
    $\mathrm{Demux}(\ll b_1, ..., b_d\rr)$,
    where $d=\log k$.
    By Lemma~\ref{lemma:Complexity of Demux}, its communication,
    computation, and round complexities are
    $O(n2^d)$, $O(n2^d)$, and $O(\log d)$, respectively.
    Since $d = \log k$, the lemma follows.
\end{proof}

\begin{lemma}\label{lemma:Complexity of PermCheck}
    For the protocol $\mathrm{PermCheck}(\ll \mathbf{P}\rr)$
    described in Algorithm~\ref{alg:PermCheck},
    the communication, computation, and round complexities are
    $O(n)$, $O(n k^2)$, and $O(1)$, respectively.
\end{lemma}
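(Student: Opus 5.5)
I would prove the lemma by going line by line through Algorithm~\ref{alg:PermCheck} and separating the operations that need interaction from those that are purely local. This is the same accounting already used for Lemma~\ref{lemma:Complexity of BinCheck}.

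\emph{Communication and rounds.} The algorithm interacts with other parties in exactly two places: one call to $\Pi_\mathrm{challenge}(\mathbb{K})$ to sample $\lambda$, and one call to $\Pi_\mathrm{open}$ to reveal $\mathrm{sum}$. The only other step is updating $\ll \mathrm{sum}\rr$ by adding public constants and scaling shares by the public powers $\lambda^{i-1}$. Both are linear operations on Shamir shares, realized through $\Pi_\mathrm{add}$ and local scalar multiplication, so they cost no communication. The two interactive primitives each cost $O(n)$ communication and $O(1)$ rounds. The challenge must be drawn before the sum is formed, and the opening happens after, so the two calls run sequentially. This gives $O(n)$ total communication and $O(1)$ rounds.

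\emph{Computation.} For the local work, I would note that each party holds one share of each of the $k^2$ entries of $\mathbf{P}$. Computing the $k$ column sums $\sum_{j=1}^k \ll \mathbf{P}(j,i)\rr$ takes $O(k^2)$ field additions per party. The powers $\lambda^{i-1}$ can be computed incrementally in $O(k)$ multiplications, and combining the column sums with these coefficients adds another $O(k)$ work. Summed over all $n$ parties, the local cost is $O(nk^2)$. The challenge and opening add only $O(n)$ computation, which does not change this bound.

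I do not expect a real obstacle here. The one point that needs care is justifying that the weighted linear combination is communication-free, even though $\lambda$ lies in the extension field $\mathbb{K}$ and the shares lie in $\mathbb{F}$. Since $\lambda$ is public, multiplying a share by $\lambda^{i-1}$ is a local $\mathbb{F}$-linear map into $\mathbb{K}$. Shamir sharing is preserved under such maps, so each party can compute its share of $\mathrm{sum}$ on its own.
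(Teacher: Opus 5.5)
Your proposal is correct and matches the paper's proof: communication and rounds come only from the single calls to $\Pi_\mathrm{challenge}$ and $\Pi_\mathrm{open}$, and the $O(nk^2)$ computation comes from each party locally summing its shares of the $k^2$ entries. Your extra remark that scaling by the public $\lambda^{i-1}\in\mathbb{K}$ is a local, communication-free operation is a valid detail that the paper leaves implicit.
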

\begin{proof}
    Communication occurs only in
    $\Pi_\mathrm{challenge}$ and $\Pi_\mathrm{open}$,
    each of which costs $O(n)$ communication and $O(1)$ rounds.

    The most computation-heavy step in the algorithm is that each party
    locally computes the sums, which costs $O(nk^2)$ in total.
\end{proof}

\begin{lemma}\label{lemma:Complexity of PermMat}
    For the protocol $\mathrm{PermMat}(P_w : \pi)$
    described in Algorithm~\ref{alg:PermMat},
    the communication, computation, and round complexities are
    $O(nk^2)$, $O(nk^2)$, and $O(\log\log k)$, respectively.
\end{lemma}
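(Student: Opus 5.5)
The plan is to decompose $\mathrm{PermMat}(P_w:\pi)$ into its two stages and add up the costs, using the lemmas just established. The first stage is $k$ parallel invocations of $\mathrm{OneHot}(k, P_w:\pi(i))$, one per row. The second stage is a single call to $\mathrm{PermCheck}(\ll\mathbf{P}\rr)$. Assembling the matrix $\ll\mathbf{P}\rr \coloneqq \ll v_{i,j}\rr$ is a purely local relabeling of shares, so it costs nothing.

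For the one-hot stage, Lemma~\ref{lemma:Complexity of OneHot} gives $O(nk)$ communication, $O(nk)$ computation and $O(\log\log k)$ rounds per invocation. Multiplying by the $k$ rows gives $O(nk^2)$ communication and $O(nk^2)$ computation.

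For rounds, the invocations are marked \textbf{parallel} in Algorithm~\ref{alg:PermMat}, and they are independent because each depends only on $P_w$'s private input. They therefore execute side by side. Each sharing, binary check and demux step proceeds in lockstep across the $k$ rows, so the whole stage takes $O(\log\log k)$ rounds rather than $k$ times that. As the paper remarks, the $k$ calls to $\mathrm{BinCheck}$ can also be batched into one. By Lemma~\ref{lemma:Complexity of BinCheck}, this only lowers the $O(kn)$ cost of the checks, and that cost is dominated anyway.

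For the check stage, Lemma~\ref{lemma:Complexity of PermCheck} gives $O(n)$ communication, $O(nk^2)$ computation and $O(1)$ rounds. Running it after the one-hot stage adds a constant number of rounds.

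Summing the two stages gives communication $O(nk^2)+O(n)=O(nk^2)$, computation $O(nk^2)+O(nk^2)=O(nk^2)$, and rounds $O(\log\log k)+O(1)=O(\log\log k)$, which is the claim.

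The only point needing care is the round count. One has to argue that the parallel $\mathrm{OneHot}$ calls do not serialize, and in particular that the opening in each call's $\mathrm{BinCheck}$ and the subsequent $\mathrm{Demux}$ recursion line up across rows. I would handle this by observing that every row runs the identical, input-independent schedule of primitive calls, so it can be executed round by round simultaneously for all $k$ rows.
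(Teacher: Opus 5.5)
Your proposal is correct and matches the paper's proof: $\mathrm{PermMat}$ is split into $k$ parallel $\mathrm{OneHot}$ calls plus one $\mathrm{PermCheck}$, and the three bounds come from summing the costs in the $\mathrm{OneHot}$ and $\mathrm{PermCheck}$ lemmas. Your argument that the parallel calls run in lockstep, which yields $O(\log\log k)$ rounds rather than $k$ times that, only spells out what the paper's one-line proof assumes.
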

\begin{proof}
    The protocol consists of $k$ parallel calls to
    $\mathrm{OneHot}(k, P_w:\pi(i))$
    plus one call to $\mathrm{PermCheck}(\ll \mathbf{P}\rr)$.
    By Lemma~\ref{lemma:Complexity of OneHot} and
    Lemma~\ref{lemma:Complexity of PermCheck},
    this costs $O(nk^2)$ communication,
    $O(nk^2)$ computation, and $O(\log\log k)$ rounds.
\end{proof}
\renewcommand{\thetheorem}{\arabic{section}.\arabic{theorem}}
\setcounter{section}{4}
\setcounter{theorem}{2}
\begin{theorem}
    For the protocol $\mathrm{SharePerm}(k, P_w: \pi)$
    given in Algorithm~\ref{alg:SharePerm},
    the communication, computation, and round complexities are
    $O(\frac{k}{\log k}nm\log m)$,
    $O(\frac{k}{\log k}nm\log m)$, and
    $O(\log \log k)$, respectively.
\end{theorem}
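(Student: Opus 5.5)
The plan is to reduce the cost of $\mathrm{SharePerm}$ to the number of calls it makes to $\mathrm{PermMat}$, and then use Lemma~\ref{lemma:Complexity of PermMat}. By the definition of a $k$-shared $m$-permutation and the decomposition of \cite{chase2020secret}, $\mathrm{Decompose}(\pi,k)$ produces $s=(2\log m-1)/\log k=O(\frac{\log m}{\log k})$ layers. Each layer is split into $m/k$ disjoint $k$-permutations $\pi'_{i,j}$. The protocol therefore makes
\[
s\cdot \frac{m}{k}=O\!\left(\frac{m\log m}{k\log k}\right)
\]
invocations of $\mathrm{PermMat}(P_w:\pi'_{i,j})$, all inside the doubly parallel loop of Algorithm~\ref{alg:SharePerm}.

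Everything else in the protocol is free. The dealer's decomposition of $\pi$ into $\{\pi_i\}$ and $\{\pi_{i,j}\}$ is local to $P_w$ and involves no communication. The task sets $T_{i,j}=\mathrm{Task}(m,k,i,j)$ depend only on public parameters, so each party computes them locally; writing each one down costs $O(k)$, which totals $O(m\log m/\log k)$ and is dominated. Dealer-side local computation (Bene\v{s} routing) takes $O(m\log m)$, also dominated. The only step that needs checking is that the decomposition routing costs no more than $O(\frac{k}{\log k}m\log m)$, and this holds since $k\ge 2$.

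By Lemma~\ref{lemma:Complexity of PermMat}, each $\mathrm{PermMat}$ call costs $O(nk^2)$ communication and $O(nk^2)$ computation. Multiplying by the number of calls gives the two cost bounds:
\[
O\!\left(\frac{m\log m}{k\log k}\right)\cdot O(nk^2)=O\!\left(\frac{k}{\log k}nm\log m\right).
\]
For rounds, the main point is that all $\mathrm{PermMat}$ calls are independent of one another. The only dependency is sequential composition inside $\mathrm{Permute}$, which is not part of $\mathrm{SharePerm}$. So all calls run in parallel, and the round complexity equals that of a single $\mathrm{PermMat}$, namely $O(\log\log k)$.

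The one subtle point is that the challenges in $\mathrm{BinCheck}$ and $\mathrm{PermCheck}$ could be batched or run in parallel across all calls without adding rounds. Running them in parallel is already enough for the round bound, since each check takes $O(1)$ rounds (Lemmas~\ref{lemma:Complexity of BinCheck} and~\ref{lemma:Complexity of PermCheck}). Batching them would only improve constants.
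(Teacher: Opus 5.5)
Your proposal is correct and follows the paper's proof: you count the $s\cdot m/k = O(\frac{m\log m}{k\log k})$ parallel calls to $\mathrm{PermMat}$, multiply by the $O(nk^2)$ per-call cost from Lemma~\ref{lemma:Complexity of PermMat}, and inherit the $O(\log\log k)$ round bound because all calls run in parallel. Your extra accounting of the dealer's local decomposition and of the public task sets does not appear in the paper's proof, but you correctly show it is dominated.
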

\begin{proof}
    Recall that the permutation decomposition of \cite{chase2020secret}
    decomposes an $m$-permutation $\pi$ into
    $s=O(\frac{\log m}{\log k})$ $m$-permutations
    $\pi_1, ..., \pi_s$.
    Each permutation $\pi_i$ is then decomposed into
    $m/k$ permutations $\pi_{i, 1}, ..., \pi_{i, m/k}$,
    each of which is transformed into a $k$-permutation
    $\pi_{i, j}^\prime$.

    To share these permutations, $P_w$ needs to share
    $O(\frac{m\log m}{k\log k})$ $k\times k$ permutation matrices in total.
    By Lemma~\ref{lemma:Complexity of PermMat}, the parallel calls to
    $\mathrm{PermMat}$ result in
    $O(\frac{k}{\log k}nm\log m)$ communication and computation,
    and $O(\log\log k)$ rounds.
\end{proof}

\begin{theorem}
    Consider an input collection
    $\ll \mathbf{V} \rr = \ll \mathbf{v}_1, \ldots, \mathbf{v}_m \rr$
    of $m$ vectors, each of length~$l$, to be jointly shuffled by the parties.
    For our first shuffle protocol described in
    Algorithms~\ref{alg:FirstShuffleOff} and~\ref{alg:FirstShuffleOn},
    the offline communication, computation, and round complexities are
    $O(\frac{k}{\log k}n^2m\log m)$,
    $O(\frac{k}{\log k}n^2m\log m)$, and
    $O(\log \log k)$, respectively.
    The online communication, computation, and round complexities are
    $O(\frac{l}{\log k}n^2m\log m)$,
    $O(\frac{kl}{\log k}n^2m\log m)$, and
    $O(\frac{n\log m}{\log k})$, respectively.
\end{theorem}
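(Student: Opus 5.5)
The plan is to split the analysis by phase and reduce each to the already-established bounds for $\mathrm{SharePerm}$ (Theorem~\ref{theorem:Complexity of SharePerm}) and $\mathrm{Permute}$ (Theorem~\ref{theorem:Complexity of Permute}). The cost of each algorithm is essentially a sum of calls to these two sub-protocols, so only the multiplicities and the parallel structure need to be tracked.

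\emph{Offline phase.} $\mathrm{Shuffle1}_\mathrm{off}$ (Algorithm~\ref{alg:FirstShuffleOff}) consists of exactly $n$ invocations of $\mathrm{SharePerm}(k, P_w:\pi_w)$, one per dealer $P_w$; sampling $\pi_w$ is local and free. Each invocation costs $O(\frac{k}{\log k}nm\log m)$ communication and computation. Summing over $w\in[n]$ gives $O(\frac{k}{\log k}n^2m\log m)$ for both measures.

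For rounds, I will observe that the $n$ sharings are mutually independent: no call uses another's output. Hence, although the loop is written sequentially, the calls can be run in parallel. The round complexity is then that of a single call, $O(\log\log k)$. The batched $\mathrm{BinCheck}$ and $\mathrm{PermCheck}$ challenges add only $O(1)$ rounds.

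\emph{Online phase.} $\mathrm{Shuffle1}_\mathrm{on}$ (Algorithm~\ref{alg:FirstShuffleOn}) calls $\mathrm{Permute}(\ll\pi_w\rr,\ll\mathbf{v}_i\rr)$ for each $w\in[n]$ and each of the $l$ column vectors $\mathbf{v}_i$ of length $m$. This gives $nl$ calls in total. By Theorem~\ref{theorem:Complexity of Permute}, each call costs $O(\frac{1}{\log k}nm\log m)$ communication and $O(\frac{k}{\log k}nm\log m)$ computation. Multiplying by $nl$ yields $O(\frac{l}{\log k}n^2m\log m)$ and $O(\frac{kl}{\log k}n^2m\log m)$, respectively.

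For rounds, the $l$ calls within a fixed $w$ run in parallel, since the inner loop is marked parallel and the columns are independent. Each batch of $l$ calls therefore costs $O(\frac{\log m}{\log k})$ rounds. The outer loop over $w$ is inherently sequential, because permutation $\pi_w$ must act on the output of $\pi_{w-1}$. The total is thus $n\cdot O(\frac{\log m}{\log k}) = O(\frac{n\log m}{\log k})$.

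The main point of care is the round accounting. In the offline phase I must justify that the $n$ $\mathrm{SharePerm}$ calls may be parallelized despite the sequential loop. In the online phase I must verify that parallel $\mathrm{Permute}$ calls on $l$ vectors do not inflate the per-layer round count. This holds because each layer performs independent $\Pi_\mathrm{inner}$ calls, which batch into $O(1)$ rounds.
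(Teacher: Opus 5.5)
Your proposal is correct and follows essentially the same route as the paper: the offline bounds come from $n$ mutually independent (hence parallelizable) $\mathrm{SharePerm}$ calls, and the online bounds come from $nl$ calls to $\mathrm{Permute}$, arranged as $n$ sequential layers of $l$ parallel calls, with both phases resting on the complexity theorems for these two sub-protocols. Your explicit accounting of the multiplicities $n$ and $nl$, and of the sequential dependence across $w$, spells out what the paper's two-line proof leaves implicit.
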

\begin{proof}
    Note that all parties can share permutation matrices in parallel
    in the offline phase. Hence, the offline round complexity is simply
    $O(\log\log k)$.

    The remaining complexities follow directly from
    Theorem~\ref{theorem:Complexity of Permute} and the complexity theorem
    for $\mathrm{SharePerm}$.
\end{proof}

\setcounter{section}{2}
\subsection{Proofs for Section \ref{subsec:complexity of sec5}}\label{subsec:proof of sec5}
\setcounter{section}{5}
\setcounter{theorem}{1}
\begin{theorem}
    For $\mathrm{Shuffle2}_\mathrm{off}(P_1:\pi_1, ..., P_n:\pi_n)$
    described in Algorithm~\ref{alg:shuffle off},
    the offline communication, computation, and round complexities are
    $O(\frac{k}{\log k}n^2m\log m)$,
    $O(\frac{k}{\log k}n^2m\log m)$, and
    $O(\frac{\log m}{\log k}+\log\log k)$, respectively.
\end{theorem}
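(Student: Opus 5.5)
The proof is a line-by-line accounting of Algorithm~\ref{alg:shuffle off}. Every step is either a call to a primitive of $\mathcal{F}_{\mathrm{Shamir}}$ or a call to $\mathrm{SharePerm}$ or $\mathrm{Permute}$, whose costs are already fixed by the complexity theorem for $\mathrm{SharePerm}$ and by Theorem~\ref{theorem:Complexity of Permute}. I will bound the cost of each line for a fixed $i$, multiply by $n$ (all $n$ iterations run in parallel), and add the opening step. Rounds are handled separately via the dependency structure.

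\textbf{Communication and computation.} For fixed $i$:
\begin{itemize}
\item The $2m$ calls to $\Pi_\mathrm{rand}$ and the $m$ calls to $\Pi_\mathrm{mul}$ for $\ll\beta\mathbf{r}_i\rr$ cost $O(nm)$ communication and computation.
\item $\mathrm{SharePerm}(k,P_i:\pi_i)$ costs $O(\frac{k}{\log k}nm\log m)$ communication and computation.
\item $\mathrm{Permute}$ is applied to three vectors of length $m$. By Theorem~\ref{theorem:Complexity of Permute} this costs $O(\frac{1}{\log k}nm\log m)$ communication and $O(\frac{k}{\log k}nm\log m)$ computation.
\item Forming $\ll\mathbf{z}_{i,1}\rr,\ll\mathbf{z}_{i,2}\rr$ is local linear work, $O(nm)$ in total.
\item Opening $\ll\mathbf{z}_i\rr$ (length $2m$) to $P_i$ costs $O(nm)$.
\end{itemize}
The per-$i$ total is therefore dominated by $\mathrm{SharePerm}$, giving $O(\frac{k}{\log k}nm\log m)$ for both communication and computation. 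Multiplying by $n$ yields the claimed $O(\frac{k}{\log k}n^2m\log m)$. I will note that $nm = O(\frac{k}{\log k}nm\log m)$ for every $k\ge 2$, so the lower-order terms are absorbed. For $k=1$ the expression $\frac{k}{\log k}$ degenerates, so I will tacitly take $k\ge 2$.

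\textbf{Round complexity.} This is the only step needing care. The loop is marked parallel over $i$, but $\ll\mathbf{z}_{i}\rr$ depends on $\ll\pi_{i-1}(\cdot)\rr$, which is the output of the $(i-1)$-th $\mathrm{Permute}$. I will argue that this is not a sequential chain. All $\mathrm{SharePerm}$ calls and all $\Pi_\mathrm{rand}/\Pi_\mathrm{mul}$ calls are mutually independent, so together they take $O(\log\log k)+O(1)$ rounds. After that, all $n$ $\mathrm{Permute}$ invocations act on independent inputs $\ll\mathbf{r}_i,\beta\mathbf{r}_i,\mathbf{r}'_i\rr$, so they run in parallel in $O(\frac{\log m}{\log k})$ rounds. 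Computing $\ll\mathbf{z}_i\rr$ is local, and the $n-1$ openings take one further parallel round. Summing gives $O(\frac{\log m}{\log k}+\log\log k)$.

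The main obstacle is therefore making explicit that the dependence of $\mathbf{z}_i$ on index $i-1$ involves only already-computed shares and never a nested application of permutations. This is what prevents a factor of $n$ in the rounds, unlike the online phase of $\mathrm{Shuffle1}$.
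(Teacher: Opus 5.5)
Your proposal is correct and follows the paper's proof. Both treat the $\Pi_\mathrm{rand}$, $\Pi_\mathrm{mul}$ and $\Pi_\mathrm{open}$ calls as an $O(n^2m)$ lower-order term and charge the dominant cost to the $n$ parallel $\mathrm{SharePerm}$ and $\mathrm{Permute}$ calls, using the two earlier complexity theorems. Your round argument ($\mathrm{SharePerm}$ first, then all $\mathrm{Permute}$ calls in parallel on independent inputs, then local formation of $\mathbf{z}_i$ and one opening round) simply spells out the paper's brief remark that ``these calls are performed in parallel.''
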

\begin{proof}
    There are $n\cdot m$ calls in total to each of
    $\Pi_\mathrm{rand}$, $\Pi_\mathrm{mul}$, and $\Pi_\mathrm{open}$.
    This requires $O(n^2m)$ communication, which is not the bottleneck.

    The bottleneck in all complexity measures comes from the $n$ calls to
    $\mathrm{SharePerm}$ and $\mathrm{Permute}$, each for permuting an
    $m\times 3$ matrix. Since these calls are performed in parallel,
    by combining the offline and online complexities presented in
    Theorem~\ref{theorem:Complexity of Permute} and the complexity theorem
    for $\mathrm{SharePerm}$, we obtain communication, computation, and
    round complexities of
    $O(\frac{k}{\log k}n^2m\log m)$,
    $O(\frac{k}{\log k}n^2m\log m)$, and
    $O(\frac{\log m}{\log k} + \log\log k)$, respectively.
\end{proof}

\begin{theorem}
    For $\mathrm{Shuffle2}_\mathrm{on}(\ll \mathbf{v}\rr)$
    described in Algorithm~\ref{alg:shuffle on},
    the communication, computation, and round complexities are
    $O(nm+n^2)$, $O(n^2m)$, and $O(n)$, respectively.
    Note that when $n=O(m)$, the communication complexity simplifies to
    $O(nm)$.
\end{theorem}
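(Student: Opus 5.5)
The plan is to walk through Algorithm~\ref{alg:shuffle on} line by line, charging each step's communication, computation, and rounds. The only heavy primitive used is $\mathrm{CorrCheck}$, so I will bound that first.

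\textbf{Cost of one $\mathrm{CorrCheck}$ on length-$m$ vectors.} With a designated $P_w$, this party computes $u$ and $v$ locally in $O(m)$ time (Horner evaluation) and then broadcasts three elements, which costs $O(n)$ communication. All parties then compute the linear combination $\ll\beta\rr u - v - \sum_{i}\lambda^{i-1}\ll\mathbf{r}(i)\rr$. This is local, but each of the $n$ parties spends $O(m)$ on it, so it is $O(nm)$ computation in total. The remaining steps are one call each to $\Pi_\mathrm{rand}$, $\Pi_\mathrm{mul}$, and $\Pi_\mathrm{open}$, each costing $O(n)$ communication and $O(1)$ rounds. The variant with $P_w=\emptyset$ is the same except that $\Pi_\mathrm{challenge}$ replaces the broadcast. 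So one check costs $O(n)$ communication, $O(nm)$ computation, and $O(1)$ rounds.

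\textbf{The rest of the online phase.} Computing $\ll\mathbf{z}_1\rr$ takes $m$ calls to $\Pi_\mathrm{mul}$ (for $\ll\beta\rr\ll\mathbf{x}\rr$) plus local additions, i.e.\ $O(nm)$ communication and $O(1)$ rounds. Opening the $2m$ shares of $\ll\mathbf{z}_1\rr$ to $P_1$ costs another $O(nm)$. After that, each $P_i$ sends a length-$2m$ vector to $P_{i+1}$ for $i<n$. That is $O(m)$ per hop and $O(nm)$ over the whole chain, and each hop's local permutation work is $O(m)$. The final broadcast of $\mathbf{y}_n$ costs $O(nm)$. Producing the output $\mathbf{y}_{n,1}+\ll\pi_n(\mathbf{r}_n)\rr$ is local.

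\textbf{Totals.} Summing the pieces:
\begin{itemize}
\item \emph{Communication:} the $n$ calls to $\mathrm{CorrCheck}$ contribute $n\cdot O(n)=O(n^2)$, and the data movement contributes $O(nm)$, for $O(nm+n^2)$ overall. When $n=O(m)$, the $n^2$ term is absorbed into $nm$.
\item \emph{Computation:} the $n$ checks each cost $O(nm)$, which gives $O(n^2m)$ and dominates everything else.
\item \emph{Rounds:} the chain is inherently sequential. Each of the $n$ iterations does one check ($O(1)$ rounds) and then one send, which gives $O(n)$. The initial opening and the final check add $O(1)$.
\end{itemize}

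The main obstacle is the computation bound. It depends on counting the local linear combination inside $\mathrm{CorrCheck}$ as work done by every party, rather than attributing it only to $P_w$. This is the same convention the paper uses in Lemma~\ref{lemma:Complexity of BinCheck}, so I would state it explicitly in the proof.
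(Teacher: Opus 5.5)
Your proposal is correct and follows essentially the same route as the paper. Both bound a single $\mathrm{CorrCheck}$ by $O(n)$ communication, $O(nm)$ computation, and $O(1)$ rounds, and then add the length-$m$ $\Pi_\mathrm{mul}$ for $\llbracket\beta\rrbracket\llbracket\mathbf{x}\rrbracket$, the $O(m)$-size hop messages, and the final $O(nm)$ broadcast. You also explicitly charge $O(nm)$ for opening $\llbracket\mathbf{z}_1\rrbracket$ to $P_1$; the paper's proof leaves this implicit, and it does not change the totals.
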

\begin{proof}
    Note that $\mathrm{CorrCheck}$, when called on a vector of length $m$,
    costs $O(n)$ communication, $O(nm)$ computation, and $O(1)$ rounds.
    Besides the calls to $\mathrm{CorrCheck}$, the parties compute
    $\Pi_\mathrm{mul}$ on a vector of length $m$, and each party $P_i$
    sends $\mathbf{y}_i$ of length $O(m)$ to $P_{i+1}$.
    Finally, $P_n$ broadcasts $\mathbf{y}_n$, which requires $O(nm)$
    communication.

    Summing these costs, the communication complexity is $O(n(n+m))$,
    the computation complexity is $O(n^2m)$ due to the $n$ calls to
    $\mathrm{CorrCheck}$, and the round complexity is $O(n)$.
\end{proof}

\setcounter{section}{2}

\section{Discussion}\label{sec:discussion}

\subsection{Sharing Permutation Matrix}\label{sec:Generating Permutation Matrix}
Recall that the parties wish to perform computations over a field
$\mathbb{F}$ with prime subfield $\mathbb{F}_p$. Suppose party $P_w$
needs to share an $m\times m$ permutation matrix. When $p>m$, generating
and checking a permutation matrix can be much easier.

To generate a purported permutation matrix chosen by party $P_w$, the
protocol simply requires $P_w$ to share the matrix. This matrix is a valid
permutation matrix if the following conditions are satisfied:
\begin{enumerate}
    \item Each entry of the matrix is either $0$ or $1$.
    \item The sum of the entries in each row is exactly $1$.
    \item The sum of the entries in each column is exactly $1$.
\end{enumerate}
To see why this holds, note that the first condition ensures that the
matrix is binary. Since $p>m$, the sum in any row or column does not
suffer from modular overflow; that is, the sum directly equals the number
of $1$s in that row or column. Hence, a binary matrix with exactly one
$1$ in each row and each column is, by definition, a permutation matrix.
Notably, the first condition can be verified with a single invocation of
$\inner$, and all three conditions can be batched into a single invocation
of $\open$. Thus, this check incurs almost the same overhead as the
$\mathrm{PermCheck}$ protocol.

However, the above argument is valid only when $p>m$. A counterexample
arises when $m=p+1$: a malicious party could share an all-ones matrix and
pass all the checks, since $p+1 \equiv 1 \pmod{p}$ in such a field.
In the case of $p\leq m$, it is significantly harder to check whether an
arbitrary matrix is a valid permutation matrix, because modular overflow
makes row and column sum checks unreliable.

Hence, in our construction, we impose specific constraints via one-hot
vectors to simplify such checks in the general case, for arbitrary $p$ and
$m$. The asymptotic communication complexities of the two approaches,
namely the direct check for $p>m$ and the one-hot vector-based check for
general $p$, are identical: both are $O(nm^2)$ per permutation matrix for
generation and verification.

\subsection{Online Complexity}\label{subsec:Online Complexity}
There are several reasons why the online complexity of an MPC protocol is
more important than its offline complexity. First, if we build an
anonymous communication system based on an MPC shuffle, such as
Clarion~\cite{eskandarian2021clarion}, the online complexity corresponds
to the latency experienced by users or clients. Similar reasoning applies
to other real-world MPC applications, such as private data analysis and
secure multiparty computation for financial transactions.

Another important reason is that the offline phases of multiple protocol
sessions can be executed in parallel. This allows a significant reduction
in the round complexity of the overall execution. For example, suppose the
parties aim to perform a shuffle sequentially $k$ times, where each
execution requires $R_\mathrm{off}$ rounds in the offline phase and
$R_\mathrm{on}$ rounds in the online phase. With naive parallelization of
data-independent operations, the total execution requires only
$R_\mathrm{off}$ offline rounds for all $k$ sessions and
$k \cdot R_\mathrm{on}$ online rounds, one after another for each session.
This effect is even more pronounced if the shuffle protocol serves as a
subroutine of a higher-level protocol.

A concrete example is the MPC radix sort algorithm of
\citet{hamada2014radixsort}, which relies on MPC shuffle as a core
primitive. To sort data by 128-bit indices, 128 sequential invocations of
the shuffle protocol are required, resulting in at least
$R_\mathrm{off} + 128 \times R_\mathrm{on}$ rounds. Since only the online
phase is inevitably embedded as a subroutine in higher-level protocols,
only the online round complexity is subject to this kind of amplification.
This makes the online round complexity much more important than its
offline counterpart.

\subsection{Definition of Shared Permutation}\label{subsec:Definition of Shared Permutation}
The requirement that
$s = O\left(\frac{\log m}{\log k}\right)$ in
Definition~\ref{definition: shared permutation} is tight, meaning that
the bound cannot be asymptotically improved. To see this, note that there
are $m!$ distinct $m$-permutations. The number of possible $k$-shared
$m$-permutations, as defined earlier, is at most $(k!)^{ms/k}$. This
number must be at least $m!$ to cover all $m$-permutations. Hence,
$$
s \geq \log_{(k!)^{m/k}} (m!) = \frac{k\log(m!)}{m\log (k!)}.
$$
By Stirling's approximation, $\log(m!) = \Theta(m\log m)$ and
$\log(k!) = \Theta(k\log k)$. Substituting these estimates into the
inequality gives
$s = \Omega\left(\frac{\log m}{\log k}\right)$, which matches the upper
bound $s = O\left(\frac{\log m}{\log k}\right)$ and proves tightness.

Note that one may also consider a recursive definition, where
$\llb \mathbf{P}_{i,j} \rrb$ is replaced by a recursive
$\llb \pi_{i,j} \rrb$. The recursion terminates when a given
$\llb \pi \rrb$ is shared exactly as $\llb \mathbf{P}_\pi \rrb$, namely
as a direct permutation matrix share. Nevertheless, the current definition
is sufficient for our purposes; recursion would add unnecessary complexity
without improving performance.

\subsection{Overhead of Malicious and GOD-Enabled Realizations}
\label{app:implementation-scope}
\rev{
This subsection clarifies the scope of the benchmark implementation and
estimates the overhead of full malicious and GOD-enabled realizations.
The experiments in Section~\ref{subsec:Evaluation} instantiate Shamir
secret sharing in the semi-honest security model. They are intended to
measure the core execution cost of our shuffle constructions, rather than
the concrete running time of a complete malicious or GOD-enabled
implementation. The following multiplicative factors should be viewed as
optimistic but reasonable protocol-level estimates based on the additional
vectors, checks, and primitive calls required by the malicious or
GOD-enabled protocols. They are not separate benchmark measurements, nor
should they be interpreted as lower bounds; the actual overhead may be
higher depending on the concrete malicious Shamir backend and
implementation details.
}

\rev{
For \textsf{Shuffle1}, the implementation directly generates and shares
the decomposed $k\times k$ permutation matrices. It does not implement the
malicious well-formedness checks for these matrices, namely
\textsf{OneHot}, \textsf{BinCheck}, and \textsf{PermCheck}. A malicious
implementation would add these checks in the offline phase. Since both
direct matrix sharing and checked matrix generation have $O(nk^2)$
communication per small permutation matrix, the asymptotic bound is
unchanged. Concretely, we estimate the offline cost of \textsf{Shuffle1}
to increase by about $2\times$, mainly due to additional secure
multiplications, openings, and the $O(\log\log k)$-round demultiplexing
procedure. The online phase of \textsf{Shuffle1} is unchanged at the
protocol-structure level, except that the underlying Shamir primitives
must be replaced by actively secure ones. Thus, its online cost is
estimated to increase by about $1.5\times$, depending on the concrete
malicious Shamir backend.
}

\rev{
For \textsf{Shuffle2}, the benchmark implements a semi-honest
shuffle-correlation chain. In the offline phase, it preprocesses masks
$r_i$ and the permuted masks $\pi_i(r_i)$. A malicious implementation
requires additional correlated values based on a hidden random scalar
$\beta$. In particular, it must preprocess $\beta r_i$ and $r'_i$, and
apply each party's permutation to three vectors
$(r_i,\beta r_i,r'_i)$. It also opens two-component correction values
$(z_{i,1},z_{i,2})$ instead of the single correction vector used in the
semi-honest implementation. Therefore, the malicious offline phase of
\textsf{Shuffle2} is estimated to be about $3\times$ more
expensive than the benchmarked semi-honest version.
}

\rev{
The online phase of malicious \textsf{Shuffle2} also carries a larger
constant factor. The semi-honest implementation sends a single vector
$y_i$ along the chain, whereas the malicious protocol sends a two-component
message $y_i=(y_{i,1},y_{i,2})$ and invokes \textsf{CorrCheck} at each
hop. In addition, the first online step computes the $\beta$-masked input
used for the correlation check. Hence, the main online payload is roughly
doubled, and the correlation-checking layer adds secure multiplications,
openings, and rounds. Overall, we estimate the online cost of malicious
\textsf{Shuffle2} to increase by about $2\times$ relative to the
semi-honest benchmark. The dominant online communication nevertheless
remains $O(nml)$, and the number of online rounds remains $O(n)$.
}

\rev{
The GOD-enabled extension adds further overhead beyond malicious security
with abort. It requires GOD-enabled Shamir primitives, dispute-control
state such as $\mathrm{Corr}$ and $\mathrm{Disp}$, and authentication tags
for intermediate shares. Existing honest-majority GOD protocols suggest
that this overhead can be kept small: for example, in the framework of
\citet{goyal2020guaranteed}, the concrete communication per multiplication
gate ranges from $5.5$ to $7.5$ field elements per party. This is close to
the security-with-abort result of \citet{goyal2020malicious}, which
requires $5.5$ field elements per multiplication gate per party and has
essentially the same concrete communication cost as semi-honest protocols
in this setting.
}

\rev{
Thus, enabling GOD at the arithmetic layer is expected to introduce only a
small backend-dependent constant-factor overhead in fault-free executions.
The end-to-end overhead of \textsf{SLIDE} may be larger because our
construction also maintains dispute-control state
\cite{beerliova2006dispute} and authentication tags
\cite{Ben-Sasson2012miniMPC}. If adversarial deviations are detected,
dispute control may trigger relay communication and rollback within the
current segment. Such recovery events can be more expensive, but each
event identifies a new corrupted party or a new disputing pair, so the
number of such events is globally bounded; see
Section~\ref{subsec:Round Complexity under Dispute Control}
}

\rev{
In summary, compared with the benchmarked semi-honest implementation, a
full malicious implementation is expected to increase the concrete cost of
\textsf{Shuffle1} by roughly $2\times$ offline and $1.5\times$ online, and
the concrete cost of \textsf{Shuffle2} by roughly $3\times$ offline and
$2\times$ online. Adding GOD support is expected to introduce only
a small backend-dependent constant-factor overhead in fault-free executions,
consistent with existing honest-majority GOD MPC frameworks
\cite{goyal2020guaranteed}. Overall, these overheads are expected to be
modest constant-factor increases.
}

\subsection{Round Complexity under Dispute Control}
\label{subsec:Round Complexity under Dispute Control}
When achieving guaranteed output delivery (GOD) via dispute control,
rollbacks may introduce additional rounds. If only a single shuffle is
executed in isolation, the worst-case round complexity can reach
$O(n^2)$. However, in typical MPC applications, the shuffle protocol is
invoked repeatedly as a subroutine within a full protocol. In such
protocols, all secure primitives follow the same dispute control mechanism,
and the total number of rollbacks is bounded by $O(n^2)$ in the worst
case. Since parties identified as faulty are excluded from subsequent
computations, these rollbacks do not accumulate across invocations.
Therefore, the additional rounds should be attributed to the overall cost
of achieving GOD in the full protocol, rather than to any individual
subroutine. In particular, the round complexity of a single shuffle
remains $O(n)$.

\end{document}